\newtheorem*{mdresult}{Result}
\newcommand{\R}{\mathbb{R}}
\newcommand{\lp}{\mathrm{LP}}
\newcommand{\epseh}{\epsilon_{\textsc{E}}}
\newcommand{\freasy}{\mathsf{FR}_{\textsc{E}}}
\newcommand{\we}{W_{\textsc{E}}}
\newcommand{\frhard}{\mathsf{FR}_{\textsc{L}}}
\newcommand{\fr}{\mathsf{FR}}
\newcommand{\dt}{\mathsf{DT}}
\newcommand{\opt}{\mathsf{Opt}}
\newcommand{\alg}{\mathsf{alg}}
\newcommand{\req}{\mathsf{req}}
\newcommand{\get}{\mathsf{get}}
\newcommand{\D}{\mathcal{D}}
\newcommand{\sipi}{\mathcal{PI}}
\newcommand{\wesipi}{\mathcal{PI}}
\newcommand{\bssipi}{\mathcal{PI}^{\mathsf{bl}}}
\newcommand{\bm}{\mathsf{Benchmark}}
\newcommand{\E}{\mathbb{E}}
\newcommand{\one}{\mathbf{1}}%
\newcommand{\pr}{\mathbf{Pr}} 
\newcommand{\lpon}{{\mathsf{LP}_{\textsc{on}}}}
\newcommand{\lpdual}{{\mathsf{LP}_{\textsc{dual}}}}
\newcommand{\opton}{{\mathsf{OPT}_{\textsc{on}}}}
\newcommand{\IGNORE}[1]{}
\newcommand{\poly}{\ensuremath{\mathsf{poly}}}
\title{Combinatorial Philosopher Inequalities}
 \author{
 Enze Sun\thanks{The University of Hong Kong (\texttt{sunenze@connect.hku.hk})}
 \and
 Zhihao Gavin Tang\thanks{ITCS, Key Laboratory of Interdisciplinary Research of Computation and Economics, Shanghai University of Finance and Economics (\texttt{tang.zhihao@mail.shufe.edu.cn})}
 \and
 Yifan Wang\thanks{School of Computer Science, Georgia Tech (\texttt{ywang3782@gatech.edu})}
 }
\date{}
\newtheorem{Theorem}{Theorem}[section]
\newtheorem*{Theorem*}{Theorem}
\newtheorem{Lemma}[Theorem]{Lemma}
\newtheorem{Observation}[Theorem]{Observation}
\newtheorem{Claim}[Theorem]{Claim}
\newtheorem*{Lemma*}{Lemma}
\newtheorem{Example}[Theorem]{Example}
\begin{document}

\maketitle 

\begin{abstract}
In online combinatorial allocation, agents arrive sequentially and items are allocated in an online manner. The algorithm designer only knows the distribution of each agent's valuation, while the actual realization of the valuation is revealed only upon her arrival. 
Against the offline benchmark, Feldman, Gravin, and Lucier (SODA 2015) designed an optimal $0.5$-competitive algorithm for XOS agents. An emerging line of work focuses on designing approximation algorithms against the (computationally unbounded) optimal online algorithm. The primary goal is to design algorithms with approximation ratios strictly greater than $0.5$, surpassing the impossibility result against the offline optimum. Positive results are established for unit-demand agents (Papadimitriou, Pollner, Saberi, Wajc, MOR 2024), and for $k$-demand agents (Braun, Kesselheim, Pollner, Saberi, EC 2024).

In this paper, we extend the existing positive results for agents with submodular valuations by establishing a $0.5 + \Omega(1)$ approximation against a newly constructed online configuration LP relaxation for the combinatorial allocation setting. Meanwhile, we provide negative results for agents with XOS valuations by providing a $0.5$ integrality gap for the online configuration LP, showing an obstacle of existing approaches.
\end{abstract}

\section{Introduction}

We study the online combinatorial allocation problem with  $m$ items and $T$ stochastic agents. Each agent $t \in [T]$\footnote{For every integer $n$, we use $[n]$ to denote the set of integers $\{1,2,...,n\}$.} is associated with an individual valuation function $v_t: 2^{[m]} \to \R_{\geq 0}$, where $v_t$ is independently drawn from a priori known distribution $\D_t$. 
The agents arrive in a sequence. Upon the arrival of the $t$-th agent, her individual valuation function $v_t$ is revealed to the algorithm, and the algorithm immediately allocates a subset $S_t$ of currently available items to the agent. 
Each item can be allocated at most once, i.e., $\{S_t\}$ are mutually disjoint. The goal is to maximize the total expected reward $\E[\sum_{t} v_t(S_t)]$.
The problem is also known as the online combinatorial auction. As we are not concerned with the incentive issues for truthfully reporting $v_t$ to the algorithm, we choose to use the term of allocation. 

To evaluate the performance of an algorithm, the traditional wisdom is to use the \emph{offline optimum} as the benchmark. These results are conventionally referred as \emph{prophet inequalities}, since the benchmark essentially corresponds to the optimal allocation when the algorithm sees every valuation function $v_t$ in advance. And this omniscient power is only mastered by \emph{prophet}.
Optimal prophet inequalities have been established in various settings.  
Krengel and Sucheston~\cite{krengel1978}, Samuel-Cahn~\cite{Samuel-Annals84} established an optimal $0.5$ prophet inequality in the single-choice setting, i.e., when $m=1$. Feldman, Gravin, and Lucier~\cite{DBLP:journals/corr/FeldmanGL14} attained the same $0.5$ ratio in the setting of XOS valuations. Very recently, Correa and Cristi~\cite{DBLP:conf/stoc/CorreaC23} established a constant prophet inequality for subadditive valuations. 

An emerging line of works uses the \emph{online optimum} as the benchmark for online combinatorial allocation. Noticeably, given the distributions $\{\D_t\}$ and the arrival order of the agents, the concept of the optimal online algorithm is well-defined. The natural research question is to design efficient poly-time approximation algorithms against the online optimum. These results are referred as \emph{philosopher inequalities}, where philosopher refers to the computationally unbounded optimal algorithm.
The primary goal is to establish an approximation ratio strictly greater than the competitive ratio from prophet inequalities.
Positive results are known for unit-demand agents (a.k.a., the matching setting)~\cite{PPSW-MOR24,BDL-EC22,NSW-SODA25,BDPSW-SODA25}, for $k$-demand agents (a.k.a., the capacitated setting)~\cite{BKPS-EC24}, and for stationary matching~\cite{AAPS-arxiv24}.

\subsection{Our Results}
In this work, we develop philosopher inequalities for online combinatorial allocation with submodular valuations, generalizing the existing positive results to a richer family of valuation functions. Meanwhile, we provide negative results for XOS valuations, showing an obstacle of existing approaches (including ours). 

Since Papadimitriou et al.~\cite{PPSW-MOR24} established a hardness result for the unit-demand setting, all approximation algorithms start with a tractable relaxation of the online optimum. To the best of our knowledge, existing works rely on the same \emph{online constraint} in addition to the standard ex-ante relaxation. Intuitively, the online constraint captures the fact that decisions made before time $t$ are independent of the realization of the $t$-th agent. Therefore, the probability that an item $i$ is allocated to the $t$-th agent with a certain type is bounded by the probability that this type is realized times the probability that item $i$ is allocated before $t$'s arrival.
We follow the same strategy and construct an \emph{online configuration LP}, which relaxes the reward achieved by the optimal online algorithm, resolving the inherit difficulty of combinatorial valuation functions. 

Our main results are the following: on the positive side, we show that online combinatorial allocation admits a $0.5 + \Omega(1)$ philosopher inequality, assuming agents have submodular valuations:

\begin{Theorem*}[Informal \Cref{thm:lp-main}]
For online combinatorial allocation problem with submodular agents, there exists a poly-time $0.5+\Omega(1)$ approximation algorithm against the online configuration LP, assuming demand oracles.
\end{Theorem*}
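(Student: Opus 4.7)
The plan is to establish the theorem in two stages: (a) formulate an \emph{online configuration LP} that is provably a relaxation of the optimal online algorithm and that is tractable given demand oracles, and (b) design a rounding algorithm whose expected value exceeds $(0.5 + \Omega(1))$ times the LP value.

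In stage (a), I would introduce variables $x_{t,S}^{\theta} \in [0,1]$ for each agent $t$, realized type $\theta$, and bundle $S \subseteq [m]$, interpreted as the joint probability that the optimal online algorithm both observes realization $\theta$ at time $t$ and allocates bundle $S$ to agent $t$. The LP imposes: (i) per-agent feasibility, $\sum_S x_{t,S}^\theta \le q_{t,\theta}$ where $q_{t,\theta}$ denotes the realization probability; (ii) disjointness of bundles across $t$ in expectation; and, crucially, (iii) the \emph{online constraint} stating that, for each item $i$, agent $t$, and type $\theta$, the conditional allocation probability $\sum_{S\ni i} x_{t,S}^\theta / q_{t,\theta}$ is at most the probability, computed from the LP itself, that $i$ remains unallocated before $t$'s arrival. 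The objective is $\sum_{t,\theta,S} v_t^{\theta}(S)\cdot x_{t,S}^\theta$. Feasibility of the induced distribution of the online optimum would yield the relaxation, and solvability would follow from invoking demand oracles inside the ellipsoid method, analogously to the standard Feldman--Gravin--Lucier (FGL) configuration LP.

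For stage (b), the algorithm processes agents sequentially; upon the arrival of agent $t$ with realization $\theta$, it queries a demand oracle against a price vector $p_i^{(t)}$ derived from the LP. The base choice of $p_i^{(t)}$ would be an FGL-style half-marginal, roughly half the expected future marginal value the LP assigns to item $i$, securing a baseline $0.5$ factor against the LP. To push past $0.5$, I would partition the realizations $(t,\theta)$ into a \emph{slack} class, where constraint (iii) on the relevant items is far from tight, and a \emph{tight} class, where it is nearly saturated. In the slack regime an extra $\Omega(1)$ gain arises because the algorithm can allocate with probability strictly above the half-price threshold without violating the availability bound. In the tight regime, the LP's bundle distribution must be sufficiently spread out, and I would invoke a correlation-gap-style argument for submodular valuations (the gap is $1 - 1/e > 0.5$) to extract a strictly better-than-half fraction of the LP value. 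Balancing the two regimes with a tunable threshold then produces a $0.5 + \Omega(1)$ guarantee.

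The main obstacle is the tight-regime analysis. Unlike unit-demand or $k$-demand settings, a submodular valuation couples the marginal value of item $i$ to the entire bundle currently held by agent $t$, and that bundle is itself random and history-dependent; the availability of $i$ is in turn correlated with the algorithm's prior decisions and hence with $p_i^{(t)}$ itself. I anticipate needing a careful coupling, possibly via a pipage- or Poisson-style rounding of the configuration variables, to decouple per-item marginals from the global realization while preserving constraint (iii). A two-step analysis may be necessary: first bound an idealized fractional allocation that respects the online constraint, then transfer the guarantee to the discrete allocation using submodular concentration. Once these pieces are in place, choosing the slack/tight threshold to equate the two sources of improvement delivers the claimed $0.5 + \Omega(1)$ bound.
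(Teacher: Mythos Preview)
Your stage (a) matches the paper: the online configuration LP with the per-item ``online constraint'' is exactly $\lpon$, and demand-oracle solvability goes through the dual ellipsoid argument you sketch.

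Stage (b), however, has a genuine gap. The slack/tight dichotomy on the online constraint is not the structure that governs whether one can beat $0.5$, and neither branch of your case split actually delivers the extra $\Omega(1)$. On the slack side, FGL half-prices are already greedy; they do not withhold allocation, so slack in constraint (iii) (a statement about the LP variables) does not translate into additional value for the algorithm, whose realized availability depends on its own past decisions rather than on the LP. On the tight side, the claim that tightness forces the LP bundle distribution to be ``spread out'' is unjustified: tightness of the online constraint only says that prior agents have nearly exhausted the item's LP mass, and says nothing about the shape of the current agent's configuration distribution. Even granting spread, the $1-1/e$ correlation gap compares independent sampling to the concave closure of a \emph{single} agent's valuation and is oblivious to online availability; on the canonical hard instance (deterministic value-$1$ agents on each item followed by a unit-demand agent with value $\delta^{-1}$ and LP marginal $\delta$ everywhere) the online constraint is tight for the last agent, yet no combination of your two regimes yields a concrete algorithm exceeding $0.5\cdot\lp(x^*)$.

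The paper's route is structurally different. It first runs a baseline that decomposes the problem into $m$ single-item prophet inequalities via the telescoping $w_{t,k,S}(i)=v_{t,k}(S\cap[i])-v_{t,k}(S\cap[i-1])$; this already secures $0.5\cdot\lp(x^*)$. If the baseline falls short of $(0.5+\epsilon)\cdot\lp(x^*)$, then almost every single-item instance is near-tight, which forces a \emph{free--deterministic decomposition}: for each item $i$ the tuples $(t,k,S)$ split into a ``free'' part (total probability $\le \epsilon^{1/4}$, contributing $\approx 0.5\,\lp_i(x^*)$) and a ``deterministic'' part (probability $\approx 1$, contributing $\approx 0.5\,\lp_i(x^*)$), with the deterministic part arriving first. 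The improvement then comes from a \emph{half-double sampling} rounding: include each item in $Q$ independently with probability $1/2$; for items in $Q$, sample two independent bundles from the LP and allocate the union of their free parts (this is essentially costless since free mass is $o(1)$); for items outside $Q$, run the baseline. The key inequality $\E[v_t(Q\cap(A^F_t\cup B^F_t))]\ge (3/4-o(1))\,\E[v_t(A^F_t)]$ uses, critically, that $v_t(\cdot\mid A)$ remains submodular, a property that fails for XOS and is precisely what separates the positive submodular result from the $0.5$ integrality gap for XOS. None of this structure (the single-item reduction, the free/deterministic split, the half-double sampling) appears in your plan, and the correlation-gap heuristic does not substitute for it.
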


On the negative side, we show that our result for submodular agents does not carry over for the class of XOS functions:

\begin{Theorem*}[Informal \Cref{thm:xos-hard}]
For online combinatorial allocation problem with XOS agents, the integrality gap of the online configuration LP is $0.5$.
\end{Theorem*}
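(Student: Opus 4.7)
The plan is to exhibit an explicit instance with XOS valuations on which the online configuration LP $\lpon$ admits a feasible solution of value $V$ while $\opton$ (and hence every online algorithm) achieves at most $(1/2 + o(1)) \cdot V$. Since $\lpon \geq \opton$ and the existing $1/2$-competitive algorithm against the \emph{offline} benchmark (Feldman--Gravin--Lucier) already bounds the gap by $1/2$ in the other direction, this pins the integrality gap at exactly $1/2$. The instance itself is a replication of many independent copies of a carefully designed XOS gadget. Each agent $t$ has a valuation of the form $v_t(S) = \max_{\ell} \sum_{i \in S} a^{(\ell)}_{t,i}$, i.e., a maximum of additive clauses, so $v_t$ is XOS but crucially not submodular when clauses overlap in support. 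The clauses are calibrated so that (i) a ``baseline'' clause contributes moderate deterministic value over a large bundle of items, and (ii) a ``high-value'' clause realizes only with probability $\eps \to 0$ and, when it does realize, its support conflicts with the baseline clauses of other agents in the same gadget.

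On the LP side, I would construct a fractional solution that places mass on each agent's optimal clause in expectation. Because the high-value clauses are individually rare, the item-availability factor appearing in the online constraint of $\lpon$ stays close to $1$, so this essentially ex-ante solution remains feasible for $\lpon$ with total value approaching $V$. On the algorithmic side, I would upper-bound $\opton$ by lifting the classical single-item $1/2$ prophet lower bound to the gadget level: upon each arrival, the online algorithm faces the familiar dichotomy of either committing to the baseline (blocking overlapping future high-value clauses) or skipping (forfeiting deterministic value without any guarantee of compensation), and the $\eps$-balancing of the two options ensures each yields at most $V/2$ in expectation. Independent replication across gadgets, together with standard concentration, sharpens this to $(1/2 + o(1)) \cdot V$.

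The main obstacle I anticipate is ruling out that the XOS combinatorial structure---absent in the classical single-item prophet setting---lets a clever online strategy strictly beat $1/2$ by partially committing to multiple clauses of the same agent at once. The LP can fractionally superpose clauses, while the integral algorithm cannot, but an algorithm might still hedge by allocating some but not all items of a baseline bundle in the hope of simultaneously satisfying part of a future conflict clause. Ruling this out requires a coupling-based upper bound on $\opton$ that jointly tracks item availability and clause realizations across gadgets, and is the technical heart of the argument. Finally, the non-submodularity of the constructed valuation---precisely the property violating the hypothesis of the positive result---is what enables the gap, aligning the negative result with the submodular/XOS boundary identified on the positive side.
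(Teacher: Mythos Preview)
Your plan correctly targets an instance with $\lpon \approx 2\cdot \opton$ and correctly names the high-level reason (the LP can fractionally superpose XOS clauses while an integral algorithm cannot). But the concrete route you sketch---replicated gadgets, each mirroring the classical single-item prophet dilemma of a deterministic ``baseline'' versus a rare ``high-value'' clause that conflicts with another agent's baseline---will not reach a $1/2$ gap against $\lpon$. Such a gadget is, up to relabeling, a unit-demand (hence submodular) instance, and a product of independent gadgets stays submodular. On exactly these free--deterministic instances the paper's positive result bites: the half-double-sampling algorithm of \Cref{sec:wesmall} strictly beats $1/2$ against $\lpon$. Concentration across gadgets only sharpens a per-gadget ratio that is already above $1/2$, so the obstacle you flag (``ruling out partial commitment'') is fatal to the plan, not a technicality to be handled by a coupling argument.

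The paper's construction is structurally different. There is no gadget replication; instead, $m=\delta^{-3}$ deterministic unit-demand agents (agent $t$ wants item $t$ with value $1$, probability $1-\delta$) are followed by a \emph{single} XOS agent with $v_T(S)=\delta^{-1}\max_{j\in[\delta^{-1}]}|U_j\cap S|$, where $\{U_j\}$ is a \emph{uniformly random} partition of $[m]$ into blocks of size $\delta^{-2}$. The LP assigns each block $U_j$ to agent $T$ with mass $\delta$; this is feasible for the online constraint because only $\delta$ mass per item survives the first $m$ agents, and it yields $\lpon\approx 2m$. For the upper bound, whatever set $R$ of size $r$ the algorithm reserves for agent $T$, the partition is still unrevealed; a Bernstein inequality for sampling without replacement gives $\max_j|U_j\cap R|=(1+o(1))\,r\delta$, so $\E[v_T(R)]\le r+o(m)$ and the total reward is at most $(m-r)+r+o(m)\approx \lpon/2$. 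The mechanism that separates XOS from submodular here is not a commit-versus-skip dilemma but the fact that a single XOS valuation with many \emph{symmetric, randomly located} additive clauses makes every saved set $R$ equally mediocre across all clauses---a phenomenon your baseline/high-value gadget cannot reproduce.
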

We find the separation between submodular valuations and XOS valuations unexpected. Indeed, without the online constraint, online combinatorial allocation admits the same $0.5$ competitive ratio, for both submodular and XOS agents~\cite{DBLP:journals/corr/FeldmanGL14}. Our finding calls for novel relaxation of the online optimum, for generalizing philosopher inequalities beyond submodular valuations.

\subsection{Our Techniques}

{

Deviating from the existing approaches in the philosopher inequalities which mainly relies on bounding the positive correlation introduced in the rounding procedure (see, e.g., \cite{PPSW-MOR24, BKPS-EC24}), our positive result relies on a novel rounding algorithm for the instances with ``free-deterministic'' structure. To be specific, our algorithm proceeds as the following after solving the online configuration LP.

\paragraph{Easy instance check.} We first perform an ``easy instance check'' for the optimal solution of the online configuration LP. This step determines whether the instance admits an approximation strictly better than $0.5$ using standard prophet inequality algorithms. 
Specifically, we apply a standard reduction from the submodular prophet inequality to a collection of single-item prophet inequality instances. This reduction relies on the fact that every submodular function belongs to the class of XOS functions. Consequently, the value of a submodular function $ v(S) $ can be expressed as $ v(S) = \sum_{j \in S} a_j $ for some additive values $ a_j $. By assigning each value $ a_j $ to the prophet inequality instance corresponding to item $ j $, we ensure that the algorithm achieves at least $0.5$ times the objective value of the online configuration LP.

\paragraph{Free-deterministic structure.}  After performing the easy instance check, we are done if the standard prophet inequality algorithm achieves an approximation ratio strictly better than $0.5$. Otherwise, we argue each single-item prophet inequality instance constructed in the above reduction must be hard, exhibiting a \emph{free-deterministic} structure. The \emph{free-deterministic decomposition}, introduced by~\cite{stoc/STW25}, captures the structure of such hard instances. This is motivated by the classic lower-bound construction for single-item prophet inequality: a \textbf{deterministic} value of 1 arrives first, followed by a large value of $\delta^{-1}$, but with a small probability $ \delta \to 0$ being realized. We refer to this value as \textbf{free} value, as it can be taken for free whenever it arrives, because in expectation it consumes nearly no probability mass. If a single-item prophet inequality instance fails to admit a $(0.5 + \epsilon)$-competitive algorithm, the free-deterministic decomposition states that the values in the instance can be partitioned into a \emph{free} part and a \emph{deterministic} part. Each part contributes roughly half to the optimal offline reward, but they differ significantly in probability mass. The free part, corresponding to the large value $\delta^{-1} $ that arrives with tiny probability $\delta$, has nearly zero probability mass. In contrast, the deterministic part, corresponding to the value $1$ that arrives with probability $1$, carries nearly $1$ of the probability mass. 

When the given instance does not pass the easy instance check, we apply the free-deterministic decomposition. Thus, it remains to design an algorithm that achieves strictly greater than $0.5$ approximation ratio, assuming the instance follows the free-deterministic structure.

\paragraph{Core algorithm: half-double sampling.} The main technical contribution of our paper is the introduction of a novel rounding algorithm for instances with a free-deterministic structure. The motivation for our algorithm comes from a key property of this structure: since the free part contributes approximately $0.5$ to the total reward but has nearly zero probability mass, simply assigning the free part to each agent yields a $0.5$-approximation, while the probability that any item remains unallocated until the end of the algorithm is nearly $1$. Notably, this property still holds even if we attempt to allocate the free part more than once.

This observation inspires the idea of \textbf{``double sampling''}: when each agent arrives, we independently sample twice from the distribution defined by the solution of the online configuration LP. This produces two independent subsets that the agent is interested in, and we allocate to the agent the union of the free parts of these two subsets.

The above algorithm guarantees that each item is allocated with probability at most twice the probability mass of the free part (which is still nearly zero), while still achieving at least a $0.5$-approximation (since the  gained reward is no worse than that of sampling once). At a first glance, one might suppose that the algorithm achieves a better-than-$0.5$ approximation, as the idea of double sampling appears to double our chance of obtaining a high value (from the free part). 

Unfortunately, this algorithm fails to achieve a better-than-$0.5$ approximation ratio even when agents are unit-demand. See \Cref{sec:motivate} for a detailed discussion. However, we find the idea surprisingly powerful when applied only to \textbf{half} of the items, which leads to our half-double sampling algorithm.

Specifically, we sample a subset $Q \subseteq [m]$ of items by including each item independently with probability $0.5$. For items in $Q$, we apply the double sampling strategy described above, allocating an item to an agent if it appears in the union of the free parts of two independently sampled subsets. For items in $[m] \setminus Q$, we treat this portion as a new online combinatorial allocation instance and apply the standard prophet inequality algorithm. Our analysis shows that the above algorithm achieves $0.5625$ times the online configuration LP.

We remark that our algorithm shares some similarities to the algorithm in \cite{PPSW-MOR24}, which also uses the idea of ``sampling twice''. However, the underlying principles of two algorithms are different. The algorithm in \cite{PPSW-MOR24} samples a second time only when the first sample fails to form a matching edge, while in our algorithm, there is no difference between two samples and we sample a second time simply because it is almost for free.

}

\paragraph{Roadmap.}
Section~\ref{sec:prelim} introduces the notations, the online configuration LP, and the necessary preliminaries regarding single-item prophet inequality and submodular functions. The formal arguments for the easy instance check algorithm and the free-deterministic decomposition are given in Section~\ref{sec:submod}. We also give a more detailed motivation of our half-double sampling algorithm in \Cref{sec:motivate}.
Section~\ref{sec:wesmall} is dedicated to formalize the half-double sampling algorithm. Finally, the negative result for XOS valuations is given in Section~\ref{sec:xos}.

\subsection{Further Related Works}

\paragraph{Combinatorial Prophet Inequalities.} The stochastic online combinatorial allocation problem 
is originated from the classic single-choice online stopping problem, a.k.a., the Prophet Inequality, whose history can be traced back to the works of \cite{Krengel-Journal77,Krengel-Journal78,Samuel-Annals84}. The Prophet Inequality model aims at designing competitive algorithms against the expected offline optimum. The model is widely used in mechanism design, with extensions to matroids \cite{DBLP:journals/geb/KleinbergW19}, matchings \cite{EFGT-EC20}, submodular/XOS functions \cite{DBLP:journals/corr/FeldmanGL14, DFKL-FOCS17}, subadditive functions \cite{DBLP:conf/focs/DuttingKL20, DBLP:conf/stoc/CorreaC23}, and even general monotone functions \cite{GSW-STOC25}. 
For a comprehensive overview, we refer to the survey \cite{Lucier-Survey17}.

\paragraph{Online Optimum Benchmark.}
Besides the previously mentioned works, Aouad and Sarita{\c{c}}~\cite{or/AouadS22}, and AmaniHamedani et al.~\cite{AAPS-arxiv24} studied the stationary bipartite matching against the online optimum.
The concept of order competitive ratio is proposed by Ezra et al.~\cite{EFGT-SODA23}, to measure the significance of knowing the arrival order in online stochastic settings. For the single-choice setting, they designed an optimal deterministic order-uaware algorithm against the online benchmark. Chen, Sun, and Tang~\cite{CST-EC24} further studied randomized algorithms. Ezra and Garbuz~\cite{wine/EzraG23} applied the framework to combinatorial settings. Motivated by fairness concerns, Ezra, Feldman, and Tang~\cite{ec/EzraFT24} studied a more restricted set of anonymous algorithms for competing against the online optimum. 

While the works discussed above aim to achieve a constant-factor approximation against the online optimum, a different line of research seeks to achieve a $1-\epsilon$ approximation via a polynomial-time approximation scheme (PTAS). Positive results along this line include the free-order Prophet Inequality problem \cite{SS-EC21, LLPSS-EC21}; the Prophet Secretary problem \cite{DGRTT-EC23}; and the Pandora’s Box problem with nonobligatory inspection \cite{FLL-STOC23, BC-STOC23}.






\section{Preliminaries}
\label{sec:prelim}

\subsection{Valuation Functions}
We first introduce the notation and specify our computational model. We assume that each distribution $\D_t$ has support size $K$, and the valuation function $v_t$ equals $v_{t,k}: 2^{[m]} \to \R_{\geq 0}$ with probability $p_{t,k}$, for every $k \in [K]$. Naturally, $\sum_{k \in [K]} p_{t,k} = 1$. We assume that the valuation functions are nonnegative and monotone, i.e., for every $A \subseteq B \subseteq [m]$ and $t \in [T], k \in [K]$, we have $0 \leq v_{t,k}(A) \leq v_{t,k}(B)$.
We study the following three types of valuations:
\begin{itemize}
    \item Unit-demand: A function $v$ is unit-demand when $v(S) = \max_{i \in S} v(\{i\})$ for every $S \subseteq [m]$. When all functions $v_{t, k}$ in an online combinatorial allocation problem are unit-demand, the problem is known as the online stochastic matching.
    \item Submodular: A function $v$ is  submodular when $v(A \cup \{i\}) - v(A) \geq v(B \cup \{i\}) - v(B)$ for every $A \subseteq B \subseteq [m]$ and $i \in [m]$, or equivalently when $v(U) + v(V) \geq v(U \cap V) + v(U \cup V)$ for every $U, V \subseteq [m]$.
    \item XOS (fractionally subadditive): A function $v$ is XOS if there exists a set $\mathcal{A}$ of $m$ dimension vectors $a \in \R^m_{\geq 0}$, such that $v(S) = \max_{a \in \mathcal{A}} \sum_{i \in S} a_i$.
\end{itemize}

Note that the above classes of functions are listed in a hierarchical way, as XOS functions strictly contain submodular functions, which in turn strictly contain unit-demand functions.

\subsection{Online Configuration LP}
We fix the instance and consider the following online configuration LP. We refer to the third family of constraints as the online constraints. 
\begin{align}
    \text{maximize } \quad \quad  &    \quad\sum_{t \in [T]}  \sum_{k \in [K]} \sum_{S \subseteq [m]}~ v_{t,k}(S) \cdot x_{t,k, S}, \tag{$\lpon$} \label{program:ora-ub} \\
    \text{s.t.}\qquad \forall i \in [m], &  \quad   \sum_{t \in [T]} \sum_{k \in [K]} \sum_{S \ni i}~  x_{t,k,S}  ~\leq~ 1  \label{eq:exante-constraint}   \\
    \forall t \in [T], k \in [K], &   \quad \sum_{S \subseteq [m]}~ x_{t,k, S} ~\leq~ p_{t, k} \label{eq:probability-constraint} \\
    \forall t \in [T],  k \in [K], i \in [m],&  \quad  \sum_{S \ni i} x_{t, k, S} ~\leq~ p_{t, k} \cdot \left(1 - \sum_{t' < t} \sum_{k' \in [K]} \sum_{S \ni i} x_{t',k', S} \right) \label{eq:online-constraint} \\
    \forall t \in [T],  k \in [K], S \subseteq [m],&  \quad  0 ~\leq~ x_{t, k, S} ~\leq~ 1. \notag
\end{align}

We claim this LP provides an upper bound of the value of optimal online algorithm.
\begin{Observation}
\label{obs:lp-relaxation}
Given an instance of online combinatorial allocation problem, the optimal objective of $\lpon$ upper-bounds the reward achieved by the optimal online algorithm.
\end{Observation}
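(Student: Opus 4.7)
The plan is to exhibit an explicit feasible solution to \ref{program:ora-ub} whose objective equals the expected reward of the optimal online algorithm. Fix an optimal online algorithm $\text{OPT}$ (which may be randomized, but can be assumed deterministic by standard arguments) and define, for every $t, k, S$,
\[
x^*_{t,k,S} \;\eqdef\; \Pr\bigl[\text{type of agent } t \text{ is } k \text{ and } \text{OPT} \text{ allocates exactly } S \text{ to agent } t\bigr].
\]
Since $\text{OPT}$'s reward at time $t$ conditioned on type $k$ is $v_{t,k}$ evaluated on the allocated set, the expected reward of $\text{OPT}$ equals $\sum_{t,k,S} v_{t,k}(S)\cdot x^*_{t,k,S}$, which is precisely the LP objective. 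It therefore suffices to verify the three families of constraints.

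First I would check the marginal and feasibility constraints. Summing $x^*_{t,k,S}$ over all $S$ gives $\Pr[\text{type of agent } t \text{ is } k] = p_{t,k}$, so \eqref{eq:probability-constraint} holds with equality. For \eqref{eq:exante-constraint}, the sum $\sum_{t,k}\sum_{S\ni i} x^*_{t,k,S}$ is the probability that item $i$ is allocated by $\text{OPT}$ over the entire horizon; since each item is allocated at most once, this is at most $1$.

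The main step is the online constraint \eqref{eq:online-constraint}. Fix $t,k,i$. The event ``$\text{OPT}$ allocates $i$ to agent $t$ and $t$'s type is $k$'' is contained in the intersection of the events $A = \{\text{type of }t \text{ equals } k\}$ and $B_t = \{i\text{ is not allocated before time }t\}$. Crucially, $B_t$ depends only on the types of agents $1,\dots,t-1$ and the algorithm's internal randomness, both of which are independent of the type of agent $t$. Hence
\[
\sum_{S \ni i} x^*_{t,k,S} \;\leq\; \Pr[A \cap B_t] \;=\; \Pr[A]\cdot \Pr[B_t] \;=\; p_{t,k}\cdot\Bigl(1 - \sum_{t'<t}\sum_{k'}\sum_{S\ni i} x^*_{t',k',S}\Bigr),
\]
where the last equality rewrites $\Pr[B_t]$ as one minus the probability that $i$ was allocated to some agent $t' < t$, and that probability is, by definition of $x^*$, exactly the double sum appearing on the right. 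The remaining constraint $0 \leq x^*_{t,k,S} \leq 1$ is immediate since $x^*$ is a probability.

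The main (minor) obstacle is formalizing the independence of $B_t$ from agent $t$'s type when $\text{OPT}$ is randomized; this is handled by conditioning on the algorithm's random tape and noting that, for any realization of the tape, the decisions before time $t$ are a deterministic function of $v_1,\dots,v_{t-1}$, which are independent of $v_t$. With this, $x^*$ is feasible, its objective matches the expected reward of $\text{OPT}$, and so the LP optimum is at least as large.
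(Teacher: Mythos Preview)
Your proposal is correct and follows essentially the same approach as the paper: define the candidate LP solution as the joint probability that agent $t$ has type $k$ and is allocated set $S$ by the optimal online algorithm, then verify each family of constraints, with the online constraint \eqref{eq:online-constraint} following from the independence of agent $t$'s type from the allocation history up to time $t$. Your write-up is slightly more explicit about handling randomized algorithms and about the objective value matching, but the argument is the same.
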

\begin{proof}
    Let $\tilde x_{t, k, S}$ be the probability that agent $t$ has a valuation function $v_{t, k}$, and the optimal online algorithm allocates subset $S$ to agent $t$. It is sufficient to show that $\{\tilde x_{t, k, S}\}$ is a feasible solution of $\lpon$. \eqref{eq:exante-constraint} is satisfied, because any online algorithm can allocate item $i$ at most once. \eqref{eq:probability-constraint} is also satisfied, by the definition of $\tilde x_{t,k,S}$. For constraint $\eqref{eq:online-constraint}$, note that $\sum_{S \in i} \tilde x_{t, k, S}$ represents the probability that agent $t$ has a valuation function $v_{t, k}$ and the optimal online algorithm allocates item $i$ to agent $t$. Since the realization of $v_t$ is independent of the event that item $i$ is already allocated before agent $t$ arrives, the LHS of \eqref{eq:online-constraint} should not be more than the probability that agent $t$ has a valuation $v_{t,k}$ times the probability that item $i$ is already allocated before agent $t$ arrives, which is exactly the RHS of \eqref{eq:online-constraint}.
\end{proof}

Since we aim to provide a polynomial-time algorithm for the online combinatorial allocation problem, we need to solve $\lpon$ efficiently. 
To achieve so, we assume access to the demand oracles of the valuation functions $v_{t,k}(\cdot)$.

\noindent{\bf Demand Oracle:} The demand oracle for function $v_{t,k}(S)$ takes a vector $(a_1, \cdots, a_m) \in \R^m_{\geq 0}$ as input and returns $\arg \max_S v_{t,k}(S) - \sum_{i \in S} a_i$.

That is, the demand oracle provides the utility-maximizing set of items under price $(a_1,\cdots,a_m)$. The following \Cref{lma:polytime-solvable} states that the online configuration LP can be solved efficiently, whose proof is deferred to \Cref{sec:prelim-appendix}.
\begin{restatable}{Lemma}{polysolve}
\label{lma:polytime-solvable}
    $\lpon$ is solvable in $\poly(T, m, K)$ time, assuming access to the \emph{demand oracles} of every $v_{t, k}(S)$. Furthermore, the total number of non-zero entries in the optimal solution is $\poly(T, m, K)$.
\end{restatable}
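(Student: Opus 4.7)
The plan is to apply the ellipsoid method to the dual of $\lpon$, using the demand oracle as a separation oracle, and then recover a sparse primal optimum from the polynomially many columns that ellipsoid inspects.

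As a first step I would write down the dual. Introduce nonnegative dual variables $\alpha_i$ for \eqref{eq:exante-constraint}, $\beta_{t,k}$ for \eqref{eq:probability-constraint}, and $\gamma_{t,k,i}$ for \eqref{eq:online-constraint}. After moving all $x$-terms in \eqref{eq:online-constraint} to the left-hand side, the coefficient of the primal variable $x_{\tau,\kappa,S}$ in the $(t,k,i)$-th online constraint equals $1$ if $(\tau,\kappa)=(t,k)$ and $i \in S$, and equals $p_{t,k}$ if $\tau < t$, $\kappa$ arbitrary, and $i \in S$. Collecting all contributions, the dual constraint for $x_{\tau,\kappa,S}$ becomes
\[
\beta_{\tau,\kappa} \;+\; \sum_{i \in S} \Bigl( \alpha_i \;+\; \gamma_{\tau,\kappa,i} \;+\; \sum_{t>\tau}\sum_{k\in[K]} p_{t,k}\,\gamma_{t,k,i} \Bigr) \;\geq\; v_{\tau,\kappa}(S).
\]

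Defining $a_i(\tau,\kappa) \;:=\; \alpha_i + \gamma_{\tau,\kappa,i} + \sum_{t>\tau,\,k} p_{t,k}\,\gamma_{t,k,i}$, which is computable in polynomial time from the current dual iterate, separating the family $\{S \subseteq [m]\}$ of dual constraints for a fixed $(\tau,\kappa)$ amounts to checking $\max_{S}\bigl(v_{\tau,\kappa}(S) - \sum_{i\in S} a_i(\tau,\kappa)\bigr) \leq \beta_{\tau,\kappa}$, which is exactly one query to the demand oracle of $v_{\tau,\kappa}$. Iterating over the $TK$ pairs gives a polynomial-time separation oracle, so the ellipsoid method computes an optimal dual solution in $\poly(T,m,K)$ time.

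To obtain the sparse primal solution I would collect the polynomially many triples $(\tau,\kappa,S)$ indexing the dual constraints that ellipsoid actually inspects. Restricting $\lpon$ to these variables yields a polynomial-size LP whose dual is satisfied by the ellipsoid-generated dual solution at the same objective value; by LP duality the restricted primal therefore has the same optimum as $\lpon$, and solving it directly produces an optimal primal with $\poly(T,m,K)$ non-zero entries.

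The only delicate step, and the one I expect to need the most care, is the dual bookkeeping: $x_{\tau,\kappa,S}$ contributes with coefficient $1$ to its own online constraint but also with coefficient $p_{t,k}$ to every online constraint for later arrivals $t>\tau$, and forgetting the second family of contributions would produce an ill-defined $a_i(\tau,\kappa)$ and break the reduction to the demand oracle. Once this is correctly accounted for, the separation argument and the column-generation argument are standard.
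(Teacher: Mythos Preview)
Your proposal is correct and follows essentially the same approach as the paper: dualize, use the demand oracle as a separation oracle for the dual constraints (after collapsing the per-item dual contributions into prices $a_i(\tau,\kappa)$), run the ellipsoid method, and then restrict the primal to the polynomially many columns ellipsoid touched to obtain a sparse optimal solution. Your dual bookkeeping is in fact cleaner than the paper's presentation, which contains some index typos in the $\gamma$-sum and in the constraint direction.
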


Let $\{x^*_{t, k, S}\}_{t \in [T], k \in [K], S \subseteq [m]}$ be the optimal solution. The above lemma states that $x^*$ has only $\poly(T, m, K)$ number of non-zero entries. 
Let 
\[
\lp(x^*) ~:=~ \sum_{t \in [T]}  \sum_{k \in [K]} \sum_{S \subseteq [m]}~ v_{t,k}(S) \cdot x^*_{t,k, S}
\]
be the objective value of solution $x^*$. \Cref{obs:lp-relaxation} implies that $\lp(x^*) \geq \opton$.

Without loss of generality, we assume the constraints \eqref{eq:exante-constraint} and \eqref{eq:probability-constraint} in $\lpon$ are both tight, i.e., we have $ \sum_{t \in [T]} \sum_{k \in [K]} \sum_{S \ni i}~  x^*_{t,k,S} = 1$ for every $i \in [m]$, and $\sum_{S \subseteq [m]} x^*_{t, k, S} = p_{t, k}$ for every $t \in [T], k \in [K]$. Assuming a tight \eqref{eq:exante-constraint} is without loss of generality, as we can add $m$ zero-value agents to the end of the instance, and use those newly added agents to cover the gap between $\sum_{t, k, S \ni i} x^*_{t, k, S}$ and $1$. For \eqref{eq:probability-constraint}, it is also without loss of generality, as we can arbitrarily increase the value of $x^*_{t, k, \varnothing}$ without changing the objective and the feasibility of constraint \eqref{eq:exante-constraint} and \eqref{eq:online-constraint}, but simultaneously making \eqref{eq:probability-constraint} tight. Furthermore, since we assume $\sum_{S \subseteq [m]} x^*_{t, k, S} = p_{t, k}$, for every $t \in [T], k \in [K]$, we may view $\{x^*_{t, k, S}/p_{t, k}\}$ as a distribution over subsets of $[m]$, i.e., the distribution returns $S \subseteq [m]$ with probability $x^*_{t, k, S}/p_{t, k}$.

\subsection{Single-Item Prophet Inequality}
The single-item online selection model shall appear frequently as a subroutine in our algorithm. This is a special case of the online combinatorial allocation problem with $m = 1$. And we refer to it as the single-item prophet inequality.

A single-item prophet inequality instance $\sipi$ contains $T$ distributions $\{\D^{(w)}_t(\sipi)\}_{t \in [T]}$. At time step $t \in [T]$, a non-negative value $w \sim \D^{(w)}_t(\sipi)$ is observed by the algorithm. The algorithm immediately decides whether to take the value and ends the game, or reject and continue to the next value. The goal is to maximize the value it takes. Without loss of generality, we assume the algorithm only accepts positive values. 
In this work, we only study instances $\sipi$ that satisfy the following ex-ante constraint:
\[
\sum_{t \in [T]} \mathop{\pr}\limits_{w \sim \D^{(w)}_t(\sipi)}\left[w > 0\right] ~\leq~ 1.
\]
Under this condition, the ex-ante benchmark of the instance, i.e., the LP value of the configuration LP without online constraint,  can be expressed as the sum of the expectations of each distribution. Formally, we define 
\[
\bm(\sipi) ~:=~ \sum_{t \in [T]} \mathop{\E}\limits_{w \sim \D^{(w)}_t(\sipi)}[w].
\]
Finally, we use $\opt(\sipi)$ to denote the expected value achieved by the optimal online algorithm for instance $\sipi$. Then, we have the following:
\begin{Lemma}
    \label{obs:sipi-half}
    For every single-item prophet inequality instance $\sipi$, it holds that
    \[
    \opt(\sipi) ~\geq~ 0.5 \cdot \bm(\sipi).
    \]
    Furthermore, the running time of the optimal online algorithm for $\sipi$ is polynomial in $T$ and the maximum support size of distributions $\{\D^{(w)}_t\}$.
\end{Lemma}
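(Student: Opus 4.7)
The plan is to prove the two parts of the lemma separately.

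For the approximation bound $\opt(\sipi) \geq \tfrac{1}{2}\bm(\sipi)$, my strategy is to exhibit one concrete online algorithm whose expected reward is already $\bm(\sipi)/2$; since $\opt(\sipi)$ is by definition the best online value, the inequality follows immediately. Write $q_t = \Pr[w \sim \D^{(w)}_t(\sipi)]{w > 0}$ and $\mu_t = \E_{w \sim \D^{(w)}_t(\sipi)}[w]$, so that $\bm(\sipi) = \sum_t \mu_t$ and the ex-ante hypothesis reads $\sum_t q_t \leq 1$. Consider the algorithm that at each step $t$, provided it has not yet stopped and $w_t > 0$, accepts $w_t$ with independent coin-flip probability $\gamma_t := (2 - \sum_{s<t} q_s)^{-1}$; the ex-ante bound ensures $\gamma_t \in [\tfrac{1}{2}, 1]$. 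A short induction on $t$ shows that the probability of accepting at step $t$ is exactly $q_t/2$: assuming this for all $s < t$, the survival probability at the start of step $t$ is $1 - \sum_{s<t} q_s/2 = \tfrac{1}{2}(2 - \sum_{s<t} q_s)$, which multiplied by $q_t$ and $\gamma_t$ gives $q_t/2$. Since the coin flip is independent of $w_t$, the expected value collected at step $t$ equals $\mu_t \cdot (1 - \sum_{s<t} q_s/2) \cdot \gamma_t = \mu_t/2$, and summing over $t$ yields $\bm(\sipi)/2$.

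For the polynomial running-time statement, I would invoke the standard finite-horizon dynamic program: set $V_{T+1} = 0$ and backward-iterate $V_t = \E_{w \sim \D^{(w)}_t(\sipi)}[\max(w, V_{t+1})]$. Each $\D^{(w)}_t(\sipi)$ has support size at most $K$, so each expectation is a sum of at most $K$ terms, and the entire table is computed in $O(TK)$ time. The induced optimal policy ``accept $w_t$ iff $w_t \geq V_{t+1}$'' has expected reward exactly $V_1 = \opt(\sipi)$.

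I do not expect any substantial obstacle: the approximation step is a direct induction and the dynamic program is textbook. The only conceptual point worth underlining is that $\bm(\sipi) = \sum_t \E[w_t]$ is in general strictly larger than the offline expected maximum $\E[\max_t w_t]$, so the classical Samuel-Cahn median-threshold argument for the $1/2$-prophet inequality would not deliver $\bm(\sipi)/2$ on its own; it is precisely the ex-ante hypothesis $\sum_t q_t \leq 1$ that enables the uniform halving of acceptance probabilities to $q_t/2$ used above.
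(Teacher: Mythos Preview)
Your proposal is correct and essentially matches the paper's approach: the paper omits a detailed proof, citing the standard OCRS argument for the $\tfrac{1}{2}$ bound against the ex-ante benchmark, and your explicit randomized rule with acceptance probability $\gamma_t = (2-\sum_{s<t}q_s)^{-1}$ is precisely the rank-one OCRS it alludes to. For the running-time part, your backward dynamic program $V_t = \E[\max(w,V_{t+1})]$ is exactly the threshold recursion the paper sketches, so there is no divergence.
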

 The fact that the optimal online algorithm runs in poly-time follows from the observation that the optimal online algorithm is given by a threshold-based (reversed) dynamic program. To be more specific, the algorithm that achieves $\opt(\sipi)$ is parameterized by a group of thresholds $\tau^*_1, \cdots, \tau^*_T$, such that the algorithm only takes the $t$-th value when the value is at least $\tau^*_t$. After initiating $\tau^*_T = 0$, the remaining optimal thresholds are defined via the following backward dynamic programming:
 \[
 \tau^*_{t} ~:=~ \mathop{\E}\limits_{w \sim \D^{(w)}_{t+1}(\sipi)}\left[\max\{w, \tau^*_{t+1}\}\right].
 \]
 The fact that $\opt(\sipi) \geq 0.5 \cdot \bm(\sipi)$ follows from standard single-item prophet inequality, using ex-ante relaxation as the benchmark (e.g., OCRS \cite{FSZ-SODA16}). We omit the detailed proof of \Cref{obs:sipi-half}.

\subsection{Submodular Functions}
Our analysis rely on the following standard properties of submodular functions, whose proofs are deferred to \Cref{sec:prelim-appendix}.
\begin{restatable}{Claim}{restrict}
\label{clm:submod-restrict}
    Let $f(S): 2^{[m]} \to \R_{\geq 0}$ be a non-negative monotone submodular function. For any $A \subseteq [m]$, we define function $f(B\mid A) : 2^{[m]} \to \R_{\geq 0}$  as: 
    \[
    f(B\mid A) ~:=~ f(B \cup A) - f(A).
    \]
    Then $f(B \mid A)$ is also a non-negative monotone submodular function. Furthermore, for any $A \subseteq A' \subseteq [m]$, we have
    \[
    f(B \mid A) ~\geq~ f(B \mid A').
    \]
\end{restatable}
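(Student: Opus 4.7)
The statement breaks into four pieces, all of which reduce directly to the two axioms assumed for $f$: monotonicity and submodularity. So my plan is to handle each of the four claims (nonnegativity, monotonicity, submodularity, and the diminishing-returns-in-conditioning inequality) in turn as a short algebraic manipulation on the defining identity $f(B \mid A) = f(B \cup A) - f(A)$.

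First, nonnegativity of $f(B \mid A)$ is immediate from $A \subseteq B \cup A$ combined with monotonicity of $f$. Similarly, to show monotonicity of $f(\cdot \mid A)$ in its first argument, I would take $B \subseteq B'$ and note that $B \cup A \subseteq B' \cup A$, so applying the monotonicity of $f$ and subtracting the common term $f(A)$ gives the desired inequality. For submodularity of $f(\cdot \mid A)$, I would pick $U, V \subseteq [m]$, expand both sides of $f(U \mid A) + f(V \mid A) \geq f(U \cap V \mid A) + f(U \cup V \mid A)$ using the definition, cancel the two copies of $f(A)$ on each side, and then invoke the submodularity of $f$ on the pair of sets $U \cup A$ and $V \cup A$, using the identities $(U \cup A) \cap (V \cup A) = (U \cap V) \cup A$ and $(U \cup A) \cup (V \cup A) = (U \cup V) \cup A$.

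The last claim, $f(B \mid A) \geq f(B \mid A')$ for $A \subseteq A'$, is the only one where the argument is slightly less mechanical, so this is the step I would expect to be the main (albeit still modest) obstacle. My plan is to rewrite the target inequality as $f(B \cup A) + f(A') \geq f(B \cup A') + f(A)$ by moving terms across, and then prove it by applying submodularity of $f$ to the pair $B \cup A$ and $A'$. Their union is $B \cup A'$ (using $A \subseteq A'$), giving the first term on the right-hand side; for the intersection $(B \cup A) \cap A'$, I would observe that it contains $A$ (again using $A \subseteq A'$), so monotonicity of $f$ bounds $f((B \cup A) \cap A')$ from below by $f(A)$, closing the argument. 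Combined with the earlier three parts, this finishes the claim.
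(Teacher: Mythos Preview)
Your proposal is correct and follows essentially the same approach as the paper's proof: nonnegativity and monotonicity are handled identically, and the final inequality $f(B\mid A)\ge f(B\mid A')$ is obtained in the paper by exactly your route (rewrite as $f(B\cup A)+f(A')\ge f(B\cup A')+f(A)$, apply submodularity to $B\cup A$ and $A'$, then use $A\subseteq (B\cup A)\cap A'$ with monotonicity). The only cosmetic difference is in the submodularity check: the paper verifies the diminishing-marginal-returns form $f(B\cup\{i\}\mid A)-f(B\mid A)\ge f(B'\cup\{i\}\mid A)-f(B'\mid A)$, whereas you verify the equivalent lattice inequality via the identities $(U\cup A)\cap(V\cup A)=(U\cap V)\cup A$ and $(U\cup A)\cup(V\cup A)=(U\cup V)\cup A$; both reduce directly to submodularity of $f$ and are interchangeable.
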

\begin{restatable}{Claim}{additive}
\label{clm:submod-additive}
    Let $f(S): 2^{[m]} \to \R_{\geq 0}$ be a non-negative monotone submodular function. For $A \subseteq [m]$, let $\D^{(A)}$ be a distribution over subsets of $A$. If $\pr_{S \sim D^{(A)}}[j \in S] \geq p$ holds for every $j \in A$, we have 
    \[
     \E_{S \sim D^{(A)}}[f(S)] ~\geq~ p \cdot f(A).
    \]
\end{restatable}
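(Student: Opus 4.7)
The plan is to establish the claim via a telescoping/marginal-decomposition argument that is standard in the theory of submodular functions. Write $A = \{a_1, a_2, \ldots, a_n\}$ under an arbitrary ordering and, for each $i$, define the marginal $\Delta_i := f(\{a_1,\ldots,a_i\}) - f(\{a_1,\ldots,a_{i-1}\})$. By monotonicity each $\Delta_i \geq 0$, and by a telescoping sum $f(A) - f(\varnothing) = \sum_{i=1}^n \Delta_i$.

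Next, for an arbitrary realized subset $S \subseteq A$, I will telescope $f(S)$ along the same ordering. Letting $S_i := S \cap \{a_1, \ldots, a_i\}$, the contribution of the $i$-th step is zero whenever $a_i \notin S$ and otherwise equals $f(S_{i-1} \cup \{a_i\}) - f(S_{i-1})$. Submodularity (in the diminishing-returns form) together with $S_{i-1} \subseteq \{a_1,\ldots,a_{i-1}\}$ gives
\[
f(S_{i-1} \cup \{a_i\}) - f(S_{i-1}) \;\geq\; \Delta_i.
\]
Summing over $i$ with $a_i \in S$ yields the key pointwise inequality
\[
f(S) - f(\varnothing) \;\geq\; \sum_{i=1}^n \mathbf{1}[a_i \in S] \cdot \Delta_i.
\]

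Taking expectations over $S \sim \D^{(A)}$, linearity of expectation together with the hypothesis $\Pr[a_i \in S] \geq p$ and $\Delta_i \geq 0$ gives
\[
\E_{S \sim \D^{(A)}}[f(S)] - f(\varnothing) \;\geq\; \sum_{i=1}^n \Pr[a_i \in S]\cdot \Delta_i \;\geq\; p \sum_{i=1}^n \Delta_i \;=\; p\bigl(f(A) - f(\varnothing)\bigr).
\]
Rearranging yields $\E[f(S)] \geq (1-p)f(\varnothing) + p\cdot f(A)$, and since $f(\varnothing) \geq 0$ and $p \leq 1$ as a probability, the desired bound $\E[f(S)] \geq p\cdot f(A)$ follows.

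I do not expect any real obstacle here: the argument is essentially a one-line consequence of submodularity applied to a well-chosen ordering, and the only thing worth being careful about is handling the $f(\varnothing)$ term (which is why I keep it explicit throughout and discard it only at the very last step using non-negativity and $p \leq 1$). No use of convex closures or correlation-gap machinery is needed.
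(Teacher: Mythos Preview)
Your proof is correct and follows essentially the same telescoping/marginal-decomposition argument as the paper: both order the elements of $A$, telescope $f(S)$ and $f(A)$ along that order, apply diminishing returns to compare marginals, and then use the per-element probability lower bound. The only (minor) difference is that you track $f(\varnothing)$ explicitly and discard it at the end via non-negativity, whereas the paper suppresses it implicitly.
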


\begin{restatable}{Claim}{telescope}
\label{clm:submod-telescope}
Let $f(S): 2^{[m]} \to \R_{\geq 0}$ be a non-negative monotone submodular function. For any $A \subseteq B \subseteq [m]$, we have
\[
 f(A) ~\geq~ \sum_{i \in A} \big(f(B \cap [i]) - f(B \cap [i-1]) \big).
\]
\end{restatable}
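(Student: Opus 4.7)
The plan is to combine the diminishing-returns property of submodularity with a simple telescoping identity indexed by $A$. First I would fix any $i \in A$. Since $A \subseteq B$ and $i \in A \subseteq B$, both $B \cap [i]$ and $A \cap [i]$ are obtained from $B \cap [i-1]$ and $A \cap [i-1]$, respectively, by adjoining the single element $i$. Moreover $A \cap [i-1] \subseteq B \cap [i-1]$, so diminishing returns (which follows from submodularity of $f$) yields
\[
f(B \cap [i]) - f(B \cap [i-1]) \;\leq\; f(A \cap [i]) - f(A \cap [i-1]).
\]

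Next I would sum this inequality over all $i \in A$. Listing the elements of $A$ in increasing order as $a_1 < a_2 < \cdots < a_k$, we have $A \cap [a_j - 1] = \{a_1, \ldots, a_{j-1}\}$ and $A \cap [a_j] = \{a_1, \ldots, a_j\}$, so consecutive terms on the right cancel and the sum collapses to $f(A) - f(\emptyset)$. Non-negativity of $f$ then gives
\[
\sum_{i \in A}\bigl(f(B \cap [i]) - f(B \cap [i-1])\bigr) \;\leq\; f(A) - f(\emptyset) \;\leq\; f(A),
\]
which is exactly the desired bound.

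No genuine obstacle arises: the claim is essentially a one-line application of submodularity followed by a telescoping sum. The only point that requires mild care is that the telescoping is carried out over the elements of $A$ rather than of $B$, which is precisely why the first step replaces the reference set $B \cap [i-1]$ by the smaller set $A \cap [i-1]$ via diminishing returns before summing.
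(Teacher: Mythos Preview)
Your proposal is correct and follows essentially the same approach as the paper: write $f(A)$ as the telescoping sum $\sum_{i \in A}\bigl(f(A\cap[i]) - f(A\cap[i-1])\bigr)$ and then apply diminishing returns termwise to pass from $A$ to $B$. Your treatment is in fact slightly more careful, since you explicitly account for the $f(\emptyset)$ term via nonnegativity rather than tacitly assuming it vanishes.
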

We remark that 1) \Cref{clm:submod-additive} holds for the more general class of XOS functions verbatim, 2) \Cref{clm:submod-telescope} can be extended to XOS functions by substituting the corresponding value of the underlying additive vector for $f(B \cap [i]) - f(B \cap [i-1])$.
However, \Cref{clm:submod-restrict} does not hold for XOS functions. 
Though mathematically trivial, we make this remark so that it serves as a sanity check, giving evidence why our analysis is correct but only works for submodular agents.

\section{Submodular Philosopher Inequality}
\label{sec:submod}
In this section, we establish an $0.5 + \Omega(1)$ philosopher inequality for submodular agents. 
\begin{Theorem}
    \label{thm:lp-main}
    For online combinatorial allocation problem with submodular agents, there exists a constant $\epsilon \geq 10^{-16}$ and an  algorithm that achieves an expected reward of at least $(0.5 + \epsilon) \cdot \lp(x^*)$, where $x^*$ is the optimal solution of $\lpon$. Furthermore, the algorithm runs in $\poly(T, m, K)$ time.
\end{Theorem}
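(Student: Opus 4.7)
The plan is to adopt the dichotomy sketched in the techniques overview. First I would solve $\lpon$ via \Cref{lma:polytime-solvable} to obtain the optimal solution $x^*$, and use the XOS representation of each submodular $v_{t,k}$ to write $v_{t,k}(S) = \sum_{j \in S} a^{t,k,S}_j$ for a supporting additive vector. This lets me define, for each item $j \in [m]$, a single-item prophet instance $\sipi_j$ in which at time $t$ nature draws $k \sim p_{t,\cdot}$, then $S \sim x^*_{t,k,\cdot}/p_{t,k}$, and reveals the value $a^{t,k,S}_j \cdot \mathbf{1}[j \in S]$. By construction $\sum_j \bm(\sipi_j) = \lp(x^*)$, the ex-ante constraint~\eqref{eq:exante-constraint} ensures each $\sipi_j$ is feasible, and \Cref{obs:sipi-half} yields the baseline $\sum_j \opt(\sipi_j) \geq 0.5 \cdot \lp(x^*)$. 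Running these optimal single-item algorithms in parallel, synchronized on a common per-agent sample $(k_t, S_t)$, produces an actual allocation that attains this baseline by XOS.

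The \emph{easy instance check} is: if $\sum_j \opt(\sipi_j) \geq (0.5 + \varepsilon_1) \lp(x^*)$ for a suitable constant $\varepsilon_1$, the parallel prophet algorithm above already proves the theorem. Otherwise every $\sipi_j$ is near-tight for the single-item $0.5$ bound, which by the characterization of~\cite{stoc/STW25} triggers the \emph{free-deterministic decomposition}: each $\sipi_j$ splits, up to vanishing error, into a ``free'' part contributing $\approx 0.5 \bm(\sipi_j)$ via large values of vanishing probability mass, and a ``deterministic'' part of equal value carrying near-unit probability mass.

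On such hard instances I would run the \emph{half-double sampling} procedure of \Cref{sec:wesmall}: sample $Q \subseteq [m]$ by including each item independently with probability $1/2$; for every arriving agent $t$, independently draw \emph{two} subsets $S^1_t, S^2_t \sim x^*_{t,v_t,\cdot}/p_{t,v_t}$, extract their free parts $F^1_t, F^2_t$, and allocate $Q \cap (F^1_t \cup F^2_t)$ to $t$ subject to availability; on $[m] \setminus Q$ rerun the parallel single-item algorithm from the easy-check step. Because the free part carries vanishing probability mass, doubling the sample on $Q$ adds essentially no contention, so the expected reward from $Q$ nearly doubles the free-part contribution restricted to its half, while the deterministic contribution on $[m]\setminus Q$ is preserved.

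The main obstacle is the quantitative accounting that combines these pieces into a strict $0.5 + \Omega(1)$ bound. One must show that (i) under double sampling every item in $Q$ remains available with probability tending to $1$; (ii) submodularity of $v_t$, via \Cref{clm:submod-restrict} for conditioning the $Q$-portion on top of the $[m]\setminus Q$-portion and \Cref{clm:submod-telescope,clm:submod-additive} for additive lower bounds on a random subset, guarantees that the gains from the two random halves aggregate without double-counting; and (iii) the resulting value strictly exceeds $0.5 \cdot \lp(x^*)$ by a constant (the techniques overview advertises the constant $0.5625$ in the idealized limit). The extremely small $\epsilon \geq 10^{-16}$ in the theorem statement reflects a conservative balance of three loss sources: the easy-check tolerance $\varepsilon_1$, the approximation error in the free-deterministic decomposition, and the second-order losses absorbed by the analysis of half-double sampling.
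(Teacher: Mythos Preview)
Your plan tracks the paper's overall structure, but it omits one structural ingredient without which the half-double sampling analysis does not close.

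You propose to extract the entire free part $F^1_t, F^2_t$ of the two samples and argue that, since the free part has vanishing \emph{total} probability mass, the two free parts are essentially disjoint and availability is preserved. Vanishing total mass indeed gives you availability (the paper's \Cref{clm:fr-probsmall}), but it does \emph{not} give you per-agent smallness of $\Pr[i \in F^1_t]$, which is what the $3/4$ step actually needs. To bound $\E[v_t(Q\cap(F^1_t\cup F^2_t))]$ from below by $(3/4)\,\E[v_t(F^1_t)]$, one must show that $F^1_t$ and $F^2_t$ are nearly disjoint, i.e., that for every item $i$ the single-agent probability $\Pr[i\in F^1_t]$ is at most some small $\epseh$ (the paper's \Cref{clm:aft-small}). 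This bound is obtained by combining the online constraint~\eqref{eq:online-constraint} with the restriction to \emph{late} free values, those $(t,k,S,i)$ for which $\sum_{s<t}\sum_{k',S'\ni i} x^*_{s,k',S'} > 1-\epseh$. Without that restriction, an early-arriving free value can have per-agent probability close to $1$, and the disjointness argument collapses.

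The paper therefore splits $\fr$ into an early part $\freasy$ and a late part $\frhard$ (your missing Property~(ii)), defines $\we$ as the weight of $\freasy$, and proves two complementary lemmas: half-double sampling achieves roughly $0.5625\,\lp(x^*) - 0.625\,\we$ (\Cref{lma:wesmall}), while the baseline already achieves roughly $0.5\,\lp(x^*) + \epseh\,\we$ (\Cref{lma:welarge}). The final algorithm is a \emph{randomized mixture} of the baseline and the half-double sampler with weights $\tfrac{0.625}{0.625+\epseh}$ and $\tfrac{\epseh}{0.625+\epseh}$, which cancels the $\we$ term and yields the $0.5+\Omega(1)$ bound. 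Your writeup runs only the half-double sampler after the easy check fails, so the $\we$-dependence is unaccounted for. A smaller slip: the conditioning goes the other way---the paper conditions the $[m]\setminus Q$ contribution on the already-allocated free set $S^F_t\subseteq Q$ (i.e., $f_t(S)=v_t(S\mid S^F_t)$), not the reverse.
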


\subsection{Baseline Algorithm for Easy Instance Check}

We first introduce a baseline algorithm, which is guaranteed to obtain an expected reward of $0.5 \cdot \lp(x^*)$. This baseline algorithm is the algorithm we apply in the easy instance check, i.e., the first step of our final algorithm is to check whether running this baseline algorithm can achieve an expected reward of at least $(0.5 + \epsilon) \cdot \lp(x^*)$.

For every $t \in [T], k \in [K], S \subseteq [m]$, let
\begin{align*}
    w_{t, k, S}(i) ~:=~ v_{t, k}(S \cap [i]) - v_{t, k}(S \cap [i - 1])
\end{align*}
be the contribution of item $i$ to the value $v_{t, k}(S)$. Furthermore, let
\[
\lp_i(x^*) ~:=~ \sum_{t \in [T]} \sum_{k \in [K]} \sum_{S \subseteq [m]} x^*_{t, k, S} \cdot w_{t, K, S}(i)
\]
be the contribution of item $i$ to $\lp(x^*)$. 
By inspection, we have that
\[
\lp(x^*) = \sum_{i \in [m]} \lp_i(x^*) \quad \text{and} \quad v_{t,k}(S) = \sum_{i} w_{t,k,S}(i)~, \quad \forall t, k, S~.
\]
We remark that the second equality above is applying the idea that $v_{t,k}(S)$ is also an XOS function, and therefore we can decompose the value of $v_{t,k}(S)$ into the summation of $w_{t,k,S}(i)$ for $i \in S$. 

Our baseline algorithm is defined in \Cref{alg:submod-half}. 
The main idea of \Cref{alg:submod-half} is to reduce an online combinatorial allocation instance with submodular agents to multiple single-item prophet inequality instances. The algorithm is specifically designed for submodular agents, but the core property we use in \Cref{alg:submod-half} is that a submodular function is also an XOS function, and therefore the value can be decomposed into the sum of multiple additive values. We remark that this is a standard algorithm for combinatorial Prophet Inequalities, and a similar analysis can be found in literature, e.g., see \cite{CDHW-arxiv25}.

\begin{algorithm}[tbh]
\caption{\textsc{Baseline Algorithm that obtains $0.5 \lp(x^*)$}}
\label{alg:submod-half}
\begin{algorithmic}[1]
\State \textbf{input:} An online combinatorial allocation instance with submodular agents.
\State Solve $\lpon$ and get the optimal solution $\{x^*_{t, K, S}\}_{t \in [T], k \in [K], S \subseteq [m]}$.
\State Initiate $m$ single-item prophet inequality instances $\bssipi_1, \cdots, \bssipi_m$, such that for $i \in [m]$, $t \in [T]$, the distribution $\D^{(w)}_t(\bssipi_i)$ is defined as follows:
\[
\D^{(w)}_t(\bssipi_i)~:=~ \mathop{\pr}_{w \sim \D^{(w)}_t(\bssipi_i)}[w = w_{t,k, S}(i)] = x^*_{t,k,S}.
\]
\For{$t = 1, \ldots T$}
\State Algorithm reveals $v_t$. Let $k_t$ be the type of $v_t$, i.e., $v_t = v_{t, k_t}$.
\State Sample $S^{\req}_t $ from distribution $\{x^*_{t, k, S}/p_{t, k}\}$.
\State Initiate $S^\alg_t = \varnothing$.
\For{$i = 1 \cdots, m$}
\State Send value $w_{t, k, S^{\req}_t}(i)$ to instance $\bssipi_i$.
\State If $\bssipi_i$ accepts the value, add $i$ into $S^\alg_t$.
\EndFor
\State Allocate $S^\alg_t$ to agent $t$.
\EndFor
\end{algorithmic}
\end{algorithm} 

\begin{Lemma}
\label{lma:submod-half}
    The total expected reward achieved by \Cref{alg:submod-half} satisfies
    \[
    \E\left[\sum_{t \in [T]} v_t(S^\alg_t)\right] ~\geq~ 0.5 \cdot \lp(x^*).
    \]
    Furthermore, \Cref{alg:submod-half} runs in $\poly(T, m, K)$ time.
\end{Lemma}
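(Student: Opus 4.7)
The plan is to show that Algorithm~\ref{alg:submod-half} decouples the combinatorial problem into $m$ independent single-item prophet inequality instances $\bssipi_1,\ldots,\bssipi_m$, apply \Cref{obs:sipi-half} to each, and then bound the algorithm's reward from below by the sum of rewards on these instances using \Cref{clm:submod-telescope}. Summing the per-item benchmarks reconstructs $\lp(x^*)$, yielding the $0.5$-factor.

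First I would verify that each $\bssipi_i$ is a well-defined single-item prophet inequality instance satisfying the ex-ante constraint. Since $\sum_{k,S} x^*_{t,k,S} \leq \sum_k p_{t,k}=1$, the distribution $\D^{(w)}_t(\bssipi_i)$ is valid (padding with a zero mass if needed). The ex-ante constraint
\[
\sum_{t\in[T]}\Pr_{w\sim\D^{(w)}_t(\bssipi_i)}[w>0] ~\leq~ \sum_{t,k,S\ni i} x^*_{t,k,S} ~\leq~ 1
\]
follows from \eqref{eq:exante-constraint}, using that $w_{t,k,S}(i)=0$ whenever $i\notin S$. Moreover, by definition,
\[
\bm(\bssipi_i) ~=~ \sum_{t,k,S} x^*_{t,k,S}\cdot w_{t,k,S}(i) ~=~ \lp_i(x^*),
\]
so \Cref{obs:sipi-half} gives $\opt(\bssipi_i)\geq 0.5\,\lp_i(x^*)$.

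The main (non-routine) step is relating $\sum_t v_t(S^\alg_t)$ to the total reward collected by the $m$ prophet inequality subroutines. Conditional on $k_t$ and $S^{\req}_t$, the value streamed into $\bssipi_i$ at time $t$ is exactly $w_{t,k_t,S^{\req}_t}(i)$, whose unconditional distribution is $\D^{(w)}_t(\bssipi_i)$ because $\Pr[k_t=k,S^{\req}_t=S] = p_{t,k}\cdot x^*_{t,k,S}/p_{t,k} = x^*_{t,k,S}$. Hence the expected reward collected by the optimal-thresholds subroutine of $\bssipi_i$ over the course of the algorithm is $\opt(\bssipi_i)$. Since the subroutine only accepts positive values, $S^\alg_t\subseteq S^{\req}_t$, and \Cref{clm:submod-telescope} applied to the submodular function $v_{t,k_t}$ with $A=S^\alg_t$ and $B=S^{\req}_t$ yields
\[
v_{t,k_t}(S^\alg_t) ~\geq~ \sum_{i\in S^\alg_t} w_{t,k_t,S^{\req}_t}(i).
\]
Summing over $t$ and swapping the order of summation gives $\E\!\left[\sum_t v_t(S^\alg_t)\right] \geq \sum_i \opt(\bssipi_i) \geq \tfrac{1}{2}\sum_i \lp_i(x^*) = \tfrac{1}{2}\lp(x^*)$.

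For the runtime, \Cref{lma:polytime-solvable} handles the LP in $\poly(T,m,K)$ time with $\poly(T,m,K)$ non-zero entries in $x^*$; each $\D^{(w)}_t(\bssipi_i)$ therefore has $\poly(T,m,K)$-size support, so the threshold dynamic program from \Cref{obs:sipi-half} runs in $\poly(T,m,K)$ per item, and sampling $S^{\req}_t$ from $\{x^*_{t,k,S}/p_{t,k}\}$ is likewise efficient. The only conceptual care point is the use of submodularity (as opposed to XOS) in \Cref{clm:submod-telescope}, which is precisely the sanity check flagged after the submodular preliminaries and ensures that the telescoping marginals $w_{t,k,S}(i)$ can be used without recourse to an unknown XOS decomposition.
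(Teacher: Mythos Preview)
Your proposal is correct and follows essentially the same approach as the paper. The paper packages the key inequality $v_t(S^\alg_t)\ge\sum_{i\in S^\alg_t} w_{t,k_t,S^{\req}_t}(i)$ as a separate claim (\Cref{lma:reward-submod-to-si}) proved by comparing increments $v_t(S^\alg_t\cap[i])-v_t(S^\alg_t\cap[i-1])\ge w_{t,k_t,S^{\req}_t}(i)$ item by item, whereas you obtain the same bound in one shot via \Cref{clm:submod-telescope}; the benchmark identity $\bm(\bssipi_i)=\lp_i(x^*)$, the ex-ante feasibility check, and the runtime argument are all the same.
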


To prove \Cref{lma:submod-half}, we need three ingredients. The first ingredient is provided in \Cref{obs:sipi-half}, which suggests that each single-item prophet inequality instance $\bssipi_i$ can achieve an expected reward of at least $0.5 \cdot \bm(\bssipi_i)$. The second ingredient is given by the following \Cref{clm:bmsipi}:

\begin{Claim}
    \label{clm:bmsipi} For every $i \in [m]$, we have $\bm(\bssipi_i) ~=~ \lp_i(x^*).$
\end{Claim}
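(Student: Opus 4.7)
The plan is to prove the claim by a direct unfolding of definitions; there is no nontrivial technical step, only bookkeeping. First I would verify that the distribution $\D^{(w)}_t(\bssipi_i)$ specified in \Cref{alg:submod-half} is indeed a valid probability distribution. Using the assumed tightness of constraint \eqref{eq:probability-constraint}, we have $\sum_{S \subseteq [m]} x^*_{t,k,S} = p_{t,k}$, and summing over $k$ together with $\sum_{k \in [K]} p_{t,k} = 1$ yields
\[
\sum_{k \in [K]} \sum_{S \subseteq [m]} x^*_{t,k,S} ~=~ 1,
\]
so the masses $\{x^*_{t,k,S}\}_{k,S}$ assigned to the values $\{w_{t,k,S}(i)\}_{k,S}$ form a well-defined distribution on $\R_{\geq 0}$. (A subtle point: if two different pairs $(k,S)$ and $(k',S')$ happen to yield the same value $w_{t,k,S}(i) = w_{t,k',S'}(i)$, their masses should be added, but this does not affect the expectation.)

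Next I would just compute. By the definition of $\bm$ applied to the single-item instance $\bssipi_i$,
\[
\bm(\bssipi_i) ~=~ \sum_{t \in [T]} \mathop{\E}\limits_{w \sim \D^{(w)}_t(\bssipi_i)}[w] ~=~ \sum_{t \in [T]} \sum_{k \in [K]} \sum_{S \subseteq [m]} x^*_{t,k,S} \cdot w_{t,k,S}(i),
\]
where the second equality uses that $w$ equals $w_{t,k,S}(i)$ with probability $x^*_{t,k,S}$. The right-hand side is exactly the defining expression for $\lp_i(x^*)$, which completes the proof.

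I do not anticipate any real obstacle: the only conceptual check is that the LP solution $x^*$ induces, for each item $i$, a single-item instance whose marginal-contribution prize distribution precisely accounts for item $i$'s share of the LP objective. This is built into the way the values $w_{t,k,S}(i)$ and the probabilities $x^*_{t,k,S}$ are matched up in the construction of $\bssipi_i$, so the equality is tautological once one unpacks the definitions.
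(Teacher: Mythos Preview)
Your proposal is correct and takes essentially the same approach as the paper: both arguments simply unfold the definition of $\D^{(w)}_t(\bssipi_i)$ to compute the expected value sent at step $t$ as $\sum_{k,S} x^*_{t,k,S}\cdot w_{t,k,S}(i)$, and then sum over $t$ to recover the definition of $\lp_i(x^*)$. Your additional check that the masses $\{x^*_{t,k,S}\}_{k,S}$ sum to $1$ (via tightness of \eqref{eq:probability-constraint}) is a nice sanity step the paper leaves implicit.
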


\begin{proof}
    When agent $t$ arrives, instance $\bssipi_i$ independently receives value $w_{t, k, S}(i)$ with probability $p_{t, k} \cdot \frac{x^*_{t, k, S}}{p_{t,k}} = x^*_{t, k, S}$. Therefore, the expectation of the value sent to $\bssipi$ is $\sum_{k, S} x^*_{t, k, S} \cdot w_{t, k, S}(i)$, and we have
    \[
    \bm(\bssipi_i) ~=~ \sum_{t \in [T]} \sum_{k \in [K]} \sum_{S \subseteq[m]} x^*_{t, k, S} \cdot w_{t, k, S}(i) ~=~ \lp_i(x^*). \qedhere
    \]
\end{proof}

The last ingredient is the following \Cref{lma:reward-submod-to-si}, which connects the rewards from single-item prophet inequality instances to the performance of our algorithm:
\begin{Claim}
    \label{lma:reward-submod-to-si}
    In \Cref{alg:submod-half}, the expected reward $\E\left[\sum_{t \in [T]} v_t(S^{\alg}_t)\right]$ is at least $\sum_{i \in [m]} \opt(\bssipi_i)$.
\end{Claim}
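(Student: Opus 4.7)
The plan is to bound $v_t(S^\alg_t)$ from below by the sum of per-item ``marginal'' values $w_{t,k_t,S^{\req}_t}(i)$ over $i \in S^\alg_t$, then swap the summation order and recognize each resulting per-item term as exactly the reward that the optimal online algorithm extracts from the corresponding single-item instance $\bssipi_i$.

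First, I would observe that $S^\alg_t \subseteq S^{\req}_t$ deterministically: if $i \notin S^{\req}_t$, then $S^{\req}_t \cap [i] = S^{\req}_t \cap [i-1]$, so $w_{t,k_t,S^{\req}_t}(i) = 0$, and since $\bssipi_i$ accepts only positive values, $i \notin S^\alg_t$. Consequently, I can apply \Cref{clm:submod-telescope} with $A = S^\alg_t$ and $B = S^{\req}_t$ to the submodular function $v_{t,k_t}$, yielding
\[
v_t(S^\alg_t) \;=\; v_{t,k_t}(S^\alg_t) \;\geq\; \sum_{i \in S^\alg_t}\bigl(v_{t,k_t}(S^{\req}_t \cap [i]) - v_{t,k_t}(S^{\req}_t \cap [i-1])\bigr) \;=\; \sum_{i \in S^\alg_t} w_{t,k_t,S^{\req}_t}(i).
\]
(This is the one place the submodular hypothesis is used, matching the paper's remark after \Cref{clm:submod-restrict}.)

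Next, I would sum over $t$, swap the order of summation, and extend the inner sum from $i \in S^\alg_t$ to $i \in [m]$ using the indicator $\mathbf{1}[i \in S^\alg_t]$ (this is lossless because $w_{t,k_t,S^{\req}_t}(i) = 0$ whenever $i \notin S^{\req}_t \supseteq S^\alg_t$, and contributing terms are nonnegative):
\[
\sum_{t \in [T]} v_t(S^\alg_t) \;\geq\; \sum_{i \in [m]} \sum_{t \in [T]} w_{t,k_t,S^{\req}_t}(i)\cdot \mathbf{1}[i \in S^\alg_t].
\]
Taking expectations, it then suffices to show that for each fixed $i$, the inner expectation equals the expected reward of the optimal online algorithm for $\bssipi_i$, i.e., equals $\opt(\bssipi_i)$.

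Finally, I would verify this by checking two things. (i) The stream of values fed into $\bssipi_i$ is distributed exactly as $\D^{(w)}_t(\bssipi_i)$ and is independent across $t$: at time $t$, the pair $(k_t, S^{\req}_t)$ equals $(k,S)$ with probability $p_{t,k}\cdot (x^*_{t,k,S}/p_{t,k}) = x^*_{t,k,S}$, so $w_{t,k_t,S^{\req}_t}(i) = w_{t,k,S}(i)$ with probability $x^*_{t,k,S}$, which is precisely the definition of $\D^{(w)}_t(\bssipi_i)$; independence across $t$ follows from the independence of the agents' types and of the fresh samples $S^{\req}_t$. (ii) The accept/reject decisions of $\bssipi_i$ depend only on its own value stream, and $\bssipi_i$ runs the optimal online algorithm by construction, so the expected accepted value is $\opt(\bssipi_i)$. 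Summing over $i$ completes the proof.

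The only subtlety I anticipate is making sure the distributional matching in step (i) is stated cleanly (separating the randomness of $k_t$ from that of $S^{\req}_t$), and confirming that the indicator-extension step is in fact an equality, not merely an inequality; both are routine once the support of $w_{t,k,S}(\cdot)$ is observed to lie in $S$.
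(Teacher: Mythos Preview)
Your proposal is correct and follows essentially the same approach as the paper: both establish $S^\alg_t \subseteq S^\req_t$, use submodularity to show $v_t(S^\alg_t) \ge \sum_{i \in S^\alg_t} w_{t,k_t,S^\req_t}(i)$ (you via \Cref{clm:submod-telescope}, the paper via the equivalent per-$i$ increment comparison $v_t(S^\alg_t \cap [i]) - v_t(S^\alg_t \cap [i-1]) \ge w_{t,k_t,S^\req_t}(i)$), and then identify the right-hand side as the reward collected by the single-item instances. The only cosmetic difference is that the paper argues pathwise (``reward at step $t$'') while you swap sums and take expectations per item, but the content is identical.
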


\begin{proof}
    It is sufficient to show $v_t(S^\alg_t)$ is at least the total reward collected by all $\bssipi_i$ at step $t$. Then, summing over $t \in [T]$ and taking the expectations on both sides of the inequality proves \Cref{lma:reward-submod-to-si}.

    Consider the for loop in Line 8 of \Cref{alg:submod-half}. If $\bssipi_i$ accepts value $w_{t, k, S^\req_t}(i)$, the total reward of single-item instances increases by $w_{t, k, S^\req_t}(i)$; on the other hand, the increment of the submodular instance is 
    \[
    v_t(S^{\alg}_t \cap [i]) - v_t(S^{\alg}_t \cap [i - 1]) ~\geq~ v_t(S^{\req}_t \cap [i]) - v_t(S^{\req}_t \cap [i - 1])  ~=~ w_{t, k, S^\req_t}(i),
    \]
    where the inequality follows from submodularity and the fact that $S^{\alg}_t \subseteq S^{\req}_t$, because the single-item instance $\bssipi_i$ only accepts strictly positive value, which is possible only when $i \in S^{\req}_t$. Summing the increments over $i \in [m]$ finishes the proof.
\end{proof}

Now, we are ready to prove \Cref{lma:submod-half}:

\begin{proof}[Proof of \Cref{lma:submod-half}]
 We first verify that \Cref{alg:submod-half} is feasible, that is, each single item is allocated at most once.
Note that item $i$ is allocated to agent $t$ if and only if the corresponding $\bssipi_i$ takes the value sent to the instance. Since each $\bssipi_i$ takes at most one value, this naturally guarantees that each item $i$ is allocated at most once.

We also need to verify that the instances $\bssipi_1, \cdots, \bssipi_m$ are feasible, i.e., each $\bssipi_i$ should satisfy the ex-ante constraint. The ex-ante constraint holds because for every $i \in [m]$, we have
\begin{align*}
    \sum_{t \in [T]} \mathop{\pr}\limits_{v_t, S^\req_t}[w_{t, k, S^\req_t}(i) > 0] ~\leq~ \sum_{t \in [T]} \mathop{\pr}\limits_{v_t, S^\req_t}[i \in S^\req_t] ~=~ \sum_{t \in [T]} \sum_{k \in [K]} p_{t, k} \cdot \sum_{S \ni i} \frac{x^*_{t, k, S}}{p_{t, k}} ~\leq~ 1,
\end{align*}
where the last inequality follows from the constraint \eqref{eq:exante-constraint} in $\lpon$.

Next, we check the running time of \Cref{alg:submod-half}. Note that the set $\{w_{t, k, S}(i): x^*_{t, k, S} > 0\}$ includes all possible values sent to instance $\bssipi_i$. Therefore, the maximum support of the underlying distributions for $\bssipi_i$ is at most the number of non-zero entries in the optimal solution $\{x^*_{t, k, S}\}$. As \Cref{lma:polytime-solvable} states that $\{x^*_{t, k, S}\}$ has  $\poly(T,m,K)$ non-zero entries, \Cref{obs:sipi-half} ensures that each single-item instance $\bssipi_i$ runs in $\poly(T,m, K)$ time. For the remaining part of \Cref{alg:submod-half}, it can be directly checked that the running time is $\poly(T,m, K)$. Therefore, the total running time of \Cref{alg:submod-half} is $\poly(T,m, K)$.

Finally, for the expected reward achieved by \Cref{alg:submod-half}, combining \Cref{obs:sipi-half}, \Cref{clm:bmsipi}, and \Cref{lma:reward-submod-to-si} together gives
\[
\E\left[\sum_{t \in [T]} f_t(S^{\alg}_t)\right] ~\geq~ \sum_{i \in [m]} \opt(\bssipi_i) ~\geq~ 0.5 \cdot \sum_{i \in [m]} \bm(\bssipi_i) ~=~ 0.5 \lp(x^*),
\]
where the last inequality uses the fact that $\lp(x^*) = \sum_{i \in [m]} \lp_i(x^*)$.
\end{proof}

\subsection{Motivating Instance and Main Ideas}
\label{sec:motivate}
In this subsection, we provide an expanded and more detailed proof plan. To begin, note that if \Cref{alg:submod-half} achieves an expected reward of $(0.5 + \epsilon) \cdot \lp(x^*)$, i.e., the easy instance check passes,  then we are satisfied with its performance. Thus, we focus on instances where \Cref{alg:submod-half} achieves an expected reward of at most $(0.5 + \epsilon) \cdot \lp(x^*)$. What additional structure of the instance can we exploit under the assumption that ``\Cref{alg:submod-half} achieves at most $(0.5 + \epsilon) \cdot \lp(x^*)$''?

Let us take a closer look at \Cref{alg:submod-half}. Its performance guarantee essentially stems from the fact that $\opt(\bssipi_i) \geq 0.5 \lp_i(x^*)$ for each single-item instance $\bssipi_i$. If every instance $\bssipi_i$ satisfies $\opt(\bssipi_i) \geq (0.5 + \epsilon) \cdot \lp_i(x^*)$, then by \Cref{lma:reward-submod-to-si}, \Cref{alg:submod-half} achieves an expected reward of at least $(0.5 + \epsilon) \cdot \sum_{i \in [m]} \lp_i(x^*) = (0.5 + \epsilon) \cdot \lp(x^*)$.

Conversely, if \Cref{alg:submod-half} achieves an expected reward of at most $(0.5 + \epsilon) \cdot \lp(x^*)$, then almost every single-item instance $\bssipi_i$ must be ``hard''—that is, $\opt(\bssipi_i)$ cannot be much greater than $0.5 \lp_i(x^*)$. To build intuition around this case, we present \Cref{exp:hard}, which exactly satisfies the condition discussed above. We remark that an instance similar to \Cref{exp:hard} is also provided in \cite{BDL-EC22}, which is used for proving the tightness of their analysis.

\begin{Example}
\label{exp:hard}
    Let $\delta \to 0$ be an arbitrarily small constant, such that $\delta^{-1}$ is an integer. Consider the following online combinatorial allocation instance with unit-demand (and therefore submodular) agents. Let $m = \delta^{-1}$, $T = m + 1$. For $t = 1, 2, \cdots, m$, agent $t$ arrives with probability $p_t = 1 - \delta$ with unit-demand valuation $v_t(S) = \one[t \in S] \cdot 1$. For $t = T$, agent $T$ arrives with probability $p_T = 1$ with unit-demand valuation $v_t(S) = \one[S \neq \varnothing] \cdot \delta^{-1}$.
\end{Example}
For \Cref{exp:hard}, it is easy to verify that the optimal solution to $\lpon$ is as follows: we have $x^*_{t, \{t\}} = 1 - \delta$ for each $t \in [m]$, and $x^*_{T, \{i\}} = \delta$ for each $i \in [m]$\footnote{We omit the dimension $k$ in the optimal solution, as \Cref{exp:hard} includes only Bernoulli agents.}. Now consider running \Cref{alg:submod-half} on \Cref{exp:hard}. For each single-item instance $\bssipi_i$, the values it receives match those in the standard hard instance for the single-item prophet inequality: a first value of $1$ arrives with probability $1 - \delta$ (corresponding to agent $t = i$), and a second value of $\delta^{-1}$ arrives with probability $\delta$ (corresponding to agent $T$).

This setup gives rise to the following intuition: whenever $\opt(\bssipi_i)$ is close to $0.5 \cdot \bm(\bssipi_i)$, the instance should resemble the standard worst-case instance of the single-item prophet inequality. To formalize this idea, we identify two structural properties that characterize when $\bssipi_i$ is hard:

\noindent \textbf{Property (\romannumeral1).} The value-probability pairs ${(w_{t, k, S}(i), x^*_{t, k, S})}$ are assumed to follow a “free-deterministic” structure. That is, we can decompose the set $\{(w_{t, k, S}(i), x^*_{t, k, S})\}$ into two components such that each contributes approximately $0.5 \lp_i(x^*)$ to the benchmark when summing $\sum w_{t, k, S}(i) \cdot x^*_{t, k, S}$ over that part.

The first component, referred to as the free part, consists of pairs where the total probability mass $\sum x^*_{t, k, S}$ is close to $0$. Each pair $(w_{t, k, S}(i), x^*_{t, k, S})$ in this group has a very small probability $x^*_{t, k, S}$ but a high value $w_{t, k, S}(i)$, and can be accepted essentially “for free” whenever it appears. This part corresponds to the $(\delta^{-1}, \delta)$ value-probability pair in each single-item instance of \Cref{exp:hard}.

The second component, referred to as the deterministic part, comprises pairs where the total probability mass $\sum x^*_{t, k, S}$ is close to $1$. In this group, the values $w_{t, k, S}(i)$ are all close to $0.5 \lp_i(x^*)$, providing a stable, predictable contribution to the benchmark. This part corresponds to the $(1, 1 - \delta)$ pair in each single-item instance of \Cref{exp:hard}.

\noindent \textbf{Property (\romannumeral2).} Note that the free-deterministic structure described in Property (\romannumeral1) does not account for the arrival order of values. Property (\romannumeral2) addresses this by asserting that, when $\bssipi_i$ is hard, most of the deterministic part must arrive before the free part. If this condition is not met, we can show that \Cref{alg:submod-half} already performs well—achieving an expected reward of at least $(0.5 + \epsilon) \cdot \lp_i(x^*)$. This is because, for each $\bssipi_i$, it suffices to select all of the free part and the portion of the deterministic part that arrives after the free part.

We formalize the above properties through quantified lemmas and provide detailed proofs in \Cref{sec:frdt}. What remains is to design a polynomial-time algorithm for the case when both properties hold. To build intuition, we first give a brief overview of our algorithm, using \Cref{exp:hard} as a guiding example.

We begin with an algorithm that only allocates the free part of the instance. When applied to \Cref{exp:hard}, since the optimal solution of the online LP has $x^*_{T, \{i\}} = \delta$ for each $i$, the algorithm simply skips the first $T - 1$ agents, and allocates an item to the last agent uniformly at random. It can be verified that this algorithm achieves $0.5 \cdot \lp(x^*)$.

Next, an idea that seems worth considering for general instances is to sample the free part twice and allocate the union of the two free parts to an agent. Intuitively, this idea looks appealing because the free part offers extremely high value with extremely small probability. Assuming the free part of each item has probability at most $\delta$, this “double sampling” algorithm achieves a total reward of at least $1 - 2\delta$ times the expected value provided by the union of the two free parts.

Ideally, we would hope that the union of two free parts yields an expected reward strictly better than that achieved by sampling once. However, it can be immediately verified that this idea breaks down even for \Cref{exp:hard}: applying the double sampling strategy here simply means the algorithm skips the first $T - 1$ agents, samples two items uniformly and independently, and allocates their union to the last agent. Since the last agent is unit-demand, it makes no sense to allocate two items to the last agent and this algorithm  achieves at most $0.5 \lp(x^*)$.

To refine the above algorithm, we consider applying the double sampling idea to only half of the items, leading to the core ``half-double sampling'' idea of our algorithm. Specifically, let $ Q \subseteq [m] $ be a random subset of items, where each item $ i $ is included in $ Q $ independently with probability 0.5. We apply the double sampling rule to the items in $ Q $; that is, an item $ i \in Q $ can be allocated to an agent if it appears in the union of two free parts. For the remaining items in $ [m] \setminus Q $, we construct a new online combinatorial allocation instance and allocate these items using the baseline algorithm (\Cref{alg:submod-half}). The final allocation to an agent is given by the union of the items allocated from $ Q $ and from $ [m] \setminus Q $.

Let us evaluate the performance of this new algorithm on \Cref{exp:hard}. We first sample a subset $Q$ of size approximately $\frac{1}{2\delta}$. For each item $i \in [m] \setminus Q$, following the baseline algorithm, it is simply allocated to agent $t = i$ and yields value $1$. The items in $Q$ are only considered when the last agent arrives: we sample two items uniformly and independently, and allocate to the last agent if either of the sampled items falls in $Q$. 

In terms of performance, the items in $[m] \setminus Q$ contribute $1/2$ of the total value from the deterministic part, while the items in $Q$ contribute $3/4$ of the total value from the free part, since the last agent is matched to at least one item in $Q$ with probability exactly $3/4$. Therefore, compared to the $0.5 \lp(x^*)$ benchmark, our algorithm gains an additional $1/4$ of the total value from the free part, leading to a $0.625$-approximation.

The core of our proof is to show that the above algorithm remains effective when applied to a general online combinatorial allocation instance with submodular agents. To be specific, we show that allocating the union of two free parts for items in $Q$ still captures $3/4$ of the total value from the free part, resulting in an additional gain of $1/4 \cdot 0.5 \lp(x^*) = 1/8 \cdot \lp(x^*)$. The proof of this argument critically relies on the online constraint \eqref{eq:online-constraint} in $\lpon$ to show that each item in $Q$ has nearly zero probability of falling into a sampled free part. As a result, the two sampled free parts are almost disjoint, yielding a structure similar to the unit-demand case. On the other hand, this extra gain is halved in the general case, where the $\frac{1}{2}$ factor comes from  constructing a new online combinatorial allocation instance for the items in $[m] \setminus Q$. Therefore, the algorithm achieves an additional $1/16 \cdot \lp(x^*)$ gain over the $0.5 \lp(x^*)$ benchmark, leading to a $0.5625$-approximation.

We present the details of the above argument in \Cref{sec:wesmall} and our core half-double sampling algorithm in \Cref{alg:wesmall}. We note that the idea breaks down when agents have XOS valuation functions. We shall construct a modified version of \Cref{exp:hard}, in which the last agent's unit-demand function is replaced by an XOS function, yielding a hard instance with a $0.5$ integrality gap for the online configuration LP. We refer the reader to \Cref{sec:xos} for further details.

\subsection{Free-Deterministic Decomposition and Proof of \Cref{thm:lp-main}}
\label{sec:frdt}

In this section, we quantify the two properties we discussed in  \Cref{sec:motivate}. We first give the following \Cref{lma:property1}, which quantifies Property (\romannumeral1):

\begin{restatable}{Lemma}{property}
\label{lma:property1}
    Let $\{x^*_{t, k, S}\}$ be the optimal solution of $\lpon$ for a submodular prophet inequality instance. Let $\bssipi_i$ be the single-item prophet inequality instance created in \Cref{alg:submod-half}. If we have $\sum_{i \in [m]} \opt(\bssipi_i) \leq (0.5 + \epsilon) \cdot \lp(x^*)$ for some $\epsilon < 10^{-4}$, there exists a subset of items $U \subseteq [m]$, such that $\sum_{i \in [m] \setminus U} \lp_i(x^*) \leq 4\epsilon^{1/4} \cdot \lp(x^*)$. Furthermore, define
    \begin{align*}
            &~\fr = \{(t, k, S, i): x^*_{t, k, S} > 0 \land i \in S \land i \in U \land  w_{t, k, S}(i) >  (0.5 + 2\epsilon^{1/4}) \cdot \lp_i(x^*)\} \\
            \quad \text{and} \quad &~ \dt = \{(t, k, S, i): x^*_{t, k, S} > 0 \land i \in S \land (t, k, S, i) \notin \fr\}.
    \end{align*}
    Then, 
    \begin{itemize}
        \item For every $i \in [m]$, we have 
        \[
        \sum_{t, k, S} x^*_{t, k, S} \cdot \one[(t, k, S, i) \in \fr] ~\leq~ \epsilon^{1/4}.
        \]
        \item For every $i \in U$, we have 
        \[
        (0.5 - 2\epsilon^{1/4}) \cdot \lp_i(x^*) \leq \sum_{t, k, S} x^*_{t, k, S} \cdot w_{t, k, S}(i) \cdot \one[(t, k, S, i) \in \fr] \leq (0.5 + 2\epsilon^{1/4}) \cdot \lp_i(x^*).
        \]
    \end{itemize}
\end{restatable}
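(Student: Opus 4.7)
The plan is to identify a set $U$ of ``hard'' items by an averaging argument on the excess $\opt(\bssipi_i) - 0.5 \lp_i(x^*)$, and then to invoke the free-deterministic decomposition of~\cite{stoc/STW25} on each single-item instance $\bssipi_i$ with $i \in U$. The lemma's statement is tailored so that the free bucket produced by the decomposition lines up exactly with $\fr$ (at a fixed index $i$) via the bijection between atoms of $\bssipi_i$ and quadruples $(t,k,S,i)$ with $x^*_{t,k,S} > 0$ and $i \in S$.

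First I would observe that \Cref{obs:sipi-half} together with \Cref{clm:bmsipi} gives $\opt(\bssipi_i) \ge 0.5 \lp_i(x^*)$ for every $i \in [m]$, so the nonnegative excess $\opt(\bssipi_i) - 0.5 \lp_i(x^*)$ has total at most $\epsilon \cdot \lp(x^*)$ by hypothesis. I would set
\[
U ~:=~ \bigl\{i \in [m] : \opt(\bssipi_i) \le (0.5 + \epsilon^{1/2}) \lp_i(x^*)\bigr\},
\]
i.e., the indices whose single-item instance is almost tight against the $1/2$ barrier. Markov's inequality applied to the excess gives $\sum_{i \notin U} \lp_i(x^*) \le \epsilon^{1/2} \lp(x^*) \le 4 \epsilon^{1/4} \lp(x^*)$ (since $\epsilon < 10^{-4}$ gives $\epsilon^{1/2} \le \epsilon^{1/4} \le 4\epsilon^{1/4}$), which is the required bound on $[m] \setminus U$.

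Next, for each $i \in U$ the instance $\bssipi_i$ has $\bm(\bssipi_i) = \lp_i(x^*)$ by \Cref{clm:bmsipi} and slackness $\opt(\bssipi_i)/\bm(\bssipi_i) \le 0.5 + \epsilon^{1/2}$. I would invoke the free-deterministic decomposition of~\cite{stoc/STW25}, which partitions the value-mass atoms of $\bssipi_i$ into a free bucket (values strictly above $(0.5 + 2\epsilon^{1/4}) \lp_i(x^*)$) with total probability at most $\epsilon^{1/4}$ and value contribution in the window $[0.5 - 2\epsilon^{1/4},\, 0.5 + 2\epsilon^{1/4}] \cdot \lp_i(x^*)$, and a deterministic complement with values at most the threshold. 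Under the natural bijection between atoms and quadruples $(t,k,S,i)$ with $x^*_{t,k,S} > 0$ and $i \in S$ (distinct triples $(t,k,S)$ at the same $t$ are mutually exclusive events at step $t$, so probability mass is never double-counted), the free bucket is exactly the $i$-slice of $\fr$, and the deterministic complement is the $i$-slice of $\dt$. Both bullets of the lemma then read off directly from the guarantees: the probability bound is immediate for $i \in U$ and trivial (zero) for $i \notin U$ because the $i$-slice of $\fr$ is empty by construction, while the two-sided value window in the second bullet is the value guarantee of the decomposition.

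The main obstacle is quantitatively aligning the constants of~\cite{stoc/STW25} with the thresholds in the statement, in particular verifying that slackness $\epsilon^{1/2}$ against the ex-ante benchmark yields free-bucket mass exactly $\epsilon^{1/4}$ and threshold exactly $(0.5 + 2\epsilon^{1/4}) \lp_i(x^*)$; this is where the square-root loss originates. If the cited decomposition is not stated in exactly this form, I would supply the upper bound on the free contribution directly: the simple threshold algorithm that accepts any value strictly above $(0.5 + 2\epsilon^{1/4}) \lp_i(x^*)$ achieves expected reward at least $(1 - \epsilon^{1/4}) F_i$, where $F_i$ is the free value contribution and the $\epsilon^{1/4}$ loss bounds the probability that a second free value appears after the first; combining with $\opt(\bssipi_i) \le (0.5 + \epsilon^{1/2}) \lp_i(x^*)$ then yields $F_i \le (0.5 + 2\epsilon^{1/4}) \lp_i(x^*)$ after rearrangement, and the matching lower bound follows from $F_i + D_i = \lp_i(x^*)$ together with the deterministic threshold $D_i \le (0.5 + 2\epsilon^{1/4}) \lp_i(x^*)$.
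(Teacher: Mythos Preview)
Your overall strategy matches the paper's exactly: define $U$ via a slackness threshold on $\opt(\bssipi_i)/\lp_i(x^*)$, bound the complement by Markov, and then invoke the free--deterministic decomposition (\Cref{lma:free_prob}) on each $i\in U$. The lower bound on the free contribution via $F_i + D_i = \lp_i(x^*)$ and the threshold on $D_i$ is also exactly what the paper does.

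The gap is in your choice of slackness. With $\mu=\epsilon^{1/2}$ and threshold $\beta=0.5+2\epsilon^{1/4}$, \Cref{lma:free_prob} gives
\[
\delta \;=\; \sqrt{\frac{4\mu}{\beta-0.5-\mu}} \;=\; \sqrt{\frac{4\epsilon^{1/2}}{2\epsilon^{1/4}-\epsilon^{1/2}}} \;\approx\; \sqrt{2}\,\epsilon^{1/8},
\]
which is \emph{larger} than $\epsilon^{1/4}$ for small $\epsilon$, so the first bullet fails. The paper instead takes $\mu=\epsilon^{3/4}/4$: this is just large enough that Markov yields $\sum_{i\notin U}\lp_i(x^*)\le \epsilon/\mu\cdot\lp(x^*)=4\epsilon^{1/4}\lp(x^*)$, and just small enough that $\delta\approx \sqrt{\epsilon^{3/4}/(2\epsilon^{1/4})}=\epsilon^{1/4}/\sqrt{2}\le\epsilon^{1/4}$. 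Your fallback argument does not close this gap either, because the claimed $(1-\epsilon^{1/4})F_i$ lower bound for the threshold algorithm already \emph{assumes} the free probability is at most $\epsilon^{1/4}$, which is precisely the first bullet you are trying to establish; the probability bound is the nontrivial output of the decomposition, not an input you may presuppose. Changing your threshold from $\epsilon^{1/2}$ to $\Theta(\epsilon^{3/4})$ fixes everything.
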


In \Cref{lma:property1}, the subset $U \subseteq [m]$ refers to the set of items for which $\opt(\bssipi_i)$ is not significantly larger than $0.5 \bm_i$, allowing the application of a free-deterministic decomposition (formally defined in \Cref{lma:free_prob}). When condition $\sum_{i \in [m]} \opt(\bssipi_i) \leq (0.5 + \epsilon) \cdot \lp(x^*)$ holds, we can show that $\sum_{i \in U} \lp_i(x^*)$ is approximately equal to $\lp(x^*)$. 

The subsets $\fr$ (short for ``free'') and $\dt$ (short for ``deterministic'') represent the collections of free and deterministic items, respectively. Note that $\fr$ and $\dt$ contain tuples of the form $(t, k, S, i)$; that is, whether an item $i$ is classified as free or deterministic can only be determined once the identity of the agent, the type of the agent, and even the specific subset that we plan to allocate to this agent are all fixed.

The proof of \Cref{lma:property1} relies on the following \Cref{lma:free_prob}:

\begin{restatable}{Lemma}{freeprob}
\label{lma:free_prob}
Given a single-item prophet inequality instance $\sipi$ with $\{\D^{(w)}_t\}_{t \in [T]}$ being the underlying value distributions.  For every $\mu<0.5$ and $ \beta>0.5+\mu$, if no online algorithm can achieve an expected reward of more than $(0.5 + \mu) \cdot \bm(\sipi)$ from instance $\sipi$, we have
\begin{flalign*}
1) & \sum_{t \in [T]}  ~\mathop{\pr}\limits_{w \sim \D^{(w)}_t}\left[w > \beta \cdot \bm(\sipi)\right] ~\leq~  \delta~; & \\
2) & \sum_{t \in [T]}  ~\mathop{\E}\limits_{w \sim \D^{(w)}_t}\left[\one[w > \beta \cdot \bm(\sipi)] \cdot w \right] ~\leq~   \frac{0.5+\mu}{1-\delta} \cdot \bm(\sipi),~&
\end{flalign*}
where $\delta =  \sqrt{\frac{4\mu}{\beta - 0.5 - \mu}}$.
\end{restatable}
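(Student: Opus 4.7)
My plan is to derive the second inequality from the first via a simple threshold algorithm, and to prove the first by contrapositive: assuming $q:=\sum_{t\in[T]}\pr_{w\sim\D^{(w)}_t}[w>\beta\cdot\bm(\sipi)]>\delta$, I would exhibit an online algorithm whose expected reward strictly exceeds $(0.5+\mu)\bm(\sipi)$, contradicting the hypothesis.

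For the second inequality, I would set $B:=\beta\cdot\bm(\sipi)$ and run the threshold algorithm that accepts the first $w_t>B$. A union bound gives that it is still active at step $t$ with probability at least $1-\sum_{s<t}\pr_{w\sim\D^{(w)}_s}[w>B]\ge 1-q$, so its expected reward is at least $(1-q)R$, where $R$ denotes the left-hand side of the second inequality. Since this is a feasible online algorithm, $(1-q)R\le\opt(\sipi)\le(0.5+\mu)\bm(\sipi)$; substituting $q\le\delta$ from the first inequality yields $R\le(0.5+\mu)\bm(\sipi)/(1-\delta)$, as claimed.

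For the first inequality, suppose $q>\delta$. Since $w_t>B$ implies $w_t\ge B=\beta\bm(\sipi)$, we have $R\ge\beta q\cdot\bm(\sipi)$; the threshold-$B$ algorithm alone therefore collects at least $(1-q)\beta q\cdot\bm(\sipi)$, an ``extra gain'' of order $\beta q^2$ over $0$ when $q$ is small. The target is to combine this with a standard $0.5$-competitive prophet-inequality algorithm applied to the truncated instance obtained by replacing each $w_t$ with $\min(w_t,B)$ (whose benchmark equals $\bm(\sipi)-R+Bq$), so that the combined algorithm achieves expected reward at least $(0.5+g(q))\bm(\sipi)$ for a function $g$ with $g(q)\gtrsim q^2\beta/4$ in the small-$q$ regime. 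Setting $g(\delta)=\mu$ then recovers $\delta=\sqrt{4\mu/(\beta-0.5-\mu)}$ exactly.

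The hard part will be that a naive convex combination of the threshold-$B$ rule and the $0.5$-competitive baseline does not suffice, because $(1-q)\beta q$ need not exceed $0.5$ for moderate $q$ and $\beta$ close to $1/2$, and any convex combination is bounded by the maximum of its two components. I expect the correct algorithm to be an ``in-line'' combination in which, at each step, the algorithm first accepts $w_t$ immediately if $w_t>B$ and otherwise defers to the baseline's decision on $\min(w_t,B)$; the key technical step will be to analyze the joint probability of remaining active at each step, exploiting the ex-ante constraint $\sum_t\pr_{w\sim\D^{(w)}_t}[w>0]\le 1$ to bound the interaction between the two components sharply enough to produce the $\sqrt{\,\cdot\,}$ dependence on $\mu$ in $\delta$.
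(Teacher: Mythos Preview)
Your derivation of statement~2) from statement~1) is correct: the threshold-$B$ rule is a feasible online policy, the union bound gives activeness probability at least $1-q$ at every step, and combining $(1-q)R\le\opt(\sipi)\le(0.5+\mu)\bm(\sipi)$ with $q\le\delta$ yields the claimed bound on $R$.

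For statement~1), however, there is a real gap. You correctly diagnose that a convex combination of the threshold-$B$ rule and a $0.5$-competitive baseline cannot succeed, and you propose an ``in-line'' hybrid, but you do not analyze it. Moreover, the heuristic you offer is inconsistent with the target constant: if $g(q)\approx\beta q^2/4$ as you write, then $g(\delta)=\mu$ gives $\delta=\sqrt{4\mu/\beta}$, not the stated $\delta=\sqrt{4\mu/(\beta-0.5-\mu)}$. The lemma is applied later in the paper with $\beta$ only slightly above $0.5$, where these two expressions differ by an unbounded factor, so this is not a cosmetic slip. A naive instantiation of your in-line idea (accept immediately if $w_t>B$, otherwise run a calibrated $0.5$-OCRS) must account for how the extra ``high'' acceptances perturb the OCRS calibration, and it is not at all clear that the resulting bound closes to the required $\delta$; you have not shown that it does.

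The paper's proof takes an entirely different route and does not argue either inequality directly. It invokes Lemma~4.1 of \cite{stoc/STW25}, which already establishes both statements for \emph{Bernoulli} instances (each $\D^{(w)}_t$ supported on at most one positive value) under the weaker hypothesis that no fixed-threshold policy beats $(0.5+\mu)\bm$. The paper's own contribution is a reduction from arbitrary instances to Bernoulli instances that preserves the multiset of value--probability pairs, so that the two inequalities---which depend only on that multiset---transfer back unchanged. The reduction is an induction on the total number of ``extra'' support points: one peels off the smallest positive value $w_{t^*,k^*}$ belonging to a non-Bernoulli distribution into a fresh Bernoulli step placed immediately before step $t^*$, and shows that if some ``fixed-subset'' policy beats $(0.5+\mu)\bm$ on the new instance, then some fixed-subset policy already beats it on the old one. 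Since the lemma's hypothesis (``no online algorithm'') is stronger than ``no fixed-subset policy'', the chain closes and the Bernoulli result applies.
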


\Cref{lma:free_prob} is the main free-deterministic decomposition we wish to apply. It follows from Lemma 4.1 in \cite{stoc/STW25} with two differences: \begin{itemize}
    \item The original lemma in \cite{stoc/STW25} assumes the instance only contains Bernoulli values, while \Cref{lma:free_prob} also considers non-Bernoulli requests.
    \item The original lemma only requires the instance does not admit good \emph{threshold-based} algorithms that achieves $(0.5 + \mu) \cdot \bm(\sipi)$, while \Cref{lma:free_prob} requires $\opt(\sipi) \leq (0.5 + \mu) \cdot \bm(\sipi)$.
\end{itemize} 
As the proof of the Bernoulli case is given in \cite{stoc/STW25}, we prove \Cref{lma:free_prob} via providing a reduction from non-Bernoulli case to Bernoulli case. We defer the proofs of \Cref{lma:free_prob} and \Cref{lma:property1} to \Cref{sec:submod-appendix}.

Next, we formalize Property~(\romannumeral2) by quantifying it. Our main idea is to further decompose the free value set $\fr$ into an early part (denoted $\freasy$) and a late part (denoted $\frhard$). Let $\epseh$ be a small constant to be optimized later. We define: \begin{align} \freasy := \left\{(t, k, S, i) \in \fr : \sum_{s < t} \sum_{k' \in [K]} \sum_{S' \ni i} x^*_{s, k', S'} \leq 1 - \epseh \right\} \quad \text{and} \quad \frhard := \fr \setminus \freasy. \label{eq:def-freh} \end{align}

Intuitively, $\freasy$ consists of free values that arrive \textbf{early}: The cumulative probability mass that arrives before values in $\freasy$ is at most $1 - \epseh$. In contrast, $\frhard$ captures the remaining, \textbf{late}-arriving free values. The parameter $\epseh$ thus serves as a threshold distinguishing “early” from “late”.

Note that when Property~(\romannumeral2) holds, most deterministic values arrive before free values. As a result, the set $\freasy$ should be nearly empty. To formalize this intuition, we define: 
\begin{align} \we := \sum_{(t, k, S, i)} x^*_{t, k, S} \cdot w_{t, k, S}(i) \cdot \one[(t, k, S, i) \in \freasy] \label{eq:def-we} 
\end{align} 
as the total weight of values in $\freasy$. The following lemma shows that $\we$ must be small; otherwise, \Cref{alg:submod-half} would already achieve a high reward.

\begin{restatable}{Lemma}{welarge}
\label{lma:welarge}
    Assume the optimal solution $x^*$ of $\lpon$  satisfies \Cref{lma:property1}. Then, \Cref{alg:submod-half} achieves an expected reward of at least 
    \[
    (0.5 - 13\epsilon^{1/4} - \epseh^2) \cdot \lp(x^*) + \epseh \cdot \we.
    \]
\end{restatable}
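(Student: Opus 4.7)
My plan is to exploit the online constraint \eqref{eq:online-constraint} of $\lpon$. For every $(t,k,S,i)\in\freasy$ the cumulative ex-ante mass $\sum_{s<t}\sum_{k'}\sum_{S'\ni i}x^*_{s,k',S'}\le 1-\epseh$ by definition of $\freasy$. Hence any online algorithm on the single-item instance $\bssipi_i$ that accepts only positive values is still available at time $t$ with probability at least $\epseh$, simply because no positive request for item $i$ has had the opportunity to arrive.

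The proof would proceed by refining, for each $i\in U$, the per-item bound $\opt(\bssipi_i)\ge \tfrac12\bm(\bssipi_i)$ of \Cref{obs:sipi-half}. I would analyze a specific algorithm for $\bssipi_i$ that (i)~runs the standard rank-$1$ $\tfrac12$-selectable OCRS on the non-$\freasy$ arrivals, and (ii)~always accepts a $\freasy$ value whenever the slot is free. Because every free value exceeds $\tau_i=\tfrac12\bm(\bssipi_i)$ by \Cref{lma:property1}, this in fact coincides with the threshold algorithm at $\tau_i$, but the decomposition lets me account for contributions channel by channel. The non-$\freasy$ channel still yields at least $\tfrac12\bm(\bssipi_i)$ up to a $O(\epsilon^{1/4})$ multiplicative correction, because the total $\freasy$ mass is bounded by $\epsilon^{1/4}$ (\Cref{lma:property1}), so $\freasy$-interference with the OCRS is negligible. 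The $\freasy$ channel contributes $\epseh\we_i$ (up to a further $O(\epsilon^{1/4}\epseh)$ slack coming from using the excess $(w-\tau_i)^+$ rather than $w$ itself in the decomposition of the reward), since at each $\freasy$ arrival the slot is free with probability at least $\epseh$.

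Combining yields the per-item bound
\[
\opt(\bssipi_i)\;\ge\;\tfrac12\bm(\bssipi_i)\,+\,\epseh\,\we_i\,-\,O(\epsilon^{1/4})\bm(\bssipi_i)\,-\,O(\epseh^2)\bm(\bssipi_i).
\]
Summing over $i\in U$, controlling the contribution of $i\notin U$ using $\sum_{i\notin U}\lp_i(x^*)\le 4\epsilon^{1/4}\lp(x^*)$ from \Cref{lma:property1} (at the cost of at most $2\epsilon^{1/4}\lp(x^*)$), collecting constants so that all $O$-terms fit into $13\epsilon^{1/4}+\epseh^2$, and invoking \Cref{lma:reward-submod-to-si} then gives the claimed
\[
\E\!\Bigl[\sum_{t\in[T]}v_t(S_t^{\alg})\Bigr]\;\ge\;\bigl(\tfrac12-13\epsilon^{1/4}-\epseh^2\bigr)\lp(x^*)\,+\,\epseh\,\we.
\]

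The main obstacle is adding the $\epseh\we_i$ gain \emph{on top of} the full $\tfrac12\bm(\bssipi_i)$ baseline, rather than on top of the reduced $\tfrac12(\bm(\bssipi_i)-\we_i)$ one obtains by naively zeroing out the $\freasy$ entries. A straightforward mixture of ``threshold'' and ``wait for $\freasy$'' loses $\Theta(\epseh)\lp(x^*)$, which is too weak. Avoiding this linear loss requires routing both gains through a single OCRS whose selectability at $\freasy$ times is boosted from $\tfrac12$ to $\tfrac12+\tfrac12\epseh$ by the slack in the cumulative mass constraint, with the slot contention between the $\freasy$ and non-$\freasy$ channels contributing only $O(\epseh^2)\lp(x^*)$ thanks to the smallness $Q_{\freasy}\le\epsilon^{1/4}$.
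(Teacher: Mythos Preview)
Your high-level structure is right: work item by item, lower-bound $\opt(\bssipi_i)$ by analyzing an explicit online policy, then sum and invoke \Cref{lma:reward-submod-to-si}. You also correctly identify the crux, namely obtaining the extra $\epseh\,\we_i$ \emph{on top of} the full baseline $\tfrac12\bm(\bssipi_i)$, not on top of $\tfrac12(\bm(\bssipi_i)-\we_i)$.

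However, the fix you sketch does not deliver the coefficient you need. An OCRS that treats every arrival uniformly and whose selectability at a $\freasy$ time is ``boosted'' to $\tfrac12+\tfrac12\epseh$ (which is indeed the best you can argue from the mass slack: if the pre-$t$ mass is $\le 1-\epseh$ and each arrival is selected with probability $\tfrac12$, the slot-free probability is $\ge 1-\tfrac12(1-\epseh)=\tfrac12+\tfrac12\epseh$) yields only
\[
\tfrac12\bigl(\bm(\bssipi_i)-\we_i\bigr)+\bigl(\tfrac12+\tfrac12\epseh\bigr)\we_i \;=\; \tfrac12\bm(\bssipi_i)+\tfrac12\epseh\,\we_i,
\]
a factor~$2$ short of the $\epseh\,\we_i$ you write in your per-item bound. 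Concretely, take the three-step instance with a deterministic value $0.5\lp_i$ of mass $1-\epseh$ at $t=1$, a free value of mass $\delta_0\to 0$ at $t=2$ (so $(t,k,S,i)\in\freasy$ and $\we_i\approx\tfrac12\lp_i$), and a deterministic value $0.5\lp_i$ of mass $\epseh-\delta_0$ at $t=3$. Any policy that accepts $t=1$ with probability $\ge c$ leaves the slot free at $t=2$ with probability at most $1-c(1-\epseh)$; optimizing over $c$, every OCRS-style scheme is capped at $\tfrac12\lp_i+\tfrac14\epseh\lp_i$, whereas the lemma demands $\tfrac12\lp_i+\tfrac12\epseh\lp_i$ here. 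No amount of ``slot contention is $O(\epseh^2)$'' repairs this linear-in-$\epseh$ shortfall.

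The paper's proof takes a genuinely different route. It splits on whether $\we(i)>\tfrac12\epseh\,\lp_i(x^*)$; for small $\we(i)$ it just uses the $\tfrac12$ bound and absorbs the loss into the $\epseh^2$ term. For large $\we(i)$ it designs a \emph{two-phase} policy: find the time $t_i$ at which the remaining ex-ante mass first drops to about $\epseh$, \emph{reject all deterministic values before $t_i$} (accepting only free values, so the slot stays free with probability $\ge 1-\epsilon^{1/4}$), and from $t_i$ onward accept everything. The point is that after $t_i$ there is still at least $\epseh$ deterministic mass to collect, and by \Cref{lma:property1} each unit of deterministic mass carries value $\approx\tfrac12\lp_i$; this recovers the deterministic contribution that was skipped, while the free values (including all of $\freasy$) are captured with probability $\approx 1$ rather than $\tfrac12+\tfrac12\epseh$. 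In the example above this policy skips $t=1$ entirely and achieves $\tfrac12\lp_i(1+\epseh)$, matching the required bound. Your OCRS approach cannot reproduce this because it never sacrifices early deterministic selectability below $\tfrac12$, which is exactly what is needed.
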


The only remaining case is when both Property~(\romannumeral1) and Property~(\romannumeral2) hold. In this case, the following lemma shows that there exists an algorithm that achieves an expected reward 
of $0.5625 \cdot \lp(x^*)$ as $\epsilon$, $\epseh$, and $\we$ approach zero, matching the approximation ratio discussed in \Cref{sec:motivate}.

\begin{Lemma}
    \label{lma:wesmall}
    Assume the optimal solution $x^*$ of $\lpon$  satisfies \Cref{lma:property1}. There exists an algorithm that achieves an expected reward of 
    \[
    (0.5625 - \epseh/8 - 6.5\epsilon^{1/4}) \cdot \lp(x^*) - 0.625 \cdot \we.
    \] 
    Furthermore, the algorithm runs in $\poly(T, m, K)$ time.
\end{Lemma}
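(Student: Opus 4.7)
The plan is to implement and analyze the half-double sampling algorithm sketched in \Cref{sec:motivate}. First, I would draw $Q\subseteq[m]$ by including each item independently with probability $\tfrac12$, and then run two subroutines in parallel as agents arrive. The first subroutine restricts \Cref{alg:submod-half} to items in $[m]\setminus Q$: upon each agent $t$'s arrival it samples one set $S^{\req}_t\sim\{x^*_{t,k_t,S}/p_{t,k_t}\}_S$ and feeds the values $w_{t,k_t,S^{\req}_t}(i)$ for $i\in[m]\setminus Q$ to the corresponding single-item prophet inequality instances. The second subroutine handles items in $Q$ by double sampling: it draws two independent sets $S^1_t,S^2_t\sim\{x^*_{t,k_t,S}/p_{t,k_t}\}_S$, extracts the free parts $F^j_t=\{i\in S^j_t\cap Q:(t,k_t,S^j_t,i)\in\fr\}$, and allocates every item in $F^1_t\cup F^2_t$ that has not yet been taken. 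Since the two subroutines operate on disjoint item sets, their allocations simply union at each time step.

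For the $[m]\setminus Q$ contribution, I would reuse the per-item analysis of \Cref{lma:submod-half} directly: conditional on any realization of $Q$, the restricted baseline achieves at least $\tfrac12\sum_{i\in[m]\setminus Q}\lp_i(x^*)$, which in expectation over $Q$ equals $\tfrac14\lp(x^*)$. The bulk of the work is to show that the $Q$-subroutine contributes at least $\tfrac58\sum_{i\in Q}\lp_i(x^*)$ up to controlled slack, so that the two pieces combine to $(\tfrac14+\tfrac12\cdot\tfrac58)\lp(x^*)=0.5625\,\lp(x^*)$ in expectation. This bound has three ingredients: (a) Property~(i) of \Cref{lma:property1} gives $\sum_{t,k,S}x^*_{t,k,S}\mathbf{1}[(t,k,S,i)\in\fr]\le\eps^{1/4}$, so a union bound across the two samples and all times shows that every $i\in Q$ is available at each arrival with probability at least $1-2\eps^{1/4}$; (b) the XOS lower bound of \Cref{clm:submod-telescope} applied with reference set $S^1_t$ shows that the single-sample piece alone contributes at least $\sum_{i\in Q}\fr_i\approx\tfrac12\sum_{i\in Q}\lp_i(x^*)$; (c) the second-sample bonus exploits both the near-disjointness $\Pr[F^1_t\cap F^2_t\ne\emptyset]=O(\eps^{1/4})$ (same free-mass bound) and the $Q$-randomness to yield an additional $\tfrac14\sum_{i\in Q}\fr_i$ of value, with the $-0.625\we$ correction appearing because the early-arriving free tuples in $\freasy$ may collide with the item's pre-booked LP mass and escape the clean reduction.

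Combining the two subroutine bounds, taking expectation over $Q$, and tracking the $\eps^{1/4}$ and $\epseh$ slack that accumulates in (a)--(c) would yield the claimed $(0.5625-\epseh/8-6.5\eps^{1/4})\lp(x^*)-0.625\we$. Polynomial running time follows from \Cref{lma:polytime-solvable} and \Cref{obs:sipi-half}: the LP has $\poly(T,m,K)$ nonzero entries, so each per-agent double sampling step and each single-item prophet subroutine runs in polynomial time.

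The hard part is step~(c). The XOS lower bound directly gives only $v_t(F^1_t\cup F^2_t)\ge\max(v_t(F^1_t),v_t(F^2_t))$, and for general submodular valuations the bonus over the single-sample value can be arbitrarily small if $\Pr[F^1_t=\emptyset]$ is close to zero. The key technical insight is that the $Q$-randomness halves each item's chance of surviving into a sample, which forces $\Pr[F^1_t=\emptyset]$ to be bounded away from zero on average; combined with the pairwise near-disjointness of $F^1_t$ and $F^2_t$ from the online constraint~\eqref{eq:online-constraint}, this reduces the local analysis to a unit-demand-like structure analogous to \Cref{exp:hard} and recovers the required $\tfrac14$ extra factor. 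The $\we$ term is a controlled blemish: free tuples whose item has LP mass greater than $1-\epseh$ already locked in before time $t$ fall outside this reduction and must be billed separately, producing the $-0.625\we$ correction.
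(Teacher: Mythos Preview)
Your plan has two genuine gaps.

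\textbf{Additivity.} You run the $[m]\setminus Q$ baseline and the $Q$ double-sampling as independent subroutines and add their guarantees. But for submodular $v_t$ with disjoint $S^{(1)}_t\subseteq[m]\setminus Q$ and $S^{(2)}_t\subseteq Q$, the agent's reward is $v_t(S^{(1)}_t\cup S^{(2)}_t)$, which can be strictly smaller than $v_t(S^{(1)}_t)+v_t(S^{(2)}_t)$; the argument behind \Cref{lma:submod-half} only certifies $v_t(S^{(1)}_t)$, not the marginal $v_t(S^{(1)}_t\mid S^{(2)}_t)$. The paper fixes this in two coupled ways. First, the baseline subroutine is fed the \emph{marginal} values $f_t(\cdot)=v_t(\cdot\mid S^F_t)$, so its reward is exactly $v_t(S^\get_t\mid S^F_t)$ and adds cleanly to $v_t(S^F_t)$. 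Second, the sample for the $[m]\setminus Q$ request is \emph{shared} with one of the two free-part samples ($S^\req_t=A_t\setminus Q$); this sharing is what makes $S^\req_t\cup S^F_t\supseteq ((A_t\setminus A^F_t)\setminus Q)\cup A^F_t$ and yields $\E[v_t(S^\req_t\cup S^F_t)]\ge \tfrac12\E[v_t(A_t)]+\tfrac12\E[v_t(A^F_t)]$. With three independent samples your benchmark for the marginal subroutine collapses, and the $\tfrac14\lp(x^*)$ you claim from $[m]\setminus Q$ is not recoverable.

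\textbf{Step (c), $\frhard$ versus $\fr$, and where $\epseh$ and $\we$ actually enter.} Your ``near-disjointness'' argument needs, for each $i$, a bound on $\Pr[i\in A^F_t\mid k_t=k]$ that is uniform in $k$; the free-mass bound in \Cref{lma:property1} is only marginal over types and gives $\Pr[i\in A^F_t\mid k_t=k]\le \eps^{1/4}/p_{t,k}$, which is useless when $p_{t,k}$ is tiny. The paper gets the uniform bound by working with $\frhard$ rather than $\fr$: if $(t,k,S,i)\in\frhard$ then the remaining LP mass on item $i$ is below $\epseh$, and the online constraint \eqref{eq:online-constraint} forces $\Pr[i\in A^F_t\mid k_t=k]\le\epseh$ (this is exactly \Cref{clm:aft-small}). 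That is the only place $\epseh$ enters the bound, producing the $(0.75-0.5\epseh)$ factor in $\E[v_t(S^F_t)]\ge(0.75-0.5\epseh)\E[v_t(A^F_t)]$. Correspondingly, the $-0.625\,\we$ term does not come from any ``collision with pre-booked LP mass''; it comes from the fact that using $\frhard$ drops the $\freasy$ tuples from $A^F_t$, so $\sum_t\E[v_t(A^F_t)]\ge(0.5-4\eps^{1/4})\lp(x^*)-\we$, and $0.625$ is the final coefficient multiplying $\sum_t\E[v_t(A^F_t)]$. Your sketch produces neither dependence for the stated reasons.
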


We defer the proofs of \Cref{lma:wesmall} and \Cref{lma:welarge} to \Cref{sec:wesmall} and \Cref{sec:welarge} respectively, and first prove \Cref{thm:lp-main}:

\begin{proof}
    Consider to take
    \[
    \epsilon ~=~ 10^{-16} \quad \text{and} \quad \epseh ~=~ 0.033.
    \]
    Our final algorithm runs via the following steps:
    \begin{enumerate}[Step 1:]
        \item  Solve $\lpon$ in $\poly(T,m,K)$ time, and get the optimal solution $x^*$.
        \item Perform the easy instance check via pretending to run \Cref{alg:submod-half}. Let $\bssipi_i$ be the revised single-item prophet inequality instance given by \Cref{alg:submod-half}. Check if $\sum_{i \in [m]} \opt(\bssipi_i) \geq (0.5 + \epsilon) \cdot \lp(x^*)$. If so, run \Cref{alg:submod-half}.
        \item If the easy instance check in Step 2 fails, run \Cref{alg:submod-half} with probability $\frac{0.625}{0.625 + \epseh}$, and run the algorithm provided in \Cref{lma:wesmall}  (see \Cref{alg:wesmall} for the corresponding algorithm) with probability $\frac{\epseh}{0.625 + \epseh}$.
    \end{enumerate}

    Each step in the above process can be directly verified to run in $\poly(T, m, K)$ time. Therefore, the overall running time is also $\poly(T, m, K)$. It remains to show that the above algorithm achieves an expected reward of $(0.5 + \epsilon) \cdot \lp(x^*)$. For Step 2, \Cref{lma:reward-submod-to-si} guarantees that \Cref{alg:submod-half} achieves an expected reward of at least $(0.5 + \epsilon) \cdot \lp(x^*)$ when $\sum_{i \in [m]} \opt(\bssipi_i) \geq (0.5 + \epsilon) \cdot \lp(x^*)$ holds. For Step 3, by \Cref{lma:welarge} and \Cref{lma:wesmall}, the expected performance of Step 3 is
    \begin{align*}
        &\left((0.5 - 13\epsilon^{1/4} - \epseh^2) \cdot \lp(x^*) + \epseh \cdot \we \right) \cdot \frac{0.625}{0.625 + \epseh} \\
        &~+ \left( (0.5625 - \epseh/8 - 6.5\epsilon^{1/4}) \cdot \lp(x^*) - 0.625 \cdot \we\right) \cdot \frac{\epseh}{0.625 + \epseh} \\
        ~\geq~& 0.5\lp(x^*) + \frac{\lp(x^*)}{0.625 + \epseh} \cdot \left(0.0625 \epseh - 0.75\epseh^2 - 8.125 \epsilon^{1/4} - 6.5 \epseh \cdot \epsilon^{1/4}\right) \\
        ~\geq~& (0.5 + \epsilon) \cdot \lp(x^*),
    \end{align*}
    where the last inequality can be verified numerically.
\end{proof}

\section{Algorithm for Small $\we$: Proof of \Cref{lma:wesmall}}
\label{sec:wesmall}

In this section, we prove \Cref{lma:wesmall}. We show that  \Cref{alg:wesmall} is the half-double sampling algorithm for \Cref{lma:wesmall}.

\begin{algorithm}[tbh]
\caption{\textsc{Algorithm for small $\we$}}
\label{alg:wesmall}
\begin{algorithmic}[1]
\State \textbf{input:} An online combinatorial allocation instance with submodular agents.
\State Let $\{x^*_{t, k, S}\}$ be the optimal solution of $\lpon$ that satisfies \Cref{lma:property1}.
\State Let $Q \subseteq [m]$ be a random set, such that each $i \in [m]$ is added into $Q$ w.p. 0.5 independently.
\State Initiate $m$ single-item prophet inequality instances $\wesipi_1, \cdots, \wesipi_m$.
\State Let $R_1 = [m]$ be the set of remaining items for agent $1$.
\For{$t = 1, \ldots T$}
\State Algorithm reveals $v_t$. Let $k_t$ be the type of $v_t$, i.e., $v_t = v_{t, k_t}$.
\State Sample $A_t, B_t$ both from distribution $\{x^*_{t, k_t, S}/p_{t, k_t}\}$ independently.
\State Define $A^F_t = \{i \in A_t: (t, k_t, A_t, i) \in \frhard\}$ and $B^F_t = \{i \in B_t: (t, k_t, B_t, i) \in \frhard\}$.
\State Let $S^F_t = Q \cap (A^F_t \cup B^F_t)$ be the union of ``free'' parts for items in $Q$.
\State Let $S^\req_t = A_t \setminus Q$ be the subset we request for items in $[m] \setminus Q$.
\State Let $S^\get_t$ be the subset that agent $t$ gets from PI instances. We initiate $S^\get_t = \varnothing$.
\State Define function $f_t(S) = v_t(S \mid S^F_t)$. 
\For{$i = 1 \cdots, m$}
\State Send value $f_t(S^\req_t \cap [i]) - f_t(S^\req_t \cap [i-1])$ to instance $\wesipi_i$.
\State If $\wesipi_i$ accepts the value, add $i$ into $S^\get_t$.
\EndFor
\State Allocate $S_t = (S^F_t \cap R_t) \cup S^\get_t$ to agent $t$, and update $R_{t+1} = R_t \setminus S_t$.
\EndFor
\end{algorithmic}
\end{algorithm}

The definition of $\D^{(w)}_t(\wesipi_i)$ for the single-item prophet inequality instance $\wesipi_i$ is omitted in \Cref{alg:wesmall} due to its complexity. Here, we provide the full definition to complete \Cref{alg:wesmall}:
\begin{itemize}
    \item For $i \in Q$, the distribution $\D^{(w)}_t(\wesipi_i)$ always generates value $0$ with probability $1$.
    \item For $i \in [m] \setminus Q$,  with probability $ x^*_{t, k, A} \cdot x^*_{t, k, B} / p_{t,k}$ (for each $A, B \subseteq [m]$), the distribution generates the value:
    \[
    v_t\Big( \big( A \setminus Q\big) \cap [i] \mid Q \cap (A^F \cup B^F)\Big) - v_t\Big( \big( A  \setminus Q\big) \cap [i - 1] \mid Q \cap (A^F \cup B^F)\Big),
    \]
    where $A^F = \{i \in A : (t, k, A, i) \in \frhard\}$ and $B^F = \{i \in B : (t, k, B, i) \in \frhard\}$.
\end{itemize} 
Since the definition is involved, we recommend that readers refer directly to \Cref{alg:wesmall} for a concrete understanding of the values passed to $\wesipi_i$. Our analysis will also rely directly on \Cref{alg:wesmall}, rather than on the explicit definition of $\D^{(w)}_t(\wesipi_i)$ given above.

\medskip
Now, we provide the proof of \Cref{lma:wesmall}.
\paragraph{Feasibility of \Cref{alg:wesmall}.} To prove \Cref{lma:wesmall}, we start from verifying that \Cref{alg:wesmall} is feasible - each single item $i$ is allocated at most once, and each single-item prophet inequality instance $\wesipi_i$ is feasible.

We first show that every item is allocated at most once. For the items in $Q$, it can be allocated only when $i \in S^F_t$ and $i \in R_t$ simultaneously hold. Since $R_t$ is the set of available items for agent $t$, $i$ should be allocated at most once.  For the items in $[m] \setminus Q$, it is allocated only when $i \in S^\get_t$, which is possible only when the corresponding instance $\wesipi_i$ takes the value. Since each $\wesipi_i$ can take at most one value, each item $i \in [m] \setminus Q$ can be allocated at most once.

Next, we verify that each single-item prophet inequality instance $\wesipi_i$ is feasible, i.e., it satisfies the ex-ante constraint. Note that the value sent to $\wesipi_i$ is possible to be non-zero only when $i \in S^\req_t$, which is true when  $i \in A_t$ happens. Since we have
    \begin{align*}
        \pr[i \in A_t] ~&=~ \sum_{k \in [K]} p_{t, k} \cdot \sum_{S \ni i} \frac{x^*_{t, k, S}}{p_{t, k}} ~=~ \sum_{k \in [K]} \sum_{S \ni i} x^*_{t, k, S}, 
    \end{align*}
     there must be
    \begin{align*}
        \sum_{t \in [T]} \pr[i \in S^\req_t] ~&\leq~ \sum_{t \in [T]}  \sum_{k \in [K]} \sum_{S \ni i} x^*_{t, k, S} ~\leq~ 1,
    \end{align*}
    where the last inequality follows from the constraint \eqref{eq:exante-constraint} in $\lpon$. Therefore, each single-item prophet inequality instance $\wesipi_i$ satisfies the ex-ante constraint.

\paragraph{Running time analysis.} Next, we show that \Cref{alg:wesmall} runs in polynomial time. We begin by proving that each single-item instance $\wesipi_i$ can be executed in $\poly(T, m, K)$ time by bounding the total number of possible values passed to $\wesipi_i$. Note that the function $f_t$ depends on $v_t$, $A_t$, and $B_t$. Since the number of possible tuples $(v_t, A_t, B_t)$ is at most the cube of the number of non-zero entries in the solution ${x^*_{t, k, S}}$, and \Cref{lma:polytime-solvable} states that this number is $\poly(T, m, K)$, the total number of distinct $f_t$ is also bounded by $\poly(T, m, K)$. Therefore, the support size of the underlying distribution for each instance $\wesipi_i$ is at most $\poly(T, m, K)$, and each $\wesipi_i$ can be executed in polynomial time.

On the other hand, it is straightforward to verify that the remaining steps in \Cref{alg:wesmall} can also be completed in $\poly(T, m, K)$ time. Therefore, the total running time of \Cref{alg:wesmall} is $\poly(T, m, K)$.

\paragraph{Decomposing the total reward.} Finally, we analyze the total reward collected by \Cref{alg:wesmall}.   For agent $t$, after fixing  the randomness of $Q, v_t, A_t, B_t$, we have
\begin{align*}
    v_t(S_t) ~&=~ v_t(S^F_t \cap R_t) + v_t(S^\get_t \cup( S^F_t \cap R_t)) -v_t(S^F_t \cap R_t) \\
    ~&=~ v_t(S^F_t \cap R_t) + v_t(S^\get_t \mid S^F_t \cap R_t)  \\
    ~&\geq~ v_t(S^F_t \cap R_t) + v_t(S^\get_t \mid S^F_t )  ~=~ v_t(S^F_t \cap R_t) + f_t(S^\get_t)
\end{align*}
where the inequality follows from the fact that $S^F_t \cap R_t \subseteq S^F_t$ and \Cref{clm:submod-additive}. Taking the expectation over the randomness of $Q, v_t, A_t, B_t$, we lower-bound the reward from \Cref{alg:wesmall} as
\begin{align}
    \E\left[\sum_{t \in [T]} v_t(S^F_t \cap R_t)\right] ~+~  \E\left[\sum_{t \in [T]} f_t(S^\get_t)\right]. \label{eq:wesmall-reward1}
\end{align}

We further lower bound the two terms in \eqref{eq:wesmall-reward1} separately. The first term can be bounded via the following \Cref{clm:fr-probsmall}:

\begin{Claim}
    \label{clm:fr-probsmall}
    For every $i \in Q$ and $t \in [T]$, we have $\pr[i \in R_t] \geq 1 - 2\epsilon^{1/4}$.
\end{Claim}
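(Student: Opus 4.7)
}
The plan is to bound the probability that item $i \in Q$ has already been allocated at some step $s < t$. The key structural observation is that for items in $Q$, the only route to being allocated is through the ``free'' set $S^F_s$: indeed, by construction the single-item instance $\wesipi_i$ for $i \in Q$ always receives value $0$ and never accepts, so $S^\get_s \cap Q = \varnothing$, and the only contribution to the allocation $S_s = (S^F_s \cap R_s) \cup S^\get_s$ from $Q$ comes from $S^F_s \cap R_s$. Consequently, $\{i \notin R_t\} \subseteq \bigcup_{s < t} \{i \in S^F_s\} \subseteq \bigcup_{s < t} \{i \in A^F_s\} \cup \{i \in B^F_s\}$.

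Next, I would compute $\pr[i \in A^F_s]$ exactly using the sampling rule in Line 8 of \Cref{alg:wesmall}: conditional on $v_s = v_{s,k}$ (which occurs with probability $p_{s,k}$) the subset $A_s$ equals $S$ with probability $x^*_{s,k,S}/p_{s,k}$, and $i \in A^F_s$ further requires $(s, k, S, i) \in \frhard$. Marginalizing gives
\[
\pr[i \in A^F_s] ~=~ \sum_{k \in [K]} \sum_{S \ni i} x^*_{s, k, S} \cdot \one\!\left[(s, k, S, i) \in \frhard\right],
\]
and the same identity holds for $\pr[i \in B^F_s]$ by the independence of $B_s$.

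A union bound over $s < t$ combined with $\frhard \subseteq \fr$ then yields
\[
\pr[i \notin R_t] ~\leq~ 2 \sum_{s \in [T]} \sum_{k \in [K]} \sum_{S \ni i} x^*_{s, k, S} \cdot \one\!\left[(s, k, S, i) \in \fr\right] ~\leq~ 2\epsilon^{1/4},
\]
where the final inequality is exactly the first bullet of \Cref{lma:property1}. Since $A_s, B_s$ are independent of $Q$, this argument is unaffected by conditioning on $i \in Q$.

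I do not anticipate a genuine obstacle here; the proof is a short union bound once the structural observation is in place. The only subtle point to be careful about is confirming that items in $Q$ cannot leak into the allocation through the prophet-inequality branch $S^\get_s$, which follows from the convention that $\D^{(w)}_t(\wesipi_i)$ is identically zero for $i \in Q$.
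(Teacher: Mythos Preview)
Your proposal is correct and follows essentially the same approach as the paper's own proof: observe that for $i\in Q$ the allocation can only occur via $S^F_s$ (since $S^\req_s = A_s\setminus Q$ forces the value sent to $\wesipi_i$ to be zero), then union-bound over $\{i\in A^F_s\}\cup\{i\in B^F_s\}$ across all $s$, and invoke the first bullet of \Cref{lma:property1}. Your write-up is in fact slightly more explicit than the paper's in justifying why the $S^\get_s$ branch cannot touch $i\in Q$ and why conditioning on $i\in Q$ is harmless.
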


\begin{proof}
    Each item $i \in Q$ is possible to be allocated only when event $i \in A^F_t$ or $i \in B^F_t$ happens. 
    For every $i \in Q$ and $t \in [T]$, we have
    \[
    \pr[i \in A^F_t] ~=~ \sum_{k \in [K]} p_{t, k} \cdot \sum_{S \subseteq[m]} \frac{x^*_{t, k, S}}{p_{t,k}} \cdot \one[(t, k, S, i) \in \frhard] 
    ~\leq~ \sum_{k \in [K]} \sum_{S \subseteq[m]} x^*_{t, k, S} \cdot \one[(t, k, S, i) \in \fr].
    \]
    Now taking the union bound over the events $i \in A^F_t$ and $i \in B^F_t$ for all $t \in [T]$ (note that $A^F_t$ and $B^F_t$ are identically distributed), we have
    \begin{align*}
        \pr[i \text{ is allocated}] ~\leq~ 2 \cdot \sum_{t \in [T]} \sum_{k \in [K]} \sum_{S \subseteq[m]} x^*_{t, k, S} \cdot \one[(t, k, S, i) \in \fr] ~\leq~ 2\epsilon^{1/4},
    \end{align*}
    where the last inequality follows from \Cref{lma:property1}. Therefore, \Cref{clm:fr-probsmall} holds from the observation that $\pr[i \in R_t] \geq 1 - \pr[i \text{ is allocated}]$.
\end{proof}

Now, we apply \Cref{clm:fr-probsmall} to lower-bound $\E[v_t(S^F_t \cap R_t)]$. Since $S^F_t \cap R_t$ is a distribution that outputs a subset of $S^F_t$, such that for each $i \in S^F_t$, we have $\pr[i \in (S^F_t \cap R_t)] \geq 1 - 2\epsilon^{1/4}$. Then, \Cref{clm:submod-additive} guarantees
\begin{align}
    \E\left[\sum_{t \in [T]} v_t(S^F_t \cap R_t)\right] ~\geq~ \sum_{t \in [T]} (1 - 2\epsilon^{1/4}) \cdot \E[v_t(S^F_t)]. \label{eq:wesmall-reward2}
\end{align}

For the second term in \eqref{eq:wesmall-reward1}, we lower-bound it via the following \Cref{clm:calling-subroutine}:

\begin{Claim}
    \label{clm:calling-subroutine}
    We have
    \begin{align}
        \E\left[\sum_{t \in [T]} f_t(S^\get_t)\right] ~\geq~ \frac{1}{2} \cdot \E\left[\sum_{t \in [T]} f_t(S^\req_t)\right] ~=~ \frac{1}{2} \sum_{t \in [T]} \E\left[v_t(S^\req_t \cup S^F_t) - v_t(S^F_t)\right] \label{eq:wesmall-reward3}
    \end{align}
\end{Claim}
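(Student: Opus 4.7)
The plan is to split the claim into its equality and inequality parts and handle them separately. The equality $\frac{1}{2}\E[\sum_t f_t(S^\req_t)] = \frac{1}{2}\sum_t \E[v_t(S^\req_t \cup S^F_t) - v_t(S^F_t)]$ is immediate from the definition $f_t(S) = v_t(S \mid S^F_t) = v_t(S \cup S^F_t) - v_t(S^F_t)$ together with linearity of expectation. The inequality is the content of the claim, and I would prove it by essentially repeating the argument of \Cref{lma:reward-submod-to-si}, but with the submodular function $v_t$ replaced by the conditional submodular function $f_t$, and then invoking the single-item prophet inequality (\Cref{obs:sipi-half}) on each subroutine $\wesipi_i$.

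Concretely, I would first observe that, for any realization of $Q, v_t, A_t, B_t$, the function $f_t$ is a nonnegative monotone submodular function by applying \Cref{clm:submod-restrict} with $A = S^F_t$, and moreover $f_t(\emptyset) = 0$. Then I would argue pointwise (in the randomness) that $f_t(S^\get_t) \geq \sum_i \bigl( f_t(S^\req_t \cap [i]) - f_t(S^\req_t \cap [i-1]) \bigr) \cdot \one[i \in S^\get_t]$: for each $i$ accepted by $\wesipi_i$, the value sent is nonnegative (hence $i \in S^\req_t$), and the increment $f_t(S^\get_t \cap [i]) - f_t(S^\get_t \cap [i-1])$ dominates the sent value by submodularity together with $S^\get_t \cap [i-1] \subseteq S^\req_t \cap [i-1]$; summing over $i$ and telescoping (using $f_t(\emptyset) = 0$) gives the bound. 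Next, I would condition on $Q$, observe that the values $\{\D^{(w)}_t(\wesipi_i)\}_{t \in [T]}$ sent to each $\wesipi_i$ are independent across $t$ (since $(v_t,A_t,B_t)$ are independent across $t$), and apply \Cref{obs:sipi-half} to obtain $\E[\text{reward of } \wesipi_i \mid Q] \geq 0.5 \cdot \bm(\wesipi_i \mid Q)$. Finally, unpacking the benchmark gives $\bm(\wesipi_i \mid Q) = \sum_t \E\bigl[ f_t(S^\req_t \cap [i]) - f_t(S^\req_t \cap [i-1]) \,\big|\, Q \bigr]$, and summing over $i$ telescopes to $\sum_t \E[f_t(S^\req_t) \mid Q]$. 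Taking expectation over $Q$ and chaining the three bounds completes the inequality.

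The main subtlety is bookkeeping the randomness carefully: the distribution $\D^{(w)}_t(\wesipi_i)$ is itself determined by $Q$ (and the LP-induced distribution over $(v_t,A_t,B_t)$), so the prophet inequality must be applied conditionally on $Q$. The ex-ante constraint for each $\wesipi_i$ was already verified during the feasibility check at the start of the section via $\sum_t \pr[i \in S^\req_t] \le \sum_{t,k,S \ni i} x^*_{t,k,S} \le 1$, so \Cref{obs:sipi-half} is directly applicable. The other point that deserves a line of care is ensuring $S^\get_t \subseteq S^\req_t$, which holds because $\wesipi_i$ accepts only strictly positive values and the value sent to $\wesipi_i$ is $0$ whenever $i \notin S^\req_t$ (in particular for all $i \in Q$, where the distribution is identically zero by construction). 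With these observations in place, the proof is a direct composition of submodularity, the single-item prophet inequality bound, and telescoping.
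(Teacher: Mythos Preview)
Your proposal is correct and follows essentially the same approach as the paper: show $\E[\sum_t f_t(S^\get_t)] \ge \sum_i \opt(\wesipi_i)$ by comparing increments via submodularity of $f_t$ (using $S^\get_t \subseteq S^\req_t$), then apply \Cref{obs:sipi-half} to each $\wesipi_i$, and finally telescope the benchmark back to $\E[\sum_t f_t(S^\req_t)]$. Your explicit conditioning on $Q$ when invoking the single-item prophet inequality is a welcome clarification that the paper leaves implicit, but the argument is otherwise identical.
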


\begin{proof}
    We first show that 
    \begin{align}
        \E\left[\sum_{t \in [T]} f_t(S^\get_t)\right] ~\geq~ \sum_{i \in [m]} \opt(\wesipi_i). \label{eq:calling-subroutine1}
    \end{align}
    To prove \eqref{eq:calling-subroutine1}, note that an item $i$ is added into $S^\get_t$ if and only if the corresponding prophet inequality instance $\wesipi_i$ accepts value  $f_t(S^\req_t \cap [i]) - f_t(S^\req_t \cap [i-1])$. Consider to compare the increment of both sides in \eqref{eq:calling-subroutine1}.
     When the for loop in Line 14 of \Cref{alg:wesmall} reaches $i$ and instance $\wesipi_i$ accepts value $f_t(S^\req_t \cap [i]) - f_t(S^\req_t \cap [i-1])$, the RHS of \eqref{eq:calling-subroutine1} increases by exactly $f_t(S^\req_t \cap [i]) - f_t(S^\req_t \cap [i-1])$, and the LHS of \eqref{eq:calling-subroutine1} increases by
    \[
    f_t(S^\get_t \cap [i]) - f_t(S^\get_t \cap [i-1]) ~\geq~ f_t(S^\req_t \cap [i]) - f_t(S^\req_t \cap [i-1]),
    \]
    where the inequality follows from the fact that $f_t$ is submodular (which is guaranteed by \Cref{clm:submod-restrict}),  and the fact that $S^\get_t \subseteq S^\req_t$, which is guaranteed by the fact that instance $\wesipi_i$ only accepts non-zero values, and therefore $i \in S^\get_t$ implies $i \in S^\req_t$. Therefore, \eqref{eq:calling-subroutine1} holds. This further gives 
    \begin{align*}
        \E\left[\sum_{t \in [T]} f_t(S^\get_t)\right] ~\geq~ \sum_{i \in [m]} \opt(\wesipi_i) ~&\geq~ \frac{1}{2} \cdot \sum_{i \in [m]} \bm(\wesipi_i) \\
        ~&\geq~ \frac{1}{2} \cdot \E\left[\sum_{t \in [T]} \sum_{i \in [m]} \left(f_t(S^\req_t \cap [i]) - f_t(S^\req_t \cap [i-1]) \right)\right] \\
        ~&=~ \frac{1}{2} \cdot \E\left[\sum_{t \in [T]} f_t(S^\req_t)\right],
    \end{align*}
    where the second inequality uses \Cref{obs:sipi-half}.
\end{proof}

Applying \eqref{eq:wesmall-reward2} and \eqref{eq:wesmall-reward3} to \eqref{eq:wesmall-reward1}, we further lower-bound the performance of \Cref{alg:wesmall} as
\begin{align}
    (0.5 - 2\epsilon^{1/4}) \cdot \sum_{t \in [T]}  \E[v_t(S^F_t)] +  0.5  \cdot \sum_{t \in [T]} \E[v_t(S^\req_t \cup S^F_t)] \label{eq:wesmall-reward-final}
\end{align}

We remark that \eqref{eq:wesmall-reward-final} explicitly reflects the idea discussed in \Cref{sec:motivate}: compared to the benchmark $0.5 \lp(x^*)$, the first term in \eqref{eq:wesmall-reward-final} represents the source of the additional gain. Here, the coefficient $\frac{1}{2}$ arises from the observation that the items in $Q$ contribute a factor $1$ (corresponding to the first term in \eqref{eq:wesmall-reward1} and \Cref{clm:fr-probsmall}), while the extra gain is halved when constructing the new online combinatorial allocation instance (corresponding to the second term in \eqref{eq:wesmall-reward1} and \Cref{clm:calling-subroutine}).

Next, we lower-bound the two terms in \eqref{eq:wesmall-reward-final} separately.

\paragraph{Bounding $\E[v_t(S^F_t)]$.} The first term in \eqref{eq:wesmall-reward-final} is where we need to show that applying double sampling to half of the items gains at least $3/4$ times the total free value. Assuming $\epsilon$, $\epseh$, and $\we$ approach zero, our goal is to establish that 
\[
\E[v_t(S^F_t)] = \E[v_t(Q \cap (A^F_t \cup B^F_t))] \geq \frac{3}{4} \cdot \E[v_t(A^F_t)],
\]
as we will further show that $v_t(A^F_t)$ collects almost all the free values. 

We first show that a standard approach for lower-bounding $\E[v_t(S^F_t)]$ does not give the desired inequality. Recall from the discussion in \Cref{sec:motivate} that the double sampling idea does not work on all items; that is, when $Q$ is not considered,
\[
 \E[v_t(A^F_t \cup B^F_t)] ~\geq~  \E[v_t(A^F_t)]
\]
is the best lower bound we can obtain for $\E[v_t(A^F_t \cup B^F_t)]$.  If we naively apply \Cref{clm:submod-additive} to $\E[v_t(S^F_t)]$, we obtain the following.
\[
\E[v_t(S^F_t)] ~=~ \E[v_t(Q \cap (A^F_t \cup B^F_t))] ~\geq~ \frac{1}{2} \cdot  \E[v_t(A^F_t \cup B^F_t)] ~\geq~ \frac{1}{2} \cdot \E[v_t(A^F_t)],
\]
which falls short by a factor of $1/4$ relative to our target. Therefore, a more careful analysis is required to show that $\E[v_t(S^F_t)] \geq \frac{3}{4} \cdot \E[v_t(A^F_t)]$.

We now present the proof of this bound. Fix $t$, we have
\begin{align}
    \mathop{\E}\limits_{Q, v_t, A_t, B_t}[v_t(S^F_t)] ~&=~ \mathop{\E}\limits_{Q, v_t, A_t, B_t} [v_t(Q \cap (A^F_t \cup B^F_t))] \notag\\
    ~&=~ \mathop{\E}\limits_{Q, v_t, A_t, B_t} [v_t\big(~ (Q \cap A^F_t) \cup ((B^F_t \setminus A^F_t) \cap Q)~\big) - v_t(Q \cap A^F_t) + v_t(Q\cap A^F_t)] \label{eq:first-1}
\end{align}

We first bound the expectation of the first two terms in \eqref{eq:first-1}. 
Note that the randomness from $Q$ is independent to the realization of $v_t, A_t, B_t$. Consider to first realize $v_t, A_t, B_t$, and then determine whether $i \in Q$ for items in $A^F_t$ (by flipping independent fair coins), and leave the randomness of whether $i \in Q$ for items in $B^F_t \setminus A^F_t$ undecided.  \Cref{clm:submod-restrict} guarantees that the function $v_t(S \mid Q \cap A^F_t)$ is submodular, and the expectation of the first two terms in \eqref{eq:first-1} can be written as
\begin{align*}
    \mathop{\E}\limits_{v_t, A_t, B_t, Q \cap A^F_t} \left[\mathop{\E}\limits_{Q \cap (B^F_t \setminus A^F_t)} [v_t(Q \cap (B^F_t \setminus A^F_t) \mid Q \cap A^F_t)]\right] &\geq \mathop{\E}\limits_{v_t, A_t, B_t, Q \cap A^F_t} \left[0.5 v_t(B^F_t \setminus A^F_t\mid Q \cap A^F_t) \right],
\end{align*}
where the inequality follows from the fact that subset $Q \cap (B^F_t \setminus A^F_t)$ can be viewed as adding each item in $B^F_t \setminus A^F_t$ into $Q$ w.p. $0.5$ independently, the submodularity of function $v_t(S \mid Q \cap A^F_t)$, and \Cref{clm:submod-additive}. Plugging the above inequality into \eqref{eq:first-1}, we have
\begin{align}
    \mathop{\E}\limits_{Q, v_t, A_t, B_t}[v_t(S^F_t)] ~&\geq~ 0.5 \mathop{\E}\limits_{Q, v_t, A_t, B_t}[v_t((Q \cap A^F_t) \cup (B^F_t \setminus A^F_t)) + v_t(Q \cap A^F_t)] \notag \\
    ~&\geq~ 0.5 \mathop{\E}\limits_{v_t, A_t, B_t}[v_t(B^F_t \setminus A^F_t)] +  0.5 \mathop{\E}\limits_{Q, v_t, A_t}[v_t(Q \cap A^F_t)]  \label{eq:first-2}
\end{align}
For the second term in \eqref{eq:first-2}, note that $v_t$ is a submodular function, and subset $Q \cup A^F_t$ can be viewed as adding each item in $A^F_t$ into $Q$ w.p. $0.5$ independently. Then, \Cref{clm:submod-additive} guarantees
\begin{align}
    0.5 \mathop{\E}\limits_{Q, v_t, A_t}[v_t(Q \cap A^F_t)] ~\geq~ 0.25 \cdot \mathop{\E}\limits_{v_t, A_t}[v_t(A^F_t)]. \label{eq:first-3}
\end{align}
For the first term in \eqref{eq:first-2}, we bound it via the following \Cref{clm:aft-small}:

\begin{Claim}
\label{clm:aft-small}
    For every $t \in [T], k \in [K]$, let $A$ be a random set independently drawn from distribution $\{x^*_{t, k, S}/p_{t, k}\}$, and let $A^F = \{i \in A: (t, k, A, i) \in \frhard\}$. Then, for every $i \in [m]$, we have $\pr[i\in A^F]\leq\epseh$.
\end{Claim}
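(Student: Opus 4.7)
The plan is to prove \Cref{clm:aft-small} by a direct application of the online constraint \eqref{eq:online-constraint} of $\lpon$, combined with the way $\frhard$ was cut out of $\fr$ in \eqref{eq:def-freh}. The key observation is that the threshold condition separating $\freasy$ from $\frhard$, namely $\sum_{s < t} \sum_{k'} \sum_{S' \ni i} x^*_{s, k', S'} \leq 1 - \epseh$, depends only on the pair $(t, i)$ and is independent of the choice of $k$ or $S$. This means that for fixed $(t, k, i)$, either every tuple $(t, k, S, i) \in \fr$ lies in $\freasy$, or every such tuple lies in $\frhard$.

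First I would split into cases on that threshold. In the ``easy'' case where $\sum_{s < t} \sum_{k'} \sum_{S' \ni i} x^*_{s, k', S'} \leq 1 - \epseh$, no tuple of the form $(t, k, S, i)$ belongs to $\frhard$, so $i \notin A^F$ with probability one and the bound holds trivially.

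Second, in the complementary case $\sum_{s < t} \sum_{k'} \sum_{S' \ni i} x^*_{s, k', S'} > 1 - \epseh$, I would use the trivial upper bound $\pr[i \in A^F] \leq \pr[i \in A]$ and rewrite the right-hand side using the definition of the sampling distribution:
\[
\pr[i \in A] ~=~ \sum_{S \ni i} \frac{x^*_{t, k, S}}{p_{t, k}}.
\]
Applying the online constraint \eqref{eq:online-constraint} then gives $\sum_{S \ni i} x^*_{t, k, S} \leq p_{t, k}\bigl(1 - \sum_{s < t} \sum_{k'} \sum_{S' \ni i} x^*_{s, k', S'}\bigr) < p_{t, k} \cdot \epseh$. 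Dividing by $p_{t, k}$ yields the desired bound $\pr[i \in A^F] < \epseh$.

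I do not expect any serious obstacle here: the claim is essentially a tautology once one unfolds the definition of $\frhard$ and inserts the online constraint, and the threshold $1 - \epseh$ in \eqref{eq:def-freh} was chosen precisely so that the residual mass on the right-hand side of the online constraint is bounded by $\epseh$.
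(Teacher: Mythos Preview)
Your proof is correct and follows essentially the same approach as the paper: both observe that the threshold condition defining $\frhard$ depends only on $(t,i)$, then apply the online constraint \eqref{eq:online-constraint} when that threshold is exceeded. The only cosmetic difference is that the paper writes the case split implicitly via an indicator function inside a single chain of inequalities, whereas you spell out the two cases separately.
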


\begin{proof}
    Fix $t, k, i$, we have
    \begin{align*}
        \pr[i \in A^F] ~&=~ \sum_{S \ni i} \frac{x^*_{t, k, S}}{p_{t, k}} \cdot \one\left[1 -\sum_{s < t} \sum_{k' \in [K]} \sum_{S' \ni i} x^*_{s, k', S'} < \epseh \land (t, k, S, i) \in \fr \right] \\
        ~&\leq~ \one\left[1 -\sum_{s < t} \sum_{k' \in [K]} \sum_{S' \ni i} x^*_{s, k', S'} < \epseh\right] \cdot \frac{\sum_{S \ni i} x^*_{t, k, S}}{p_{t, k}} \\
        ~&\leq~ \one\left[1 -\sum_{s < t} \sum_{k' \in [K]} \sum_{S' \ni i} x^*_{s, k', S'} < \epseh\right] \cdot \left(1 -\sum_{s < t} \sum_{k' \in [K]} \sum_{S' \ni i} x^*_{s, k', S'}\right) \\
        ~&\leq~ \epseh,
    \end{align*}
    where the third line follows from the online constraint \eqref{eq:online-constraint} in $\lpon$.
\end{proof}

Note that after fixing the randomness of $v_t$ and $B_t$, subset $B^F_t \setminus A^F_t$ can be viewed as a distribution that outputs a subset of $B^F_t$. Since \Cref{clm:aft-small} guarantees that $i \in A^F_t$ with probability at most $\epseh$, every $i \in B^F_t$ should still belong to $B^F_t \setminus A^F_t$ w.p. $1 - \epseh$. Then, the submodularity of function $v_t$ and \Cref{clm:submod-additive} guarantees
\begin{align}
    0.5 \mathop{\E}\limits_{v_t, A_t, B_t}[v_t(B^F_t \setminus A^F_t)] ~\geq~ 0.5 \cdot (1 - \epseh) \cdot \mathop{\E}\limits_{v_t, B_t}[v_t(B^F_t)] ~=~ 0.5 \cdot (1 - \epseh) \cdot \mathop{\E}\limits_{v_t, A_t}[v_t(A^F_t)], \label{eq:first-4}
\end{align}
where the last equality follows from the fact that $A^F_t$ and $B^F_t$ are identically distributed. Applying \eqref{eq:first-3} and \eqref{eq:first-4} to \eqref{eq:first-2} gives
\begin{align}
     \E[v_t(S^F_t)] ~\geq~ (0.75 - 0.5 \epseh) \cdot \E[v_t(A^F_t)]. \label{eq:first-final}
\end{align}

\paragraph{Bounding $\E[v_t(S^\req_t \cup S^F_t)]$.} Recall that $S^\req_t = A_t  \setminus Q$, and $S^F_t = (A^F_t \cup B^F_t) \cap Q \supseteq A^F_t \cap Q$. Then, since each agent has a monotone valuation function,  we have
\begin{align}
    \mathop{\E}\left[v_t(S^\req_t \cup S^F_t)\right] ~&\geq~ \mathop{\E}\limits_{Q, v_t, A_t}\left[v_t\big(~(A_t \setminus Q) \cup (A^F_t \cap Q) ~\big)\right] \notag \\
    ~&=~ \mathop{\E}\limits_{Q, v_t, A_t}\left[v_t\big(~((A_t \setminus A^F_t) \setminus Q) \cup A^F_t ~\big) - v_t(A^F_t) + v_t(A^F_t)\right]. \label{eq:second-1}
\end{align}
We first bound the first two terms in \eqref{eq:second-1}. Fix the randomness of $v_t, A_t$. \Cref{clm:submod-restrict} guarantees that function $v_t(S \mid A^F_t) $ is submodular, and the expectation of first two terms in \eqref{eq:second-1} can be written as
\[
\mathop{\E}\limits_{v_t, A_t}\left[\mathop{\E}\limits_{Q}\left[v_t\big(~(A_t \setminus A^F_t) \setminus Q \mid A^F_t~\big)\right]\right] ~\geq~ \mathop{\E}\limits_{v_t, A_t}\left[0.5 \cdot v_t\big(~(A_t \setminus A^F_t) \mid A^F_t ~\big)\right],
\]
where the inequality follows from the fact that subset $(A_t \setminus A^F_t) \setminus Q$ can be viewed as adding each item in $A_t \setminus A^F_t$ into $[m] \setminus Q$ w.p. $0.5$ independently, the submodularity of function $v_t(S \mid A^F_t) $, and \Cref{clm:submod-additive}. Plugging the above inequality into \eqref{eq:second-1}, we have
\begin{align}
    \mathop{\E}\left[v_t(S^\req_t \cup S^F_t)\right] ~&\geq~  \mathop{\E}\limits_{v_t, A_t}\left[0.5 \cdot v_t\big(~(A_t \setminus A^F_t) \cup A^F_t~\big) - 0.5 \cdot v_t(A^F_t) + v_t(A^F_t)\right] \notag \\
    ~&=~ 0.5 \cdot \mathop{\E}\limits_{v_t, A_t}\left[v_t(A_t) \right] + 0.5 \cdot \mathop{\E}\limits_{v_t, A_t}\left[ v_t(A^F_t)\right]  \label{eq:second-final}.
\end{align}

\paragraph{Putting everything together.} By applying \eqref{eq:first-final} and \eqref{eq:second-final} to \eqref{eq:wesmall-reward-final}, we can lower-bound the performance of \Cref{alg:wesmall} as
\begin{align}
    (0.5 - 2\epsilon^{1/4}) \cdot \sum_{t \in [T]} \mathop{\E}\limits_{v_t, A_t} [(1.25 - 0.5 \epseh) \cdot v_t(A^F_t) + 0.5v_t(A_t)] \label{eq:putting-final}
\end{align}
It remains to bound $\E[v_t(A_t)]$ and $\E[v_t(A^F_t)]$. For $\E[v_t(A_t)]$, we have
\begin{align*}
    \sum_{t \in [T]} \mathop{\E}\limits_{v_t, A_t} [v_t(A_t)] ~=~ \sum_{t \in [T]} \sum_{k \in [K]} p_{t, k} \sum_{S \subseteq [m]} \frac{x^*_{t, k, S}}{p_{t, k}} \cdot v_{t, k}(S) ~=~ \lp(x^*).
\end{align*}
For $\E[v_t(A^F_t)]$, we have
\begin{align*}
    \sum_{t \in [T]} \mathop{\E}\limits_{v_t, A_t} [v_t(A^F_t)] ~&=~ \sum_{t \in [T]} \sum_{k \in [K]} p_{t, k} \sum_{A_t \subseteq [m]} \frac{x^*_{t, k, A_t}}{p_{t, k}} \cdot v_{t, k}(A^F_t) \\
    ~&\geq~ \sum_{t \in [T]} \sum_{k \in [K]}  \sum_{A_t \subseteq [m]} x^*_{t, k, A_t} \cdot \sum_{i:(t, k, A_t, i) \in \frhard} w_{t, k, A_t}(i) \\
    ~&=~ \sum_{t, k, S} x^*_{t, k, S} \cdot \left(\sum_{i: (t, k, S, i) \in \fr} w_{t, k, S}(i) - \sum_{i: (t, k, S, i) \in \freasy} w_{t, k, S}(i)\right) \\
    ~&\geq~ \sum_{i \in U} (0.5 - 2\epsilon^{1/4})\lp_i(x^*) - \we \\
    ~&\geq~ (0.5 - 4\epsilon^{1/4}) \cdot \lp(x^*) - \we,
\end{align*}
where the second line uses the fact that $A^F_t \subseteq A_t$ and \Cref{clm:submod-telescope}, the third line uses the fact that $\frhard = \fr \setminus \freasy$,  the forth line uses \Cref{lma:property1} together with the definition of $\we$, and the last line uses the fact that $\sum_{i \notin U} \lp_i(x^*) \leq 4\epsilon^{1/4} \cdot \lp(x^*)$. Plugging the bounds on $\E[v_t(A_t)]$ and $\E[v_t(A^F_t)]$ into \eqref{eq:putting-final}, we finally lower-bound the performance of \Cref{alg:wesmall} as
\begin{align*}
    &~(0.5 - 2\epsilon^{1/4}) \cdot \left((1.25 - 0.5\epseh) \cdot ((0.5 - 4\epsilon^{1/4}) \lp(x^*) - \we) + 0.5 \lp(x^*)\right) \\
    ~\geq&~ (0.5625 - \epseh/8 - 6.5\epsilon^{1/4}) \cdot \lp(x^*) - 0.625 \cdot \we.
\end{align*}

\section{Hard Instances for Combinatorial Philosopher Inequalities}
\label{sec:xos}

In this section, we show the integrality gaps of $\lpon$ for online combinatorial allocation problem when all agents fall in the class of XOS. To be specific, we prove the following:

\begin{Theorem}
    \label{thm:xos-hard} 
    For online combinatorial allocation problem with XOS agents, and any constant $\gamma > 0$, there exist a hard instance such that no online algorithm can achieve an expected reward of $(0.5 + \gamma) \cdot \lp(x^*)$.
\end{Theorem}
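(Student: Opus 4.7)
The plan is to construct a hard instance by modifying \Cref{exp:hard}, replacing the last agent's unit-demand valuation with a carefully designed XOS valuation $v_T$, and verifying that $\opton / \lp(x^*) \to 1/2$ as $\delta \to 0$. The XOS function $v_T(S) = \max_{j \in J}\sum_{i \in S} a^{(j)}_i$ should be built from additive witnesses $\{a^{(j)}\}$ whose maxima are attained on \emph{different} subsets of $[m]$; such a construction yields a non-submodular $v_T(\cdot \mid S^F_t)$, which is precisely the property that \Cref{clm:submod-restrict} identifies as the obstacle preventing our positive result from extending to XOS.

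On the LP side, I would first verify that the optimal LP retains the structure from \Cref{exp:hard}: set $x^*_{t, \{t\}} = 1-\delta$ for the first $m$ Bernoulli agents (contributing $m(1-\delta)$ to the objective), and allocate the remaining per-item budget of $\delta$ (granted by the online constraint \eqref{eq:online-constraint}) across many subsets for agent $T$. The design of $v_T$ should ensure that an LP spreading mass $\delta$ on each item across many witnesses collects value $\approx m$ from agent $T$, giving $\lp(x^*) \approx 2m$ in total.

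On the online side, I would fix an arbitrary online algorithm and bound its expected reward. Let $Q$ denote the (possibly adaptive) random subset of items ultimately allocated to agent $T$. The reward decomposes into a contribution from the early agents---at most $(m - \E[|Q|])(1-\delta)$, since each item saved for $T$ costs roughly one unit of early-agent value---plus a contribution $\E[v_T(Q)]$ from agent $T$. The crux is to show that the integral $Q$ activates only a single witness of $v_T$, so that $\E[v_T(Q)] \leq \E[|Q|] + o(m)$, yielding a total online reward of at most $m + o(m)$. Taking $\delta \to 0$ drives the ratio $\opton / \lp(x^*)$ down to $1/2$, giving the claimed integrality gap for every $\gamma > 0$.

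The main obstacle is the careful design of $v_T$. XOS's fractional subadditivity forces $v_T(S) \leq \sum_{i \in S} v_T(\{i\})$, so we cannot simply make singletons cheap and sets valuable without violating the XOS property. The construction likely uses a large collection of overlapping witnesses of moderate singleton value, engineered so that any integer choice of $Q$ activates essentially one witness (bounding $v_T(Q)$ by roughly $|Q|$), but fractional combinations of many subsets in the LP can reach value $\Theta(m)$ within the per-item budget $\delta$. Verifying that no clever adaptive online strategy can approach the LP value---even by adaptively choosing $Q$ based on which early agents arrive---is the most delicate step.
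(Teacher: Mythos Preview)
Your high-level plan is right: modify \Cref{exp:hard}, keep the first $m$ deterministic agents, replace the last agent's valuation, and aim for $\lp(x^*)\approx 2m$ while $\opton\approx m$. The decomposition of the online reward into $(m-\E[|Q|])$ from early agents plus $\E[v_T(Q)]$ from agent $T$, together with the target bound $\E[v_T(Q)]\le \E[|Q|]+o(m)$, is exactly what the paper does.

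However, there is a genuine gap in your proposed construction of $v_T$. You envision a single \emph{fixed} XOS function with many overlapping witnesses, engineered so that every integral $Q$ satisfies $v_T(Q)\le |Q|+o(m)$. This cannot work. If $v_T(S)\le |S|+o(m)$ for \emph{every} $S$, then the LP contribution of agent $T$ is
\[
\sum_{S} x_{T,S}\,v_T(S)\;\le\;\sum_S x_{T,S}\,|S|+o(m)\;=\;\sum_{i}\sum_{S\ni i}x_{T,S}+o(m)\;\le\;m\delta+o(m)\;=\;o(m),
\]
using the per-item budget $\delta$ forced by \eqref{eq:online-constraint}. So you cannot simultaneously have $v_T(Q)\lesssim |Q|$ for all $Q$ and LP value $\Theta(m)$ from agent $T$: the LP is a fractional combination of integral allocations and inherits the same pointwise bound.

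The paper resolves this by making $v_T$ \emph{random} (i.e., giving agent $T$ many types). Concretely, $m=\delta^{-3}$, and $v_T$ is determined by a uniformly random partition $U_1,\ldots,U_{\delta^{-1}}$ of $[m]$ into blocks of size $\delta^{-2}$, with $v_T(S)=\delta^{-1}\max_j|U_j\cap S|$. The LP sees the realized type and, for each type, puts mass $\delta/N$ on each block $U_j$ (value $v_T(U_j)=\delta^{-3}=m$), which sums to an agent-$T$ contribution of $m$. The online algorithm, by contrast, must commit to the reserved set $R$ \emph{before} the partition is revealed; a Bernstein-type inequality for sampling without replacement then gives $\E[\max_j|U_j\cap R|]\le r\delta+O(\delta^{-1.75})$, hence $\E[v_T(R)]\le r+o(m)$. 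The separation is thus between an LP that conditions on the type and an online algorithm that cannot---not between fractional and integral allocations for a fixed function.
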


To prove \Cref{thm:xos-hard}, we introduce the following \Cref{exp:xos-hard}:

\begin{Example}
\label{exp:xos-hard}
    Let $\delta \to 0$ be a sufficiently small constant, such that $\delta^{-1}$ is an integer.  Consider the following online combinatorial allocation problem with XOS agents: Let $m = \delta^{-3}$, $T = m + 1$. For $t = 1, 2, \cdots, m$, agent $t$ arrives with probability $p_t = 1-\delta$ with unit-demand valuation $v_t(S) = \one[t \in S] \cdot 1$. For $t = T$, agent $T$ arrives with probability $1$ and a \emph{random} valuation function $v_T(S)$ defined as follows: 
    \begin{itemize}
        \item Let $U_1, \cdots, U_{\delta^{-1}}$ be a partition of item set $[m]$ taken uniformly at random, such that $|U_j| = \delta^{-2}$ for every $j \in [\delta^{-1}]$.
        \item Define $v_T(S) = \frac{1}{\delta} \cdot \max_{j \in [\delta^{-1}]} \left|U_j \cap S \right|$.
    \end{itemize}
\end{Example}

\begin{proof}[Proof of \Cref{thm:xos-hard}]
    We prove \Cref{thm:xos-hard} by showing \Cref{exp:xos-hard} is the desired hard instance.
    \paragraph{Feasibility of \Cref{exp:xos-hard}.} We first verify that \Cref{exp:xos-hard} is a feasible instance with XOS agents. For $t \leq m$, function $v_t(S)$ is unit-demand, and therefore must be XOS. For the last $v_T(S)$, it is taking maximum over a group of additive vectors $\boldsymbol{a}_1, \cdots, \boldsymbol{a}_{\delta^{-1}}$, such that for $j \in [\delta^{-1}]$, we have $\boldsymbol{a}_{j} = (\one[1 \in U_j], \one[2 \in U_j], \cdots, \one[m \in U_j])$. Therefore, the last valuation function $v_T(S)$ is also XOS.  

    \paragraph{A feasible solution $\tilde x$. } Next, we give a feasible solution $\tilde x$ of $\lpon$ for \Cref{exp:xos-hard}. We will further show that no online algorithm can achieve an expected reward greater than $(0.5 + \gamma) \cdot \lp(\tilde x)$. This is already sufficient to prove \Cref{thm:xos-hard}, as the value of $\lp(x^*)$ is at least $\lp(\tilde x)$. 

    For $t = 1, 2, \cdots, m$, we set $\tilde x_{t, \{t\}} = 1 - \delta$.\footnote{We omit the dimension $K$ for agents $t \in [m]$, as these are Bernoulli agents with a single type.} For the last agent $T$, let 
\[
N ~:=~ \frac{1}{(\delta^{-1})!} \cdot  \binom{m}{\underbrace{\delta^{-2}, \delta^{-2}, \ldots, \delta^{-2}}_{\delta^{-1} \text{ groups}}} 
\]
be the number of different partitions $\{U_1, \cdots, U_{\delta^{-1}}\}$  that partition $[m]$ into $\delta^{-1}$ subsets, with each subset containing $r$ items. Then, the last agent has exactly $N$ different types, and each type occurs with probability $1/N$. For each $k \in [N]$, define $\{U^k_1, \cdots, U^k_{\delta^{-1}}\}$ to be the corresponding partition of type $k$. Then, we set $\tilde x_{T, k, U^k_j} = \delta/N$ for every $k \in [N], j \in [\delta^{-1}]$.

 Now, we show that $\tilde x$ is feasible. We first verify constraint \eqref{eq:exante-constraint}. For every $i \in [m]$, we have
\begin{align*}
    \sum_{t \in [m]} \sum_{S \ni i} \tilde x_{t, S} + \sum_{k \in [N]} \sum_{S \ni i} \tilde x_{T, k, S} ~=~ 1 - \delta + N \cdot \frac{\delta}{N} ~=~ 1, 
\end{align*}
where we use the fact that $\tilde x_{i, \{i\}} = 1- \delta$ is the only non-zero term for $t = 1, \cdots, m$, and for $t = T$, since $\{U^k_1, \cdots, U^k_{\delta^{-1}}\}$ is a partition of $[m]$, for every type $k \in [N]$ there exists exactly one $S = U^k_j$ that satisfies $i \in S$ and $\tilde x_{T, k, S} \neq 0$. Since the corresponding $\tilde x_{T, k, S}$ equals to $\delta / N$, the second summation contributes $N \cdot \delta/N = \delta$. Therefore, constraint \eqref{eq:exante-constraint} is satisfied.

For constraint \eqref{eq:probability-constraint}, it's easy to check that our construction guarantees \eqref{eq:probability-constraint} becomes equal for every agent and every type. 

For constraint \eqref{eq:online-constraint}, it easy to check that the constraint holds for $t \in [m]$. For the last agent $t = T$, note that for every $i \in [m]$, we have
\[
1 - \sum_{t' < T} \sum_{S \ni i} \tilde x_{t', S} ~=~ \delta.
\]
Therefore, \eqref{eq:online-constraint} holds because we have
\begin{align*}
    \sum_{S \ni i} \tilde x_{t, k, S} ~=~ \frac{\delta}{N} ~=~ p_{t, k} \cdot \delta
\end{align*}
for every $i \in [m]$ and $k \in [N]$, where the first equality follows from the fact that $\{U^k_1, \cdots, U^k_{\delta^{-1}}\}$ is a partition of $[m]$, and there exists exactly one $S = U^k_j$ that satisfies $i \in S$ and $\tilde x_{T, k, S} \neq 0$.

\paragraph{Proving \Cref{thm:xos-hard}.} Finally, we prove \Cref{thm:xos-hard} by showing no online algorithm can achieve an expected reward of more than $(0.5 + \gamma) \cdot \lp(\tilde x)$, where we have
\begin{align*}
    \lp(\tilde x) ~=~ \sum_{t \in [m]} \tilde x_{t, \{t\}} \cdot 1 + \sum_{k \in [N]} \sum_{j \in [\delta^{-1}]} \tilde x_{T, k, U^k_j} \cdot \delta^{-3}  ~=~ m \cdot (2 - \delta).
\end{align*}
Here, the first equality uses the fact that $v_{t,k}(U^k_j) = \delta^{-1} \cdot |U^k_j| = \delta^{-3}$, and the second equality uses the fact that $\delta^{-3} = m$.

Now, we upper-bound the performance of the optimal online algorithm. Let $R$ be a (possibly random) subset of items that is still not allocated when agent $T$ arrives. Then, the optimal strategy for any online algorithm is to simply allocate $R$ to agent $T$, and collect $v_T(R)$.  Let $|R| = r$. We show that when $\delta$ is sufficiently small and $n$ is sufficiently large, there must be
\begin{align}
    \mathop{\E}\limits_{v_T}\left[v_T(R)\right] ~\leq~ r + \frac{2}{\delta^{2.75}}. \label{eq:vtsmall}
\end{align}

The proof of \eqref{eq:vtsmall} relies the following concentration inequality for sampling without replacement process:
\begin{Lemma}[Corollary 2.4.2 in \cite{GM-MOR16} and Theorem 2.14.19 in \cite{weakConvergence}]
\label{Bernstein}
    Let $Y = \{y_1, \cdots, y_u\}$ be a set of real numbers in the interval $[0, M]$. Let $S$ be a random subset of $Y$ of size $v$ and let $Y_S = \sum_{i \in S} y_i$. Setting $\mu = \frac{1}{u} \sum_{i} y_i$, 
    we have that for every $\tau > 0$,
    \[
    \pr\left[\left|Y_s - v \cdot \mu \right| \geq \tau \right] ~\leq~ 2\exp \left(-\frac{\tau^2}{M(4v\mu + \tau)} \right).
    \]
\end{Lemma}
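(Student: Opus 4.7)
The plan is to prove this Bernstein-type inequality for the subset sum $Y_S$ under sampling without replacement by reducing to the well-understood i.i.d.\ (with-replacement) model. The starting point is the classical comparison theorem of Hoeffding (1963): for every convex $\phi:\R\to\R$,
\[
\E[\phi(Y_S)] \leq \E[\phi(\tilde Y)], \qquad \tilde Y = \sum_{i=1}^{v} X_i,
\]
where $X_1,\dots,X_v$ are drawn i.i.d.\ uniformly from $Y = \{y_1,\dots,y_u\}$. Applying this comparison to $\phi(x)=e^{\lambda x}$ (convex for every real $\lambda$) shows that any moment-generating-function upper bound proved for $\tilde Y$ transfers directly to $Y_S$, so it suffices to establish a Bernstein tail bound in the with-replacement model.

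In that i.i.d.\ model, each $X_i$ satisfies $X_i\in[0,M]$, $\E X_i=\mu$, and $\E X_i^2 \leq M\mu$, so $\mathrm{Var}(X_i)\leq M\mu$. The standard Bernstein MGF lemma then yields, for every $0\leq\lambda<1/M$,
\[
\log \E\!\left[e^{\lambda(X_i-\mu)}\right] \leq \frac{\lambda^2 M\mu}{2(1-\lambda M)}.
\]
Summing over the $v$ independent summands and invoking the Hoeffding reduction gives
\[
\E\!\left[e^{\lambda(Y_S - v\mu)}\right] \leq \exp\!\left(\frac{v\lambda^2 M\mu}{2(1-\lambda M)}\right).
\]

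Applying Markov's inequality to $e^{\lambda(Y_S - v\mu)}$ and optimizing with the canonical Bernstein calibration $\lambda = \tau/(M(2v\mu + \tau))$ produces the one-sided bound
\[
\pr\!\left[Y_S - v\mu \geq \tau\right] \leq \exp\!\left(-\frac{\tau^2}{M(4v\mu + 2\tau)}\right),
\]
which, after loosening constants slightly in the denominator, matches the stated form. The lower tail is handled identically by substituting $-\lambda$ (note that $x\mapsto e^{-\lambda x}$ is again convex, so Hoeffding's reduction still applies), and a union bound over the two tails contributes the factor~$2$. The main obstacle is not any single analytic step but the constant-tracking in the final Chernoff optimization: slightly different versions of the MGF lemma, or alternative choices of $\lambda$, yield bounds whose denominators differ in their universal constants, so matching the precise $4v\mu+\tau$ form is essentially a careful bookkeeping exercise rather than a new idea.
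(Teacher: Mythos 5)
Your overall route---Hoeffding's 1963 convex-ordering reduction from sampling without replacement to i.i.d.\ sampling, followed by a Bernstein-type MGF bound and a Chernoff argument---is sound, and it is essentially the standard way this statement is proved; note the paper does not prove the lemma at all but imports it from the cited references. The genuine problem is in your final calibration, and it is not mere bookkeeping. With the MGF bound $\log\E\bigl[e^{\lambda(X_i-\mu)}\bigr]\le \frac{\lambda^2 M\mu}{2(1-\lambda M)}$ and your choice $\lambda=\tau/(M(2v\mu+\tau))$, the exponent you actually get is $-\frac{3\tau^2}{4M(2v\mu+\tau)}$, which you then weaken to $-\frac{\tau^2}{M(4v\mu+2\tau)}$. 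The stated exponent is $-\frac{\tau^2}{M(4v\mu+\tau)}$, which is \emph{more} negative: passing from denominator $4v\mu+2\tau$ to $4v\mu+\tau$ is a tightening, not a ``loosening,'' so your last sentence asserts an inequality in the wrong direction. Moreover, even optimizing $\lambda$ exactly, the MGF bound with denominator $1-\lambda M$ gives Chernoff exponent $\frac{v\mu}{M}\bigl(1+x-\sqrt{1+2x}\bigr)$ with $x=\tau/(v\mu)$, which for $\tau\gg v\mu$ behaves like $\frac{\tau}{M}-\frac{\sqrt{2\tau v\mu}}{M}$ and falls below the target $\frac{\tau^2}{M(4v\mu+\tau)}\approx\frac{\tau}{M}-\frac{4v\mu}{M}$ once $\tau\gtrsim 8v\mu$. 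So your specific chain cannot reach the stated constant precisely in the large-deviation regime, which is the regime the paper actually needs in the proof of \Cref{thm:xos-hard} (there $v\mu=\delta^{-2}r/m$ can be far smaller than $\tau=\delta^{-1.75}$).

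The repair stays entirely within your framework: replace the MGF lemma with the sharper Bernstein/Bennett form $\log\E\bigl[e^{\lambda(X_i-\mu)}\bigr]\le \frac{\lambda^2 M\mu/2}{1-\lambda M/3}$ for $0<\lambda<3/M$, valid here since $X_i-\mu\le M$ and $\E\bigl[(X_i-\mu)^2\bigr]\le \E\bigl[M X_i\bigr]=M\mu$. Its optimized Chernoff bound is the classical
\[
\pr\left[\tilde Y-v\mu\ \geq\ \tau\right]\ \leq\ \exp\left(-\frac{\tau^2}{2vM\mu+\tfrac{2}{3}M\tau}\right)
\ =\ \exp\left(-\frac{\tau^2}{M\bigl(2v\mu+\tfrac{2}{3}\tau\bigr)}\right),
\]
and since $2v\mu+\tfrac{2}{3}\tau\le 4v\mu+\tau$ for all $\tau>0$, this dominates the stated bound. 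The transfer to $Y_S$ via Hoeffding's comparison with $\phi(x)=e^{\lambda x}$, the symmetric lower-tail argument (using $\mu\le M$ and convexity of $e^{-\lambda x}$), and the union bound giving the factor $2$ then go through exactly as you describe.
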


We use \Cref{Bernstein} to prove \eqref{eq:vtsmall}: Fix subset $R$ with $|R| = r$. Let $\{U_1, \cdots, U_{\delta^{-1}}\}$ be a partition of $[m]$ generated uniformly at random, such that $|U_j| = \delta^{-2}$ for every $j \in [\delta^{-1}]$. Note that for a single subset $U_j$, it is generated by a random process that samples $\delta^{-2}$ items from the set $[m]$ without replacement. Therefore, applying \Cref{Bernstein} with $u = m$, $M = 1$, $y_i = \one[i \in R]$ for $i \in [m]$, and $v = \delta^{-2}$, we have
\begin{align*}
    \pr\left[|R\cap U_j| - \delta^{-2} \cdot  \frac{r}{m} \geq \delta^{-1.75} \right] ~&\leq~ 2 \exp \left(-\frac{\delta^{-3.5}}{4\delta^{-2} \cdot r/m + \delta^{-1.75}} \right) \\
    ~&\leq~ 2 \exp \left(-\frac{1}{5\delta^{1.5}} \right) \\
    ~&\leq~ 2 \exp \left(-10\log(\delta^{-1}) \right) ~\leq~ 2\delta^{10},
\end{align*}
where the second line uses the fact that $r = |R| \leq m$, and the last line holds when $\delta$ is sufficiently small. With the above inequality, we can lower-bound the expectation of $v_T(R)$ as the following: We have
\begin{align*}
    \delta \cdot \mathop{\E}\limits_{v_T}\left[v_T(R)\right] ~&=~   \mathop{\E}\limits_{\{U_1, \cdots, U_{\delta^{-1}}\}}\left[\max_{j \in [\delta^{-1}]} |U_j \cap R|\right] \\
    ~&\leq~ \delta^{-2} \cdot r/m + \delta^{-1.75} + \delta^{-2} \cdot \pr\left[\max_{j \in [\delta^{-1}]} |U_j \cap R| > \delta^{-2} \cdot r/m + \delta^{-1.75}\right] \\
    ~&\leq~ \delta^{-2} \cdot r/m + \delta^{-1.75} + \delta^{-2} \cdot \frac{1}{\delta} \cdot 2\delta^{10} \\
    ~&\leq~  \delta^{-2} \cdot r/m + 2\delta^{-1.75},
\end{align*}
where we use the fact that $\max_{j \in [\delta^{-1}]} |U_j \cap R| \leq \max_{j \in [\delta^{-1}]} U_j = \delta^{-2}$ in the second line, and the union bound over all $j \in [\delta^{-1}]$ in the third line. Rearranging the above inequality and applying $m = \delta^{-3}$ proves \eqref{eq:vtsmall}.

With \eqref{eq:vtsmall}, we are ready to finish the proof of \Cref{thm:xos-hard}. For any online algorithm for \Cref{exp:xos-hard}, let $R$ be the random set that the algorithm leaves for agent $T$, and let $r = |R|$. Then, the total reward gained by the algorithm is at most
\begin{align*}
    \E_R \left[(m - r) + r + 2\delta^{-2.75} \right] ~=~ m \cdot \left(1 + 2\delta^{0.25}\right) ~\leq~ (2 - \delta) \cdot (0.5 + \gamma) \cdot m~=~ (0.5 +\gamma) \cdot\lp(\tilde x),
\end{align*}
where the inequality holds when we take a sufficiently small $\delta$ that satisfies $0.5\delta + 2\delta^{0.25} \leq \gamma$.
\end{proof}

\bigskip

\bibliographystyle{alpha}
\bibliography{ref.bib,bib.bib}

\newpage

\appendix

\section{Algorithm for Large $\we$: Proof of \Cref{lma:welarge}}
\label{sec:welarge}

In this section, we prove \Cref{lma:welarge}, which is restated below for convenience:

\welarge*

\begin{proof}
Let $\bssipi_i$ be the single-item instance in \Cref{alg:submod-half} for item $i$. Recall that \Cref{lma:reward-submod-to-si} suggested that the total reward achieved by \Cref{alg:submod-half} is at least $\sum_{i \in [m]} \opt(\bssipi_i)$. Therefore, it's sufficient to show
\begin{align*}
    \sum_{i \in [m]} \opt(\bssipi_i) ~\geq~ (0.5 - 13\epsilon^{1/4} - \epseh^2) \cdot \lp(x^*) + \epseh \cdot \we.
\end{align*}

Define notation
\[
\we(i) ~:=~ \sum_{(t, k, S)} x^*_{t, k, S} \cdot w_{t, k, S}(i) \cdot \one[(t, k, S, i) \in \freasy]
\]
to be the amount that item $i$ contributes to $\we$. Then, there must be $\sum_{i \in [m]} \we(i) = \we$. For $i \in [m]$ such that $\we(i) \leq 0.5\epseh \cdot \lp_i(x^*)$, we simply lower-bound $\opt(\bssipi_i)$ as $0.5 \lp_i(x^*)$. For the remaining items that satisfy $\we(i) > 0.5\epseh \cdot \lp_i(x^*)$, we will give an explicit algorithm and lower-bound the value of $\opt(\bssipi_i)$ as the performance of that explicit algorithm. Note that condition $\we(i) > 0.5\epseh \cdot \lp_i(x^*)$ can be satisfied only when $i \in U$, because there must be $\we(i) = 0$ for $i \notin U$. Therefore, we assume $i \in U$ in the remaining part of the proof.

To give the explicit algorithm for $\bssipi_i$, we start from defining $t_i \in [T]$ to be the index of an agent that satisfies the following: 
    \[
    \sum_{t = t_i}^T \sum_{k \in [K]} \sum_{S \ni i}x^*_{t, k, S} \geq \epseh \quad \text{and} \quad \sum_{t = t_i + 1}^T \sum_{k \in [K]} \sum_{S \ni i}x^*_{t, k, S}< \epseh.
    \]
Such index $t_i$ must exist, as when $t_i = T$, the second condition must be satisfied (as the summation is $0$); and when $t_i = 1$, the first condition must satisfied (as we recall the summation is assumed to be $1$), and the summations in both conditions are monotone. With parameter $t_i$, the algorithm for instance $\bssipi_i$ is as follows:
\begin{itemize}
    \item When $t < t_i$ and value $w_{t, k, S}(i)$ arrives, pick the value if $i$ is available and $(t, k, S, i) \in \fr$.
    \item When $t \geq t_i$ and value $w_{t, k, S}(i)$ arrives, pick the value if $i$ is available.
\end{itemize}
Since $\opt(\bssipi_i)$ is lower-bounded by the performance of the above algorithm, we have
\begin{align}
    \opt(\bssipi_i) ~\geq~& \sum_{t = 1}^{t_i - 1} \sum_{k \in [K]} \sum_{S \ni i} x^*_{t, k, S} \cdot w_{t, k, S}(i) \cdot \one[(t, k, S, i) \in \fr] \cdot \one[i \text{ is available for }t] \notag\\
    &+ \sum_{t = t_i}^{T} \sum_{k \in [K]} \sum_{S \ni i} x^*_{t, k, S} \cdot w_{t, k, S}(i)  \cdot \pr[i \text{ is available for }t] \notag \\
    ~=~& \sum_{t = 1}^{t_i} \sum_{k \in [K]} \sum_{S \ni i} x^*_{t, k, S} \cdot w_{t, k, S}(i) \cdot \one[(t, k, S, i) \in \fr] \cdot \pr[i \text{ is available for }t] \label{eq:welarge-parta} \\
    &+  \sum_{k \in [K]} \sum_{S \ni i} x^*_{t_i, k, S} \cdot w_{t_i, k, S}(i)  \cdot \one[(t_i, k, S, i) \in \dt] \cdot \one[i \text{ is available for }t_i] \label{eq:welarge-partb} \\
    &+ \sum_{t = t_i + 1}^T \sum_{k \in [K]} \sum_{S \ni i} x^*_{t, k, S} \cdot w_{t, k, S}(i)  \cdot \one[(t, k, S, i) \in \fr] \cdot \pr[i \text{ is available for }t]\label{eq:welarge-partc} \\
    &+ \sum_{t = t_i + 1}^T \sum_{k \in [K]} \sum_{S \ni i} x^*_{t, k, S} \cdot w_{t, k, S}(i)  \cdot \one[(t, k, S, i) \in \dt] \cdot \pr[i \text{ is available for }t]. \label{eq:welarge-partd} 
\end{align}
Intuitively speaking, we decompose the total reward collected by our algorithm into four parts:
\begin{itemize}
    \item \eqref{eq:welarge-parta} represents the free value we collect from agent $1$ to agent $t_i$. 
    \item \eqref{eq:welarge-partb} represents the deterministic value we collect from agent $t_i$.
    \item \eqref{eq:welarge-partc} represents the free value we collect from agent $t_i + 1$ to agent $T$.
    \item \eqref{eq:welarge-partd} represents the deterministic value we collect from agent $t_i + 1$ to agent $T$.
\end{itemize}

Next, we lower-bound \eqref{eq:welarge-parta}, \eqref{eq:welarge-partb}, \eqref{eq:welarge-partc}, and \eqref{eq:welarge-partd} separately.

\paragraph{Bounding \eqref{eq:welarge-parta}.} We first lower-bound $\pr[i \text{ is available for }t]$. Note that when $t < t_i$, the algorithm only picks free values. Therefore, for $t \in [1, t_i]$, the probability that $i$ is still available for $t$ is at least the probability that no free value arrives between $[1, t_i - 1]$. By the union bound, this is at least
\[
1 - \sum_{t = 1}^T \sum_{k \in [K]} \sum_{S \ni i} x^*_{t, k, S} \cdot \one[(t, k, S, i) \in \fr] ~\geq~ 1 - \epsilon^{1/4},
\]
where the last inequality follows from \Cref{lma:property1}. Then, we have
\[
\eqref{eq:welarge-parta} ~\geq~ (1 - \epsilon^{1/4}) \cdot \sum_{t = 1}^{t_i} \sum_{k \in [K]} \sum_{S \ni i} x^*_{t, k, S} \cdot w_{t, k, S}(i) \cdot \one[(t, k, S, i) \in \fr].
\]
Next, note that for $t \leq t_i$, there must be
\[
\sum_{s < t} \sum_{k' \in [K]} \sum_{S' \ni i} x^*_{s, k', S'} ~\leq~ 1 - \sum_{t' =  t_i}^T \sum_{k' \in [K]} \sum_{S' \ni i} x^*_{t', k', S'} ~\leq~ 1 - \epseh,
\]
where the first inequality uses the fact that we assume $\sum_{t', k', S' \ni i} x^*_{t', k', S'} = 1$. Symmetrically, for $t > t_i$, there must be
\[
\sum_{s < t} \sum_{k' \in [K]} \sum_{S' \ni i} x^*_{s, k', S'} ~\geq~ 1 - \sum_{t' =  t_i + 1}^T \sum_{k' \in [K]} \sum_{S' \ni i} x^*_{t', k', S'} > 1 - \epseh.
\]
Then, the condition $(t, k, S, i) \in \freasy$ is equivalent to $(t, k, S, i) \in \fr \land t_i \leq t$. Therefore, we can further lower-bound \eqref{eq:welarge-parta} as
\begin{align}
    \eqref{eq:welarge-parta} ~\geq~ (1 - \epsilon^{1/4}) \cdot \sum_{t = 1}^{t_i} \sum_{k \in [K]} \sum_{S \ni i} x^*_{t, k, S} \cdot w_{t, k, S}(i) \cdot \one[(t, k, S, i) \in \freasy ] ~=~ (1 - \epsilon^{1/4}) \cdot \we(i), \label{eq:parta-final}
\end{align}
where the last equality follows from the definition of $\we$ and the observation that $\one[(t, k, S_i) \in \freasy] = \one[(t, k, S_i) \in \fr \land t \leq t_i]$.

\paragraph{Bounding \eqref{eq:welarge-partb}.} Recall that \eqref{eq:welarge-partb} represents the deterministic value we collect from agent $t_i$. Since the algorithm only picks free values from agent $t \in [1, t_i - 1]$, item $i$ should be available for agent $t_i$ if no free value arrives, i.e., we have $\pr[i \text{ is available for }t_i] \geq 1 - \epsilon^{1/4}$. Then,
\[
\eqref{eq:welarge-partb} ~\geq~ (1 - \epsilon^{1/4}) \cdot \sum_{k \in [K]} \sum_{S \ni i} x^*_{t_i, k, S} \cdot w_{t_i, k, S}(i)  \cdot \one[(t_i, k, S, i) \in \dt].
\]
To further bound the above inequality, we give the following \Cref{clm:dt-high}

\begin{Claim}
    \label{clm:dt-high}
    Fix $i$. Let $\dt'$ be a subset of $\dt$ that satisfies the following: For every $(t, k, S, i') \in \dt'$, there must be $i' = i$. Then,
    \[
     \sum_{(t, k, S, i) \in \dt'} x^*_{t, k, S} \cdot w_{t, k, S}(i) ~\geq~ \Big(0.5\cdot \sum_{(t, k, S, i) \in \dt'} x^*_{t, k, S} - 4\epsilon^{1/4}\Big) \cdot \lp_i(x^*)
    \]
\end{Claim}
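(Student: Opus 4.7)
The idea is purely bookkeeping from \Cref{lma:property1}, combined with the pointwise upper bound that \emph{defines} the deterministic set. Let me abbreviate $V_A := \sum_{(t,k,S,i)\in A} x^*_{t,k,S}\, w_{t,k,S}(i)$ and $P_A := \sum_{(t,k,S,i)\in A} x^*_{t,k,S}$ for any collection $A$ of tuples (with $i$ fixed). Let $D$ denote the slice of $\dt$ with $i'=i$, so $\dt' \subseteq D$, and let $F$ be the analogous slice of $\fr$. Since $\lp_i(x^*) = V_F + V_D$ and \Cref{lma:property1} gives $V_F \leq (0.5 + 2\epsilon^{1/4})\lp_i(x^*)$, we immediately have
\[
V_D ~\geq~ (0.5 - 2\epsilon^{1/4})\cdot \lp_i(x^*).
\]

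The second ingredient is the pointwise bound that comes from the definition of $\dt$: for every $(t,k,S,i)\in D$, we have $w_{t,k,S}(i) \leq (0.5 + 2\epsilon^{1/4})\lp_i(x^*)$. Applied to the complementary part $D \setminus \dt'$, this yields
\[
V_{D \setminus \dt'} ~\leq~ (P_D - P_{\dt'}) \cdot (0.5 + 2\epsilon^{1/4}) \cdot \lp_i(x^*).
\]

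The plan is then to combine these via $V_{\dt'} = V_D - V_{D \setminus \dt'}$. Dividing by $\lp_i(x^*)$ I get
\[
\frac{V_{\dt'}}{\lp_i(x^*)} ~\geq~ (0.5 - 2\epsilon^{1/4}) - (P_D - P_{\dt'})(0.5 + 2\epsilon^{1/4}) ~=~ 0.5\,P_{\dt'} + 0.5(1 - P_D) - 2\epsilon^{1/4}(1 + P_D - P_{\dt'}).
\]
The final step is to absorb error terms: the term $0.5(1 - P_D)$ is nonnegative and can be dropped in a lower bound, while $1 + P_D - P_{\dt'} \leq 2$ (since $P_D, P_{\dt'} \in [0,1]$), giving $-2\epsilon^{1/4}(1 + P_D - P_{\dt'}) \geq -4\epsilon^{1/4}$. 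This yields exactly $V_{\dt'} \geq (0.5\, P_{\dt'} - 4\epsilon^{1/4})\lp_i(x^*)$.

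There is no real obstacle here; the statement is essentially a quantitative reformulation of the fact that the deterministic part consists of values that are each bounded by (roughly) $0.5\,\lp_i(x^*)$ and together account for (roughly) $0.5\,\lp_i(x^*)$ of the total weight. The only subtlety is keeping the two $2\epsilon^{1/4}$ slacks from \Cref{lma:property1} straight (one on the total free contribution, one on the per-tuple value in the deterministic set), and confirming they combine to the stated $4\epsilon^{1/4}$ loss.
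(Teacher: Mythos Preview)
Your proposal is correct and essentially the same as the paper's proof. Both arguments use exactly the two ingredients you identify---the upper bound $V_F \le (0.5+2\epsilon^{1/4})\lp_i(x^*)$ from \Cref{lma:property1} and the pointwise bound $w_{t,k,S}(i)\le (0.5+2\epsilon^{1/4})\lp_i(x^*)$ on deterministic tuples---and combine them by elementary bookkeeping; the paper writes a single chain starting from $\lp_i(x^*)=V_{\dt'}+V_F+V_{D\setminus\dt'}$ and bounds the last two terms by $(0.5+2\epsilon^{1/4})(2-q)\lp_i(x^*)$, which is algebraically the same computation you do after dropping $0.5(1-P_D)\ge 0$ and using $1+P_D-P_{\dt'}\le 2$.
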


\begin{proof}
     For simplicity of the notation, define $q = \sum_{(t, k, S, i) \in \dt'} x^*_{t, k, S}$. Then,
     \begin{align*}
         \lp_i(x^*) ~=~& \sum_{t, k, S} x^*_{t, k, S} \cdot  w_{t, k, S}(i) \cdot \one[(t, k, S, i) \in \dt'] + \sum_{t, k, S} x^*_{t, k, S} \cdot  w_{t, k, S}(i) \cdot \one[(t, k, S, i) \in \fr] \\
         +& \sum_{t, k, S} x^*_{t, k, S} \cdot  w_{t, k, S}(i) \cdot \one[(t, k, S, i) \in \dt \setminus \dt'] \\
         ~\leq~& \sum_{(t, k, S, i) \in \dt'} x^*_{t, k, S} \cdot w_{t, k, S}(i) + (0.5 + 2\epsilon^{1/4}) \cdot \lp_i(x^*) \\
         +& \sum_{t, k, S} x^*_{t, k, S} \cdot (0.5 + 2\epsilon^{1/4}) \cdot \lp_i(x^*) \cdot \one[(t, k, S, i) \notin \dt'] \\
         ~=~& \sum_{(t, k, S, i) \in \dt'} x^*_{t, k, S} \cdot w_{t, k, S}(i) + (0.5 + 2\epsilon^{1/4}) \cdot \left(\lp_i(x^*) + (1 -q)\cdot \lp_i(x^*) \right),
     \end{align*}
     where the inequality follows from the conditions in \Cref{lma:property1} for $i \in U$. Rearrange the inequality, we have
     \begin{align*}
         \sum_{(t, k, S, i) \in \dt'} x^*_{t, k, S} \cdot w_{t, k, S}(i) ~\geq~ (0.5q - 4\epsilon^{1/4}) \cdot \lp_i(x^*)
     \end{align*}
    which gives \Cref{clm:dt-high} after plugging the definition of $q$ back into the above expression.
\end{proof}

For simplicity of the notation, define
\[
q_1(i) ~:=~ \sum_{k \in [K]} \sum_{S \ni i} x^*_{t_i, k, S} \cdot \one[(t_i, k, S, i) \in \dt]
\]
to be the total probability mass of deterministic values from agent $t_i$. 
Applying \Cref{clm:dt-high} with $\dt' = \{(t', k', S', i') \in \dt: t' = t_i \land i' = i\}$, we have
\begin{align}
    \eqref{eq:welarge-partb} ~\geq~ (1 - \epsilon^{1/4}) \cdot (0.5q_1(i) -4\epsilon^{1/4}) \cdot \lp_i(x^*) ~\geq~ (0.5q_1(i) - 4.5\epsilon^{1/4}) \cdot \lp_i(x^*),\label{eq:partb-final}
\end{align}
where we use the fact that $q_1(i) \leq 1$ in the last inequality.

\paragraph{Bounding \eqref{eq:welarge-partc}.} 
We first bound $\pr[i \text{ is available for }t]$ for $t \in [t_i + 1, T]$. Recall that \eqref{eq:welarge-partc} represents the free value we collect from agent $t \in [t_i + 1, T]$. 
 Then, $i$ must be available for $t$ if the following three events do not happen:
\begin{itemize}
    \item Value $w_{t, k, S}(i)$ that satisfies $(t, k, S, i) \in \fr$ is sent to $\bssipi_i$.
    \item Value $w_{t_i, k, S}(i)$ that satisfies $(t_i, k, S, i) \in \dt$ is sent to $\bssipi_i$.
    \item Value $w_{t,k, S}(i)$ that satisfies $(t, k, S, i) \in \dt$ and $t \in [t_i + 1, T]$  is sent to $\bssipi_i$.
\end{itemize}

For simplicity of the notation, define
\[
q_2(i) ~:=~ \sum_{t = t_i + 1}^T \sum_{k \in [K]} \sum_{S \ni i} x^*_{t, k, S} \cdot \one[(t, k, S, i) \in \dt]
\]
to be the total probability mass of deterministic values from agent $t \in [t_i + 1, T]$. By the union bound and \Cref{lma:property1}, the first event happens with probability at most $\epsilon^{1/4}$. By the union bound and the definitions of $q_1(i)$ and $q_2(i)$, the second event happens with probability at most $q_1(i)$, and the third event happens with probability at most $q_2(i)$. Therefore, we have 
\[
\pr[i \text{ is available for }t] ~\geq~ 1 - \epsilon^{1/4} - q_1(i) - q_2(i).
\]
Plugging the above inequality into \eqref{eq:welarge-partc}, we have
\begin{align}
    \eqref{eq:welarge-partc} ~\geq~& (1 - \epsilon^{1/4} - q_1(i) - q_2(i)) \cdot \sum_{t = t_i + 1}^T \sum_{k \in [K]} \sum_{S \ni i} x^*_{t, k, S} \cdot w_{t, k, S}(i)  \cdot \one[(t, k, S, i) \in \fr] \notag \\
    ~=~& (1 - \epsilon^{1/4} - q_1(i) - q_2(i)) \cdot \sum_{t = 1}^T \sum_{k \in [K]} \sum_{S \ni i} x^*_{t, k, S} \cdot w_{t, k, S}(i)  \cdot \one[(t, k, S, i) \in \fr] \notag \\
    &~- (1 - \epsilon^{1/4} - q_1(i) - q_2(i)) \cdot \sum_{t = 1}^{t_i} \sum_{k \in [K]} \sum_{S \ni i} x^*_{t, k, S} \cdot w_{t, k, S}(i)  \cdot \one[(t, k, S, i) \in \fr] \notag \\
    ~\geq~& (1 - \epsilon^{1/4} - q_1(i) - q_2(i)) \cdot \left((0.5 - 2\epsilon^{1/4}) \cdot \lp_i(x^*) - \we(i) \right) \notag \\
    ~\geq~& (0.5 - 2.5\epsilon^{1/4} - 0.5q_1(i) - 0.5 q_2(i)) \cdot \lp_i(x^*) - (1 - q_1(i) - q_2(i)) \cdot \we(i),  \label{eq:partc-final}
\end{align}
where the first inequality uses \Cref{lma:property1} and the definition of $\we(i)$.

\paragraph{Bounding \eqref{eq:welarge-partd}.} Recall that \eqref{eq:welarge-partd} represents the deterministic value we collect from agent $t \in [t_i + 1, T]$. Therefore, similar to the proof for \eqref{eq:welarge-partc}, there must be
\[
\pr[i \text{ is available for } t] ~\geq~ 1 - \epsilon^{1/4} - q_1(i) - q_2(i)
\]
for $t \in [t_i + 1, T]$. Applying the above inequality to \eqref{eq:welarge-partd}, we have
\[
\eqref{eq:welarge-partd} ~\geq~ (1 - \epsilon^{1/4} - q_1(i) - q_2(i)) \cdot\sum_{t = t_i + 1}^T \sum_{k \in [K]} \sum_{S \ni i} x^*_{t, k, S} \cdot w_{t, k, S}(i)  \cdot \one[(t, k, S, i) \in \dt].
\]
 Next,  applying \Cref{clm:dt-high} with $\dt' = \{(t', k', S', i') \in \dt : t' \geq t_i + 1 \land i' = i\}$ to the above inequality gives
\begin{align}
\eqref{eq:welarge-partd} ~&\geq~ (1 - \epsilon^{1/4} - q_1(i) - q_2(i)) \cdot \left(0.5 q_2(i) - 4\epsilon^{1/4} \right) \cdot \lp_i(x^*). \notag \\
~&\geq~ \left(0.5 q_2(i) - 5 \epsilon^{1/4} - 0.5q_2(i) \cdot (q_1(i) + q_2(i))\right) \cdot \lp_i(x^*).\label{eq:partd-final}
\end{align}

\paragraph{Putting everything together.} Finally, we sum the bounds for \eqref{eq:welarge-parta}, \eqref{eq:welarge-partb}, \eqref{eq:welarge-partc}, and \eqref{eq:welarge-partd} together. We will further apply the following two inequalities to simplify the calculation:
\[
q_2(i) ~\leq~ \epseh \quad \text{and} \quad  \epseh ~\leq~ q_1(i) + q_2(i) ~\leq~ 1,
\]
where both inequalities follow from the definition of $q_1(i)$, $q_2(i)$, and $t_i$.

Summing \eqref{eq:parta-final}, \eqref{eq:partb-final}, \eqref{eq:partc-final}, and \eqref{eq:partd-final} together gives
\begin{align*}
    \opt(\bssipi_i) ~\geq~& (0.5 - 12\epsilon^{1/4}) \cdot \lp_i(x^*)  - \epsilon^{1/4} \cdot \we(i) \\
    &+~ (q_1(i) + q_2(i)) \cdot (\we(i) - 0.5 q_2(i) \cdot \lp_i(x^*)) \\
    ~\geq~& (0.5 - 13\epsilon^{1/4}) \cdot \lp_i(x^*) + (q_1(i) + q_2(i)) \cdot (\we(i) - 0.5\epseh \cdot \lp_i(x^*)) \\
    ~\geq~& (0.5 - 13\epsilon^{1/4}) \cdot \lp_i(x^*) + \epseh \cdot (\we(i) - 0.5\epseh \cdot \lp_i(x^*)),
\end{align*}
where the second inequality upper-bounds $q_2(i)$ by $\epseh$, and the last inequality lower-bounds $q_1(i) + q_2(i)$ by $\epseh$, which is applicable here because we assume $\we(i) > 0.5\epseh \cdot \lp_i(x^*)$.

Recall that the above bound for $\opt(\bssipi_i)$ assumes $\we(i) > 0.5 \epseh \cdot \lp_i(x^*)$, and we simply lower-bound $\opt(\bssipi_i)$ as $0.5 \lp_i(x^*)$ when $\we(i) \leq 0.5 \epseh \cdot \lp_i(x^*)$, i.e., we have
\begin{align*}
    \opt(\bssipi_i) ~\geq~ \begin{cases} 0.5 \lp_i(x^*) & \text{if }\we(i) \leq 0.5 \epseh \cdot \lp_i(x^*),  \\ (0.5 - 13\epsilon^{1/4} - 0.5 \epseh^2) \cdot \lp_i(x^*) + \epseh \cdot \we(i)  & \text{if }\we(i) > 0.5 \epseh \cdot \lp_i(x^*).
\end{cases}~
\end{align*}
Combining the above two cases, we can generally lower-bound $\opt(\bssipi_i)$ as
\[
\opt(\bssipi_i) ~\geq~ (0.5 - 13\epsilon^{1/4} - 0.5 \epseh^2) \cdot \lp_i(x^*) + \one[\we(i) > 0.5\epseh\cdot \lp_i(x^*)] \cdot \epseh \cdot \we(i).
\]
Summing the above inequality for $i \in [m]$, we have
\begin{align*}
    \sum_{i \in [m]} \opt(\bssipi_i) ~\geq~& (0.5 - 13\epsilon^{1/4} - 0.5 \epseh^2) \cdot \lp(x^*) + \sum_{i \in [m]} \epseh \cdot \we(i) \\
    &-~ \sum_{i \in [m]} \one[\we(i) \leq 0.5 \epseh \cdot \lp_i(x^*)] \cdot \epseh \cdot \we(i) \\
    ~\geq~& (0.5 - 13\epsilon^{1/4} - 0.5 \epseh^2) \cdot \lp(x^*) + \epseh \cdot \we - \epseh \cdot \sum_{i \in [m]} 0.5 \epseh \cdot \lp_i(x^*) \\
    ~\geq~& (0.5 - 13\epsilon^{1/4} - \epseh^2) \cdot \lp(x^*) + \epseh \cdot \we. \qedhere
\end{align*}
\end{proof}

\section{Omitted Proofs in \Cref{sec:prelim}}
\label{sec:prelim-appendix}

\subsection{Proof of \Cref{lma:polytime-solvable}}

\polysolve*

\begin{proof}
    For convenience, we restate the linear program $\lpon$ below (note the online constraint \eqref{eq:online-constraint} is rearranged):
\begin{align}
    \text{maximize } \quad \quad  &    \quad\sum_{t \in [T]}  \sum_{k \in [K]} \sum_{S \subseteq [m]}~ v_{t,k}(S) \cdot x_{t,k, S}, \tag{$\lpon$}  \\
    \text{s.t.}\qquad \forall i \in [m], &  \quad   \sum_{t \in [T]} \sum_{k \in [K]} \sum_{S \ni i}~  x_{t,k,S}  ~\leq~ 1  \notag  \\
    \forall t \in [T], k \in [K], &   \quad \sum_{S \subseteq [m]}~ x_{t,k, S} ~\leq~ p_{t, k} \notag\\
    \forall t \in [T],  k \in [K], i \in [m],&  \quad  \sum_{t' < t} \sum_{k' \in [K]} \sum_{S \ni i} p_{t,k} \cdot x_{t',k',S} + \sum_{S \ni i} x_{t, k, S} ~\leq~ p_{t,k} \notag \\
    \forall t \in [T],  k \in [K], S \subseteq [m],&  \quad  0 ~\leq~ x_{t, k, S} ~\leq~ 1. \notag
\end{align}
Let $\alpha_i, \beta_{t,k}, \gamma_{t,k,i}$ be the dual variables of the above LP. Then, the dual of $\lpon$ is as follows:
\begin{align}
    \text{minimize } \quad \quad  &    \quad\sum_{t \in [T]}  \sum_{k \in [K]} \sum_{S \subseteq [m]}~ v_{t,k}(S) \cdot x_{t,k, S}, \tag{$\lpdual$}  \\
    \text{s.t.}\qquad \forall t \in [T],  k \in [K], S \subseteq [m], &  \quad   \beta_{t,k}+ \sum_{i \in S} \left(\alpha_i + \gamma_{t,k,i} + \sum_{t' > t} \sum_{k' \in [K]}  p_{t', k'} \cdot \gamma_{t,k,i} \right)  ~\leq~ v_{t,k}(S)  \notag  \\
    \forall i \in [m], &  \quad   \alpha_i \geq 0  \notag \\
     \forall t \in [T], k  \in [K], &  \quad   \beta_{t,k} \geq 0  \notag \\
      \forall t \in [T], k  \in [K], i \in [m], &  \quad   \gamma_{t,k,i} \geq 0  \notag 
\end{align}
Note that a demand oracle access to every function $v_{t, k}(S)$ gives a separation oracle to $\lpdual$: Given $\{\alpha_i\}, \{\beta_{t,k}\}, \{\gamma_{t,k,i}\}$, we set 
\[
\rho_i = \alpha_i + \gamma_{t,k,i} + \sum_{t' > t} \sum_{k' \in [K]}  p_{t', k'} \cdot \gamma_{t,k,i},
\]
and call the demand oracle to find subset $S_{t,k}$, such that $v_{t,k}(S_{t,k}) - \sum_{i \in S_{t,k}} \rho_i$ is maximized. Then, the given solution of $\lpdual$ is feasible if and only if $v_{t,k}(S_{t,k}) - \sum_{i \in S_{t,k}} \rho_i \geq \beta_{t,k}$ is satisfied for every $t \in [T], k \in [K]$. By using the Ellipsoid method, $\lpdual$ is solvable  in $\poly(T,k,m)$ time.

Next, we show the optimal solution $\{x^*_{t,k, S}\}$ of $\lpon$ has $\poly(T,K, m)$ non-zero entries. By complementary slackness, it's sufficient to assume that $x^*_{t,k,S} > 0$ only when the constraint corresponding to the variable $x^*_{t,k, S}$ is encountered when solving $\lpdual$. Since the Ellipsoid method finishes in $\poly(T,k,m)$ time, at most $\poly(T,k,m)$ constraints in $\lpdual$ are encountered. Implying that $\{x^*_{t,k, S}\}$ has $\poly(T,K, m)$ non-zero entries.
\end{proof}

\subsection{Properties of Submodular Functions}

\restrict*

\begin{proof}
    The non-negativity of $f(B \mid A)$ follows from the fact that $A \subseteq B \cup A$, and function $f$ is monotone. The monotonicity of $f(B \mid A)$ follows from the fact that for any $B \subseteq B' \subseteq [m]$, there must be $B \cup A \subseteq B' \cup A$. Then, $f(B' \mid A) - f(B \mid A) = f(B' \cup A) - f(B \cup A) \geq 0$. 

    For the submodularity of $f(B \mid A)$, Let $B \subseteq B'$ and $i \in [m]$. We aims at showing
    \[
    f(B \cup \{i\} \mid A) - f(B \mid A) ~\geq~ f(B' \cup \{i\} \mid A) - f(B' \mid A),
    \]
    which is equivalent to showing
    \[
    f(B \cup \{i\} \cup A) + f(B' \cup A) ~\geq~ f(B' \cup \{i\} \cup A) + f(B \cup A).
    \]
    Let $U = B \cup \{i\} \cup A$ and $V = B' \cup A$. Then, the above inequality holds from the submodularity of function $f$ and the observation that $U \cap V = B \cup A$, while $U \cup B = B' \cup \{i\} A$.

    Finally, we show $f(B \mid A) \geq f(B \mid A')$ holds when $A \subseteq A'$. Note that this is equivalent to proving 
    \[
    f(B \cup A) + f(A') ~\geq~ f(B \cup A') + f(A).
    \]
    Note that $B \cup A \cup A' = B \cup A'$, and $A \subseteq (B \cup A) \cap A'$. Then, the submodularity and monotonicity of $f$ gives
    \[
    f(B \cup A) + f(A') ~\geq~ f(B \cup A') + f((B \cup A) \cap A') ~\geq~ f(B \cup A') + f(A). \qedhere
    \]
\end{proof}

\additive*

\begin{proof}
    We have
    \begin{align*}
        \E_{S \sim D^{(A)}}[f(S)] ~&=~ \sum_{i \in [m]} \E_{S \sim D^{(A)}} [f(S \cap [i]) - f(S \cap [i-1])] \\
        ~&\geq~ \sum_{i \in [m]} \E_{S \sim D^{(A)}} [f(A \cap [i]) - f(A \cap [i-1])] \\
        ~&=~ \sum_{i \in A} \big(f(A \cap [i]) - f(A \cap [i-1]) \big) \cdot \pr_{ S \sim D^{(A)}}[i \in S] \\
        ~&\geq~ p \cdot \sum_{i \in A} \big(f(A \cap [i]) - f(A \cap [i-1]) \big) ~=~ p \cdot f(A),
    \end{align*}
    where we use the submodularity of $f$ and the fact that $S \subseteq A$ for $S \sim D^{(A)}$ in the first inequality.
\end{proof}

\telescope*

\begin{proof}
    We have
    \begin{align*}
        f(A) ~=~ \sum_{i \in A} \big(f(A \cap [i]) - f(A \cap [i-1]) \big) ~\geq~ \sum_{i \in A} \big(f(B \cap [i]) - f(B \cap [i-1]) \big),
    \end{align*}
    where the inequality follows from the submodularity of $f$ and the assumption that $A \subseteq B$.
\end{proof}

\section{Omitted Proofs in \Cref{sec:submod}}
\label{sec:submod-appendix}

In this section, we provide the missing proofs of 
 \Cref{lma:property1} and \Cref{lma:free_prob}.

\property*

\freeprob*

As the proof of \Cref{lma:property1} relies on \Cref{lma:free_prob}, we first prove \Cref{lma:free_prob}:

\begin{proof}[Proof of \Cref{lma:free_prob}]
Recall that \cite{stoc/STW25} already provided a proof of \Cref{lma:free_prob} assuming the prophet inequality instance $\sipi$ is Bernoulli, and no fixed threshold algorithm, i.e., an algorithm that picks the first arriving value greater than some fixed parameter $\tau$, can achieve an expected reward of more than $(0.5 + \mu) \cdot \bm(\sipi)$. Therefore, to prove \Cref{lma:free_prob}, we provide a reduction from a non-Bernoulli instance to a Bernoulli instance. To be specific, let $\sipi_i$ be a non-Bernoulli instance that satisfies $\opt(\sipi) \leq (0.5 + \mu) \cdot \bm(\sipi)$. Let $\D^{(w)}_t(\sipi)$ be the $t$-th underlying distribution of $\sipi$. Assume $\D^{(w)}_t(\sipi)$ generates non-zero value $w_{t,k}$ with probability $p_{t,k}$. Without loss of generality, we assume that every $w_{t,k}$ is unique (which can be achieved by adding an arbitrarily small noise to each value).  Then, there exists a Bernoulli instance $\sipi'$ that satisfies the following properties:
\begin{itemize}
    \item Assume the underlying distributions of $\sipi'$ generates non-zero value $w'_{t,k}$ with probability $p'_{t,k}$. Then, the tuple set $\{(w_{t,k}, p_{t,k}\}$ is identical to the set $\{(w'_{t,k}, p'_{t,k})\}$. Note that this guarantees $\bm(\sipi) = \bm(\sipi')$, and the two statements in \Cref{lma:free_prob} are identical for $\sipi$ and $\sipi'$.
    \item No fixed threshold algorithm can achieves an expected value of at least $(0.5 + \mu) \cdot \bm(\sipi)$.
\end{itemize}

For instance $\sipi$, we call it to be a ``good'' instance, if no ``fixed subset'' algorithm can achieve an expected reward of more than $(0.5 + \mu) \cdot \bm(\sipi)$. Here, we define a ``fixed subset'' algorithm as follows: Fix a subset $S$ of $\{(w_{t,k}, p_{t,k}\}$. The algorithm only picks the value $w_t$ if  the arriving value $w_t$ with type $k$ satisfies $(w_{t,k}, p_{t,k}) \in S$. Then, it's sufficient to show that every good instance $\sipi$ has a corresponding Bernoulli instance $\sipi'$ that satisfies the conditions above. 

We prove via introduction. For instance $\sipi$, let $K_t$ be the number of different non-zero values that distribution $\D^{(w)}_t(\sipi)$ may generate. Define
\[
d(\sipi) ~:=~\sum_{t \in [T]} K_t - 1
\]
to be the ``degree of Bernoulli'' of instance $\sipi$, i.e. when $d(\sipi) = 0$, the instance is purely Bernoulli; when $d(\sipi)$ becomes large, the instance is becoming more and more non-Bernoulli.

The base case of the induction is $d = 0$. The argument trivially holds, as one can observe the class of ``fixed threshold'' algorithms is a subset of the class of ``fixed subset'' algorithms.

Now suppose the desired argument holds for every good instance $\sipi$ that satisfies $d(\sipi) \leq D$. Now consider a good instance $\sipi$ that satisfies $d(\sipi) = D + 1$. The proof idea is to construct a new instance $\widehat{\sipi}$ that does not change the tuple set $\{(w_{t,k}, p_{t,k})\}$, but with $d(\widehat{\sipi}) = D$. Then, the induction hypothesis guarantees that the desired Bernoulli instance $\sipi'$ for $\widehat{\sipi}$ also works for $\sipi$.

The instance $\widehat{\sipi}$ is constructed as follows: Let $w_{t^*,k^*}$ be the minimum positive value that satisfies $K_{t^*} \geq 2$, i.e., $w_{t^*,k^*}$ belongs to a non-Bernoulli distribution. Now consider to remove value $w_{t^*,k^*}$ from distribution $\D^{(w)}_{t^*}(\sipi)$, and add a new Bernoulli distribution in front of $t^*$ that generates $w_{t^*,k^*}$ with probability $p_{t^*,k^*}$ and $0$ otherwise. Note that this construction does not change the tuple set $\{(w_{t,k}, p_{t,k})\}$, but decreases the value of $K_{t^*}$ by $1$, i.e., we have $d(\widehat{\sipi}) = d(\sipi) - 1 = D$.

It remains to show that $\widehat{\sipi}$ is also a good instance. We prove via contradiction: Suppose running a fixed subset algorithm with set $S \subseteq \{(w_{t,k}, p_{t,k})\}$ on instance $\widehat{\sipi}$ achieves an expected reward strictly greater than $(0.5 + \mu) \cdot \bm(\sipi)$ (note that we have $\bm(\sipi) = \bm(\widehat{\sipi})$). Then, there must be $(w_{t^*, k^*}, p_{t^*, k^*}) \in S$, otherwise the algorithm performs identically on $\sipi$ and $\widehat{\sipi}$. For simplicity of the notation, we define $p_1 = p_{t^*, k^*}$, $w_1 = w_{t^*, k^*}$. Next, we define $p_2$ to be the probability that $\D^{(w)}_t(\sipi)$ generates a non-zero value that does not equal to $w_{t^*, k^*}$ and should be taken by the fixed subset algorithm, and $w_2$ to be the expectation of values generated from $\D^{(w)}_t(\sipi)$ conditioning on that the value does not equal to $w_{t^*, k^*}$ and should be taken by the fixed subset algorithm. Finally, we define $p_0$ to be the probability that the fixed subset algorithm does not take any value from $[1, t-1]$, and let $w_3$ be the expected outcome from agents $[t+1, T]$.  We note that $p_0$ and $w_3$ are both identical for $\sipi$ and $\widehat{\sipi}$. Then, if we run the fixed subset algorithm on $\sipi$, the expected outcome starting from $t$ is
\[
p_0 \cdot \left(p_1 \cdot w_1 + p_2 \cdot w_2 + (1 - p_1 - p_2) \cdot w_3\right).
\]
On the other hand, if we run the fixed subset algorithm on $\widehat{\sipi}$, the expected outcome starting from the newly inserted Bernoulli distribution is
\[
p_0 \cdot \left(p_1 \cdot w_1 + (1 - p_1) \cdot p_2 \cdot w_2 + (1 - p_1)(1-p_2) \cdot w_3\right).
\]
Since the fixed subset algorithm gains at most $(0.5 + \mu) \bm(\sipi)$ on instance $\sipi$, and more than $(0.5 + \mu) \bm(\sipi)$ on instance $\widehat{\sipi}$, the first quantity should be strictly smaller than the second quantity, which gives
\[
p_1 \cdot w_1 + p_2 \cdot w_2 + (1 - p_1 - p_2) \cdot w_3 < p_1 \cdot w_1 + (1 - p_1) \cdot p_2 \cdot w_2 + (1 - p_1)(1-p_2) \cdot w_3.
\]
Rearrange the above inequality, it implies $w_2 < w_3$.

Now, consider to run a fixed subset algorithm on $\sipi$ with $S' = S \setminus \{(w_{t,k}, p_{t,k}):t = t^*\}$. Then, the expected outcome starting from value $t$ is
\begin{align*}
    p_0 \cdot w_3 ~&\geq~ p_0 \cdot \left(w_3 - p_1(w_3 - w_1) - p_2(w_3 - w_2)\right) \\
    ~&=~ p_0 \cdot \left(p_1 \cdot w_1 + p_2 \cdot w_2 + (1 - p_1 - p_2) \cdot w_3\right),
\end{align*}
where we use the fact that $w_3 > w_2 \geq w_1$ in the first inequality (recall that we assume $w_1$ is the minimum non-zero value in the support of $\D^{(w)}_{t^*}(\sipi)$). Since we assume running a fixed subset algorithm with $S$ on $\widehat{\sipi}$ achieves an expected reward of more than $(0.5 + \mu) \cdot \bm(\sipi)$, the above inequality implies that running a fixed subset algorithm with $S'$ on $\sipi$ should achieve strictly greater than $(0.5 + \mu) \cdot \bm(\sipi)$, which is in contrast to the assumption that $\sipi$ is good. Therefore, $\widehat{\sipi}$ is also good.
\end{proof}

Now, we prove \Cref{lma:property1}:

\begin{proof}[Proof of \Cref{lma:property1}]
    Let $\bssipi_i$ be the revised single-item prophet inequality instance for item $i$ in \Cref{alg:submod-half}. Recall that $\bm(\bssipi_i) = \lp_i(x^*)$. Then, we define the desired subset $U \subseteq [m]$ for \Cref{lma:property1} as follows:
    \[
    U ~:=~ \{i \in [m]: \opt(\bssipi_i) ~\leq~ (0.5 + \epsilon^{3/4}/4) \cdot \lp_i(x^*)\}.
    \]
    We first show that the total weight in $[m] \setminus U$ can't be too much via contradiction. Suppose we have $\sum_{i \in [m] \setminus U} \lp_i(x^*) > 4\epsilon^{1/4} \cdot \lp(x^*)$. This implies 
    \begin{align*} 
       \sum_{i \in [m]} \opt(\bssipi_i) ~&=~ \sum_{i \in U} \opt(\bssipi_i) + \sum_{i \in [m] \setminus U} \opt(\bssipi_i) \\
        ~&\geq~ \sum_{i \in U} 0.5 \cdot \lp_i(x^*) + \sum_{i \in [m] \setminus U} (0.5 + \frac{1}{4} \cdot \epsilon^{3/4}) \cdot \lp_i(x^*) \\
        ~&=~ 0.5 \lp(x^*) + \frac{1}{4} \cdot \epsilon^{3/4} \cdot \sum_{i \in [m] \setminus U} \lp_i(x^*) \\
        ~&>~ (0.5 + \frac{1}{4} \cdot \epsilon^{3/4} \cdot 4\epsilon^{1/4}) \lp(x^*) ~=~ (0.5 + \epsilon) \cdot \lp(x^*),
    \end{align*}
    where the first inequality follows from \Cref{lma:reward-submod-to-si}, and the last inequality uses the above assumption. Since the above inequality is in contrast to the assumption we have in \Cref{lma:property1}. Therefore, there must be 
    \[
    \sum_{i \in [m] \setminus U} \lp_i(x^*) \leq 4\epsilon^{1/4} \cdot \lp(x^*).
    \]
    Next, we apply \Cref{lma:free_prob} with $\mu = \epsilon^{3/4}/4$ and $\beta = 0.5 + 2\epsilon^{1/4}$, and let  $\delta = \sqrt{4\mu/(\beta - 0.5 - \mu)}$. Then, for every $i \in U$, the first statement in \Cref{lma:free_prob} gives
    \begin{align*}
        \sum_{t \in [T]} \sum_{k, S} x^*_{t, k, S} \cdot \one[(t, k, S, i) \in \fr] ~=~ \sum_{t \in [T]} \sum_{k, S} x^*_{t, k, S} \cdot \one[w_{t, k, S}(i) > 2\epsilon^{1/4} \cdot \lp_i(x^*)] ~\leq~ \delta ~\leq~ \epsilon^{1/4},
    \end{align*}
    where the last inequality follows from the definition of $\delta$. Note that for every $i \notin U$, $\sum_{t, k, S} x^*_{t, k, S} \cdot \one[(t, k, S, i) \in \fr]$ is always $0$. Therefore, $\sum_{t, k, S} x^*_{t, k, S} \cdot \one[(t, k, S, i) \in \fr] \leq \epsilon^{1/4}$ holds for every $i \in [m]$.

    Now, we fix $i \in U$. The second statement in \Cref{lma:free_prob} gives
    \begin{align*}
        \sum_{t, k, S} x^*_{t, k, S} \cdot w_{t, k, S}(i) \cdot \one[(t, k, S, i) \in \fr] ~&\leq~ \frac{0.5 + \mu}{1 - \delta} \cdot \lp_i(x^*) \\
         ~&\leq~ (0.5 + \mu + \delta) \cdot \lp_i(x^*) ~\leq~ (0.5 + 2\epsilon^{1/4}) \cdot \lp_i(x^*),
    \end{align*}
    where the second inequality holds when $0.5 \geq \mu + \delta$, which is true when $\epsilon < 10^{-4}$, and the last inequality holds when $\epsilon < 0.1$.

    It remains to show $(0.5 - 2\epsilon^{1/4}) \cdot \lp_i(x^*) \leq \sum_{t, k, S} x^*_{t, k, S} \cdot w_{t, k, S}(i) \cdot \one[(t, k, S, i) \in \fr]$. Note that when $i \in U$, we have
    \begin{align*}
        \lp_i(x^*) ~&=~ \sum_{t, k, S} x^*_{t, k, S} \cdot w_{t, k, S}(i) \cdot (\one[(t, k, S, i) \in \dt] + \one[(t, k, S, i) \in \fr]) \\
        ~&\leq~ \sum_{t, k, S} x^*_{t, k, S} \cdot (0.5 + 2\epsilon^{1/4}) \cdot \lp_i(x^*) + \sum_{t, k, S} x^*_{t, k, S} \cdot w_{t, k, S}(i) \cdot \one[(t, k, S, i) \in \fr] \\
        ~&=~ (0.5 + 2\epsilon^{1/4}) \cdot \lp_i(x^*) + \sum_{t, k, S} x^*_{t, k, S} \cdot w_{t, k, S}(i) \cdot \one[(t, k, S, i) \in \fr],
    \end{align*}
    where the inequality follows from the fact that when $i \in U$, there must be $w_{t, k, S}(i) \leq (0.5 + 2\epsilon^{1/4}) \cdot \lp_i(x^*)$ for every $(t,k, S, i) \in \dt$. Rearranging the above inequality finishes the proof.
\end{proof}

\end{document}